\documentclass{fundam}
\usepackage{amsmath}
\usepackage[T1]{fontenc}
\usepackage{graphicx}
\usepackage{tabularray}
\usepackage{amssymb}
\usepackage{url} 
\usepackage{multicol}
\usepackage{enumitem}
\usepackage{tasks}
\usepackage{marginnote}
\usepackage{comment}

\usepackage{pgf,tikz}

\usetikzlibrary{calc,positioning,shapes,arrows.meta,fit}
\usetikzlibrary{decorations.markings}

\tikzset{%
 shaded/.style={draw, shape=circle, fill=black!35, inner sep=1.4pt},
 unshaded/.style={draw, shape=circle, fill=white, inner sep=1.4pt},
 quasi/.style={draw, shape=rectangle, rounded corners=3pt, fill=white, inner sep=2.5pt, minimum height=14.5pt},
 blob/.style={draw, shape=rectangle, rounded corners=12pt, thin, densely dotted},
 arrow/.style={->, thin, >=latex, shorten >=2.5pt, shorten <=2.5pt},
 order/.style={thin},
 curvy/.style={thin, looseness=1.2, bend angle=70},
 fatcurvy/.style={thin, looseness=1.7, bend angle=75},
 label/.style={shape=rectangle, inner sep=6pt},
 auto}
\newcommand{\A}{\mathbf{A}}
\newcommand{\B}{\mathbf{B}}
\newcommand{\C}{\mathbf{C}}
\newcommand{\D}{\mathbf{D}}

\newcommand{\JA}{A_+}
\newcommand{\JB}{B_+}
\newcommand{\cmax}{\mathbin{\circ^{\max}}}

\title{Wreath products of distributive quasi relation algebra frames and their application to representability}
\runninghead{W. Morton, C. Robinson}{Wreath products of DqRA frames and representability}

\author{Wilmari Morton  \\
Mathematics Education, School of Education\\
 University of the Witwatersrand, South Africa \\ wilmari.morton@wits.ac.za \\
\and  Claudette Robinson \\
Department of Mathematics and Applied Mathematics\\
University of Johannesburg, South Africa\\
claudetter@uj.ac.za\thanks{Funding from the National Research Foundation grant KIC250820357419 is gratefully acknowledged.}
}

\begin{document}

\maketitle

\begin{abstract}
We adapt a construction by Comer (1983) for polygroups to define
the wreath product 
of the dual frames of two distributive quasi relation
algebras satisfying certain conditions and show that it is itself a frame. Frames play a role analogous to that of atom structures for relation algebras. After introducing the notion of a
frame network, we establish a correspondence between frame networks satisfying certain conditions and representability of finite
distributive quasi relation algebras. Next we describe a frame network for the wreath
product of two finite representable distributive quasi relation
algebras that satisfies the aforementioned conditions, as well as additional conditions, and
use this to conclude that the complex algebra of the wreath product is a representable distributive quasi relation algebra. Thus, our construction provides a method for obtaining representable DqRAs from certain finite DqRAs that are already known to be representable.

\keywords{distributive quasi relation algebra, distributive quasi relation algebra frame, network for a frame, representability, wreath product.} 
\end{abstract}

\section{Introduction}\label{sec:introduction}

In~\cite{GJ13}, Galatos and Jipsen introduced quasi relation algebras (qRAs) as a generalisation of relation algebras~\cite{Tar41}. Unlike relation algebras, qRAs have a decidable equational theory. They can also be viewed as expansions of involutive Full Lambek (InFL-)algebras~\cite[Chapter 2]{GJKO} by an additional unary operation. A distributive quasi relation algebra (DqRA) is a qRA whose underlying lattice is distributive. The class of DqRAs includes relation algebras, which are based on Boolean algebras, as well as Sugihara monoids and MV-algebras, two classes that provide algebraic semantics for non-classical logics. Relation algebras have applications in databases ~\cite{Codd1970,VDB2001} and in temporal and spatial reasoning~\cite{Allen1983,Duntsch2005}. These applications motivate investigating whether similar methods can be developed for generalisations of relation algebras.

A central question in the study of DqRAs is whether a given abstract DqRA is representable and, if so, how to construct a representation.
Mimicking the well-studied concept of representable relation 
algebras, the notion of \emph{representable DqRAs} was introduced
in~\cite{CR-RDqRA} using posets satisfying certain symmetry requirements. 
The question of representability has a long history in the study of relation algebras: J\'{o}nsson and Tarski~\cite{JT48} asked whether every relation algebra is isomorphic to an algebra of binary relations.
This question was answered in the negative in~\cite{Lyn50}. In general, it is very hard to determine whether a given relation algebra or DqRA is representable or not. One approach to the representability problem is to construct new algebras from known representable ones and establish that the construction preserves representability. This approach has also been pursued for DqRAs using several constructions~\cite{CMR2026, CMR2026_2}, and it motivates the present study.


In this paper, we adapt the wreath product construction introduced by Comer~\cite{Com83} for polygroups to the setting of DqRAs. In Comer's construction polygroups serve a similar purpose to that of atom structures in the study of relation algebras. In the DqRA setting, the corresponding structures are DqRA frames, introduced in~\cite{CJR24, CJR26}. These are partially ordered structures equipped with additional operations and relations. They further established a duality between these frames and complete perfect DqRAs.
One of the main results of this paper shows that, under suitable conditions, the wreath product of two DqRA frames is itself a DqRA frame, thereby providing a framework for constructing larger DqRAs from smaller DqRAs.

Another approach to studying representability uses two-player representation games. Hirsch and Hodkinson developed these techniques for relation algebras across a series of papers throughout the 1990s, bringing this work together in their 2002 book~\cite{HH2002book}. Similar techniques have been applied to weakening relation algebras in~\cite{JS23}. Motivated by these approaches, we introduce the notion of a \emph{frame network}, which provides a structure suitable for use in representation games. However, we do not develop such games for DqRAs here. Instead, we use frame networks to play a role analogous to that of Comer's colour schemes in~\cite{Com83}.

The paper is organised as follows. In Section~\ref{sec:preliminaries}, we present the necessary preliminaries on DqRAs, representable DqRAs, and DqRA frames. In Section~\ref{sec:wreath_products}, we define the wreath product of the dual frames of two complete perfect DqRAs and show that, under suitable conditions, the resulting structure is again a DqRA frame. In Section~\ref{sec:networks_representations}, we introduce frame networks and establish two results connecting the existence of consistent saturated networks with representability. Proposition~\ref{prop:consistent_sat_network_imply_representable} shows that the complex algebra of a finite frame is representable if, for every frame element, there exists a consistent saturated frame network in which that element labels an edge. Conversely, Proposition~\ref{prop:SatConNetworkExists} shows that, for every join-irreducible element of a finite representable DqRA, there exists a consistent saturated network for its dual frame in which that join-irreducible labels an edge. In Section~\ref{sec:NetworkForWreathPRoduct}, we construct a frame network for the wreath product defined in Section~\ref{sec:wreath_products} and prove in Proposition~\ref{prop:existence_network_wreath_product} that it is consistent and saturated under suitable conditions. Finally, we use Proposition~\ref{prop:existence_network_wreath_product}
to conclude in Theorem~\ref{thm:ComplexOfWreathProductRepresentable} that the complex algebra of the wreath product of the
dual frames of two finite representable DqRAs satisfying certain conditions is again representable. To illustrate how the results developed in this paper can be used to find representable DqRAs, we construct representations for two $6$-element DqRAs.

\section{Preliminaries}\label{sec:preliminaries}

\subsection{Distributive quasi relation algebras}\label{subsec:DqRAs}

We begin by recalling the definitions of the algebras considered in this paper. These algebras may be regarded either as generalisations of relation algebras or as expansions of residuated lattices~\cite{GJKO}.

Recall that a \emph{residuated lattice} is an algebra $\A = \langle A, \wedge, \vee, \cdot, \backslash, /, 1\rangle$ such that $\langle A,\wedge,\vee\rangle$ is a lattice, 
$\langle A,\cdot,1\rangle$ is a monoid, and for all $a,b,c\in A$,
\[a\cdot b\leqslant c\quad \textnormal{ iff } \quad 
a\leqslant c/b\quad \textnormal{ iff } \quad
b\leqslant a\backslash c.\]

We will often write $ab$ for $a\cdot b$ whenever no confusion can arise. An element $a$ in a residuated lattice $\A$ is called \emph{idempotent} if $a^2 = a$, and a residuated lattice $\mathbf{A}$ is said to be \emph{idempotent} if all its elements are idempotent. 
We say that a residuated lattice $\mathbf A$  is 
\emph{commutative} if $ab = ba$ for all $a, b \in A$ and \emph{distributive} if its underlying lattice is distributive.

A \emph{Full Lambek algebra} or
an \emph{FL-algebra} $\A = \langle A, \wedge, \vee, \cdot, \backslash, /, 1, 0\rangle$ is a residuated lattice expanded with an additional constant $0$ (cf.~\cite[Chapter~2.2]{GJKO}). This constant $0$ can be any element of $A$ and no additional properties are assumed about it. Residuated lattices are exactly the $0$-free reducts of FL-algebras. 

The residuals and $0$ are used to define two unary operations (known as \emph{linear negations}) as follows: 
${\sim} a=a\backslash 0$ and ${-}a=0/a$.
It follows that ${\sim}1 = 1\backslash 0 = 0= 0/1=-1$.
By residuation, 
${\sim} (a\vee b)={\sim}a \wedge {\sim}b$ and ${-}(a\vee b)=
{-a}\wedge{-}b$ for all $a,b\in A$.
An FL-algebra $\mathbf{A}$ is said to be \emph{cyclic} if $-a = {\sim} a$ for all $a \in A$. It is easy to check that if an FL-algebra $\mathbf{A}$ is commutative, then it is cyclic.  

If an FL-algebra $\mathbf{A}$ satisfies the condition 
\[\mathsf{(In)}:\quad   {\sim}{-}a=a={-}{\sim}a\]
for all $a\in A$, then it is called an {\em involutive Full Lambek {\normalfont(}InFL-{\normalfont)}algebra}.
These algebras satisfy 
$a\leqslant b \textnormal{ iff } a\,\cdot({\sim} b)\leqslant -1 \textnormal{ iff } (-b)\cdot a\leqslant-1$
for all $a,b\in A$. Further, since both $-$ and $\sim$ are order-reversing, 
it follows that 
$-$ and $\sim$ are dual lattice isomorphisms in an InFL-algebra. If the underlying lattice of an InFL-algebra $\A$ is distributive, then we refer to $\A$ as a \emph{distributive InFL-algebra} (or \emph{DInFL-algebra}). 

Galatos and Jipsen~\cite[Lemma 2.2]{GJ13} established that 
any InFL-algebra is term-equivalent to an algebra $\mathbf{A}=\langle A, \wedge,\vee,\cdot, {\sim}, -, 1\rangle$ such that $\langle A, \wedge,\vee\rangle$ is a lattice, $\langle A, \cdot, 1\rangle$ is a monoid, and for all $a, b, c \in A$, we have
\begin{equation}
a\cdot b\leqslant c\quad \textnormal{ iff } \quad 
a\leqslant {-}\left(b\cdot {\sim}c\right)\quad \textnormal{ iff } \quad 
b\leqslant {\sim}\left(-c\cdot a\right).\label{eq:InFL}
\end{equation}
If needed, the residuals can be expressed in terms of $\cdot$ and the linear negations as follows: $ c/b = -\left(b\cdot {\sim}c\right)$ and 
$a\backslash c = {\sim}\left(-c\cdot a\right)$. Note also that $0 = -1 = {\sim}1$.  In the sequel we use the signature 
$\mathbf{A}=\langle A, \wedge,\vee,\cdot, {\sim}, -, 1\rangle$ to denote an InFL-algebra.

An InFL-algebra $\mathbf{A}$ can be expanded with an additional
unary operation ${\neg}:A\to A$ 
to form an {\em InFL$'$-algebra} 
$\mathbf{A}=\langle A, \wedge,\vee,\cdot, \sim, -, {\neg}, 1\rangle$
such that $\neg\neg a=a$ for all $a\in A$. 
The involution is sometimes denoted by $'$, but for the sake of readability 
we will use $\neg$.

A DmInFL$'$-algebra $\mathbf{A}$ is an InFL$'$-algebra that satisfies, for all $a,b\in A$, the {\em De Morgan law}
\[
\mathsf{(Dm)}:\quad \neg(a\vee b)=\neg a\wedge\neg b.
\]
If a DmInFL$'$-algebra $\A$ satisfies the following identity,
called the \emph{De Morgan product}, 
\[
\mathsf{(Dp)}:\quad
\neg(a\cdot b)={\sim}(-\neg b\cdot -\neg a),
\] 
for all $a, b \in A$, then it is called a \emph{quasi relation algebra} (qRA).
If $\mathbf{A}$ is a qRA, then it can be shown that $\neg 1 = - 1= {\sim}1 = 0$.
In \cite{GJ13}, the definition of a qRA included a third identity, $\mathsf{(Di)}$:  $\neg{\sim} a={-}\neg a$, 
but we note here that it is implied by $\mathsf{(Dp)}$ and (\ref{eq:InFL}) from above. 

Finally, a {\em distributive quasi relation
algebra} (or \emph{DqRA}) is a qRA  $\mathbf{A}=\langle A,\wedge,\vee,\cdot, {\sim}, -, {\neg}, 1\rangle$
such that the underlying lattice $\langle A,\wedge,\vee\rangle$
is distributive. 

Figure~\ref{fig:small_examples_DqRAs} contains some examples of DqRAs that will be considered later. A complete list of finite DInFL-algebras and DqRAs of size at most $8$ can be found in~\cite{CJR-DqRAs}. Note that the
two chains $\mathbf S_3$ and $\mathbf S_4$ are examples of finite Sugihara chains, an important class of examples of DqRAs.  
The reader is referred to~\cite{CR-Sugihara} for more details on Sugihara chains.

\begin{figure}[ht]
\begin{center}
\begin{tikzpicture}[scale=0.7]
\begin{scope}[xshift=-9cm, yshift=1cm]
\node[draw,circle,inner sep=2pt,fill] (a) at (0,-1) {};
\node[draw,circle,inner sep=2pt,fill] (b) at (0,0) {};
\node[draw,circle,inner sep=2pt,fill] (c) at (0,1) {};
\draw[order] (a)--(b)--(c);
\node[label,anchor=west] at (a) {\footnotesize$\bot_\A$\normalsize};
\node[label,anchor=west] at (b) {\footnotesize$1_\A=0_\A$\normalsize};
\node[label,anchor=west] at (c) {\footnotesize$\top_\A$\normalsize};
\node[label,anchor=north,yshift=-15pt] at (a) {\footnotesize$\A=\mathbf{S}_3$\normalsize};
\end{scope}

\begin{scope}[xshift=-2.5cm]
\node[draw,circle,inner sep=2pt,fill] (bot) at (1,0) {};
\node[draw,circle,inner sep=2pt,fill] (top) at (1,2) {};
\node[draw,circle,inner sep=2pt,fill] (1) at (0,1) {};
\node[draw,circle,inner sep=2pt] (0) at (2,1) {};
\draw[order] (bot)--(1);
\draw[order] (bot)--(0);
\draw[order] (1)--(top);
\draw[order] (0)--(top);
\node[label,anchor=south] at (top) {\footnotesize$\top_\B = \top_\B 0_\B$\normalsize};
\node[label,anchor=north,xshift=1pt] at (bot) {\footnotesize$\bot_\B$\normalsize};
\node[label,anchor=east,xshift=1pt] at (1) {\footnotesize$1_\B = 0_\B^2$\normalsize};
\node[label,anchor=west,xshift=-1pt] at (0) {\footnotesize$0_\B$\normalsize};
\node[label,anchor=north,yshift=-15pt] at (bot) {\footnotesize$
\B=D^4_{2,2}$\normalsize};
\end{scope} 

\begin{scope}[xshift=6cm]
\node[draw,circle,inner sep=2pt,fill] (bot) at (1,0) {};
\node[draw,circle,inner sep=2pt,fill] (top) at (1,2) {};
\node[draw,circle,inner sep=2pt,fill] (1) at (0,1) {};
\node[draw,circle,inner sep=2pt] (a) at (2,1) {};
\draw[order] (bot)--(1);
\draw[order] (bot)--(a);
\draw[order] (1)--(top);
\draw[order] (a)--(top);
\node[label,anchor=south] at (top) {\footnotesize$\top_\C = \top_\C a$\normalsize};
\node[label,anchor=north,xshift=1pt] at (bot) {\footnotesize$\bot_\C$\normalsize};
\node[label,anchor=east,xshift=1pt] at (1) {\footnotesize$1_\C = 0_\C = a^2$\normalsize};
\node[label,anchor=west,xshift=-1pt] at (a) {\footnotesize$a$\normalsize};
\node[label,anchor=north,yshift=-15pt] at (bot) {\footnotesize$\C = D^4_{3,2}$\normalsize};
\end{scope} 

\begin{scope}[xshift=-9cm, yshift=-4cm]
\node[draw,circle,inner sep=2pt,fill] (d) at (0,-2) {};
\node[draw,circle,inner sep=2pt,fill] (e) at (0,-1) {};
\node[draw,circle,inner sep=2pt,fill] (f) at (0,0) {};
\node[draw,circle,inner sep=2pt,fill] (g) at (0,1) {};
\draw[order] (d)--(e)--(f)--(g);
\node[label,anchor=west] at (d) {\footnotesize$\bot_\D$\normalsize};
\node[label,anchor=west] at (e) {\footnotesize$0_\D$\normalsize};
\node[label,anchor=west] at (f) {\footnotesize$1_\D$\normalsize};
\node[label,anchor=west] at (g) {\footnotesize$\top_\D=\top_\D 0_\D$\normalsize};
\node[label,anchor=north,yshift=-15pt] at (d) {\footnotesize$\mathbf{D}=\mathbf{S}_4$\normalsize};
\end{scope}


\begin{scope}[xshift=-1.3cm,yshift=-6cm]
\node[draw,circle,inner sep=2pt,fill] (bot) at (-1,0) {};
\node[draw,circle,inner sep=2pt] (a) at (-2,1) {};
\node[draw,circle,inner sep=2pt,fill] (1) at (0,1) {};
\node[draw,circle,inner sep=2pt] (0) at (-1,2) {};
\node[draw,circle,inner sep=2pt,fill] (b) at (1,2) {};
\node[draw,circle,inner sep=2pt,fill] (top) at (0,3) {};
\path [-] (bot) edge node {} (a);
\path [-] (bot) edge node {} (1);
\path [-] (a) edge node {} (0);
\path [-] (1) edge node {} (0);
\path [-] (1) edge node {} (b);
\path [-] (0) edge node {} (top);
\path [-] (b) edge node {} (top);   
\node[label,anchor=south,xshift=1pt] at (top) {\footnotesize$\top=0b=0a$\normalsize};
\node[label,anchor=east,xshift=1pt] at (0) {\footnotesize$0$\normalsize};
\node[label,anchor=west,xshift=-1pt] at (b) {\footnotesize$a =b^2$\normalsize};
\node[label,anchor=east,xshift=1pt] at (a) {\footnotesize$b=ab$\normalsize};
\node[label,anchor=west,xshift=1pt] at (1) {\footnotesize$1$\normalsize};
\node[label,anchor=north,xshift=1pt] at (bot) {\footnotesize$\bot$\normalsize};
\node[label,anchor=north,yshift=-15pt] at (bot) {\footnotesize$
 D^6_{4,6}$\normalsize};
\end{scope} 

\begin{scope}[xshift=6cm, yshift=-7cm]
\node[draw,circle,inner sep=2pt,fill] (a1) at (1,0) {};
\node[draw,circle,inner sep=2pt,fill] (a2) at (1,4) {};
\node[draw,circle,inner sep=2pt,fill] (b0) at (1,1) {};
\node[draw,circle,inner sep=2pt,fill] (b1) at (1,3) {};
\node[draw,circle,inner sep=2pt,fill] (b2) at (0,2) {};
\node[draw,circle,inner sep=2pt] (b3) at (2,2) {};
\draw[order] (b0)--(b2);
\draw[order] (b0)--(b3);
\draw[order] (b2)--(b1);
\draw[order] (b3)--(b1);
\draw[order] (a1)--(b0);
\draw[order] (b1)--(a2);
\node[label,anchor=east,xshift=1pt,yshift=2pt] at (b1) {\footnotesize$a=ab$\normalsize};
\node[label,anchor=east,yshift=-1pt] at (b0) {\footnotesize$c=bc=ac$\normalsize};
\node[label,anchor=east,xshift=1pt] at (b2) {\footnotesize$1=0=b^2$\normalsize};
\node[label,anchor=west,xshift=-1pt] at (b3) {\footnotesize$b$\normalsize};
\node[label,anchor=north,xshift=1pt] at (a1) {\footnotesize$\bot$\normalsize};
\node[label,anchor=south,xshift=1pt] at (a2) {\footnotesize$\top=\top c$\normalsize};
\node[label,anchor=north,yshift=-15pt] at (a1) {\footnotesize$
D^6_{3,3}$\normalsize};
\end{scope}

\end{tikzpicture}
\end{center}
\caption{Examples of DqRAs. The names of the algebras come from~\cite{CJR-DqRAs}. Black nodes denote
idempotents. All algebras are commutative. The products given for each algebra are
sufficient to recover the monoid operation.}\label{fig:small_examples_DqRAs}
\end{figure}

\subsection{Representable distributive quasi relation algebras}\label{subsec:representableDqRAs}

{\em Representable DqRAs} were  first defined in~\cite{CR-RDqRA}. While representable relation algebras are
embeddable in relation algebras built from equivalence relations (or sets), 
concrete  DqRAs are built from partially ordered equivalence relations (or posets).
The
lattice structure in the construction is the same as that used in~\cite{GJ20-AU, GJ20-ramics, JS23}.
Below we recall the construction, the definition of representable DqRAs, and some key results that will be needed in later sections.

For a set $X$, an equivalence relation $E$ on $X$, 
and binary relations $R,S\subseteq X^2$, let $R^\smallsmile$ denote the converse of $R$, i.e., $R^\smallsmile=\{(x,y)\mid (y,x)\in R\}$, $R^c$ the complement of $R$ relative to $E$, i.e., $R^c=\{(x,y)\in E\mid (x,y)\notin R\}$,
and $R\mathbin{;}S$ the relational composition of $R$ and $S$, i.e.,
$R\mathbin{;}S=\{(x,z)\mid(\exists y\in X)((x,y)\in R\text{ and }(y,z)\in S)\}$. Further, denote the identity relation on $X$ by $\text{id}_X$, i.e., $\text{id}_X =\{(x,x)\mid x\in X\}$.
Then, for $R,S,T\subseteq E$ we have $(R^\smallsmile)^\smallsmile=R$, $\;(R^c)^\smallsmile=(R^\smallsmile)^c$, $\;\text{id}_X\mathbin{;}R=R\mathbin{;}\text{id}_X=R$,
$\;(R\mathbin{;}S)\mathbin{;}T=R\mathbin{;}(S\mathbin{;}T)$ and $(R\mathbin{;}S)^\smallsmile=S^\smallsmile\mathbin{;}R^\smallsmile$.

If $\gamma: X \to X$ is a function, then $\gamma$ will also be used to denote its graph, i.e., the binary relation $\{(x,\gamma(x)) \mid x\in X\}$. 
%
Note that the lemma below applies when $\gamma$ is a bijective function $\gamma : X \to X$. 

\begin{lemma}
{\cite[Lemma 3.4]{CR-RDqRA}}
\label{lem:important_eq_injective_map}
Let $E$ be an equivalence relation on a set $X$, and let $R, \gamma \subseteq E$. If 
$\gamma$
satisfies 
$\gamma^{\smallsmile}\mathbin{;} \gamma = \text{id}_X$ and $\gamma \mathbin{;} \gamma^{\smallsmile}=\text{id}_X$
then the following hold:
\begin{multicols}{2}
\begin{enumerate}[label=(\roman*)]
\item $\left(\gamma\mathbin{;} R\right)^c = \gamma \mathbin{;} R^c$
\item $\left(R\mathbin{;} \gamma\right)^c = R^c \mathbin{;} \gamma$.
\end{enumerate}	
\end{multicols}
\end{lemma}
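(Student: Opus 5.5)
We argue by a direct element chase, using only the two hypotheses on $\gamma$: that it is a bijection of $X$ onto itself, as the remark preceding the lemma indicates, and that $\gamma\subseteq E$. First we unpack what $\gamma^{\smallsmile}\mathbin{;}\gamma=\text{id}_X$ and $\gamma\mathbin{;}\gamma^{\smallsmile}=\text{id}_X$ say pointwise. From $\gamma\mathbin{;}\gamma^{\smallsmile}=\text{id}_X$, every $x$ has some $y$ with $(x,y)\in\gamma$, and any two such $y$ coincide by $\gamma^{\smallsmile}\mathbin{;}\gamma=\text{id}_X$. Symmetrically, every $y$ has a unique $x$ with $(x,y)\in\gamma$. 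So $\gamma$ is the graph of a bijection $X\to X$, and we write $\gamma(x)$ for the unique partner of $x$. Consequently, for any relation $S$,
\[
(x,z)\in\gamma\mathbin{;}S \iff (\gamma(x),z)\in S,
\qquad
(x,z)\in S\mathbin{;}\gamma \iff (x,\gamma^{-1}(z))\in S .
\]

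For (i), fix $(x,z)$ and check both inclusions. Suppose first that $(x,z)\in(\gamma\mathbin{;}R)^c$. Then $(x,z)\in E$ and $(\gamma(x),z)\notin R$. Since $(x,\gamma(x))\in\gamma\subseteq E$, symmetry and transitivity of $E$ give $(\gamma(x),z)\in E$. Hence $(\gamma(x),z)\in R^c$, that is, $(x,z)\in\gamma\mathbin{;}R^c$.

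Conversely, suppose $(x,z)\in\gamma\mathbin{;}R^c$. Then $(\gamma(x),z)\in E\setminus R$. Transitivity with $(x,\gamma(x))\in E$ gives $(x,z)\in E$. Moreover $(x,z)\notin\gamma\mathbin{;}R$, because the only possible witness for membership is $\gamma(x)$, and $(\gamma(x),z)\notin R$. Thus $(x,z)\in(\gamma\mathbin{;}R)^c$.

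Part (ii) is the mirror argument, with $\gamma^{-1}(z)$ as the unique intermediate point and $(\gamma^{-1}(z),z)\in E$. Alternatively, apply converse to (i), using $(R^c)^{\smallsmile}=(R^{\smallsmile})^c$, $(S\mathbin{;}T)^{\smallsmile}=T^{\smallsmile}\mathbin{;}S^{\smallsmile}$, and the fact that $\gamma^{\smallsmile}$ satisfies the same hypotheses as $\gamma$.

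The only delicate point is that complements are taken relative to $E$ rather than $X^2$. This is exactly where $\gamma\subseteq E$ is needed: it guarantees that moving along $\gamma$ preserves $E$-membership. The uniqueness of the intermediate point handles the rest of the argument.
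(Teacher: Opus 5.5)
The paper gives no proof of this lemma; it is quoted from \cite[Lemma 3.4]{CR-RDqRA}, so there is no in-text argument to compare yours against. Your element chase stands on its own as a correct and complete proof.

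\begin{itemize}
\item The two identities do make $\gamma$ the graph of a bijection: one identity gives existence of partners and the other gives uniqueness.
\item The unique-witness descriptions $(x,z)\in\gamma\mathbin{;}S \iff (\gamma(x),z)\in S$ and $(x,z)\in S\mathbin{;}\gamma \iff (x,\gamma^{-1}(z))\in S$ are right.
\item You correctly identify $\gamma\subseteq E$ as the one place where relative complementation bites. It is what lets $E$-membership pass between $(x,z)$ and $(\gamma(x),z)$. Without it the lemma fails: take $E=\text{id}_X$, $\gamma$ a nontrivial permutation of $X$ and $R=\varnothing$. Then $(\gamma\mathbin{;}R)^c=\text{id}_X$, whereas $\gamma\mathbin{;}R^c=\gamma$.
\end{itemize}

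Two minor remarks. First, your argument never uses $R\subseteq E$, so it actually proves a slightly more general statement. Second, in your alternative derivation of (ii) by taking converses, the claim that $\gamma^{\smallsmile}$ ``satisfies the same hypotheses'' includes $\gamma^{\smallsmile}\subseteq E$. This holds because $E$ is symmetric, and it is worth saying explicitly.
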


For a poset $\mathbf{X}=(X,\leqslant_X)$ and an equivalence
relation $E$ on $X$ such that ${\leqslant_X}\subseteq E$, define
a binary relation $\preccurlyeq_E$ on $E$ by $(x, y)\preccurlyeq_E (u, v)$ if, and only if, $u\leqslant_X x$ and $y\leqslant_X v$
for all $(x, y), (u, v) \in E$.  Then $\mathbf{E}=(E,\preccurlyeq_E)$ is a poset
which implies that $\mathsf{Up}(\mathbf{E})$, the set of up-sets 
of $\mathbf{E}$ ordered by inclusion, is a distributive lattice.
Observe that $\mathsf{Up}(\mathbf{E})$ is closed under
composition $\mathbin{;}$, ${\leqslant_X} \in \mathsf{Up}(\mathbf{E})$ and, 
for all $R \in \mathsf{Up}(\mathbf{E})$, we have $R \mathbin{;} {\leqslant_X} = R = {\leqslant_X} \mathbin{;} R$. Furthermore, composition $\mathbin{;}$ is residuated with residuals $\backslash$ and $/$ defined as follows: $R \backslash S = (R^{\smallsmile}\mathbin{;}S^c)^c$ and $R/S=(R^c\mathbin{;}S^{\smallsmile})^c$.  
Hence,
the algebra
$\left<\mathsf{Up}(\mathbf{E}),\cap,\cup,\mathbin{;},\backslash,/,\leqslant_X\right>$
is a distributive residuated lattice. 

Now let $\alpha:X\to X$ be an order automorphism of $\mathbf X$ such that $\alpha\subseteq E$. If we set $0 = \alpha \mathbin{;} {\leqslant_X^{c\smallsmile}} = {\leqslant_X^{c\smallsmile}}\mathbin{;} \alpha$, then $0  \in \mathsf{Up}(\mathbf{E})$ and $\left<\mathsf{Up}(\mathbf{E}),\cap,\cup,\mathbin{;},\backslash,/,\leqslant_X, 0\right>$ is an InFL-algebra. Moreover, 
${\sim} R = R^{c\smallsmile}\mathbin{;} \alpha$ and ${-}R=\alpha\mathbin{;} R^{c\smallsmile}$ (\cite[Lemma 3.10]{CR-RDqRA}).  

Let $\beta:X \to X$ be a self-inverse dual order automorphism of $\mathbf X$ such that $\beta \subseteq E$. Define 
$\neg R= \alpha\mathbin{;} \beta \mathbin{;} R^c \mathbin{;} \beta$.  Then $\mathsf{Up}(\mathbf{E})$
is closed under $\neg$ (see~\cite[Lemma 3.5]{CR-RDqRA}
for more details). If we require that $\beta = \alpha \mathbin{;}\beta \mathbin{;}\alpha$, we obtain the following result.

\begin{theorem}{\cite[Theorem 3.15]{CR-RDqRA}}\label{Theorem:Dq(E)}
Let $\mathbf{X}=\left(X,\leqslant_X\right)$ be a poset and $E$ an equivalence relation on $X$ such that ${\leqslant_X} \subseteq E$.  Let $\alpha: X \to X$ be an order automorphism of $\mathbf X$ and $\beta: X \to X$ a self-inverse dual order automorphism of $\mathbf X$ such that $\alpha, \beta \subseteq E$ and $\beta = \alpha \mathbin{;} \beta\mathbin{;} \alpha$. 
For 
$R \in \mathsf{Up}(\mathbf E)$, define
$\neg R =  \alpha\mathbin{;} \beta \mathbin{;} R^c \mathbin{;} \beta$.
Then the algebra $\mathbf{D}( E,\preccurlyeq_{E}) = \left\langle \mathsf{Up}\left(\mathbf{E}\right),\cap, \cup, \mathbin{;}, {\sim}, {-}, {\neg}, {\leqslant_X} \right\rangle$ 
is a distributive quasi relation algebra. If $\alpha$ is the identity, then $\mathbf{D}(E,\preccurlyeq_{E})$ is a cyclic distributive quasi relation algebra.	
\end{theorem}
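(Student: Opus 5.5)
The plan is to verify the DqRA axioms for $\mathsf{Up}(\mathbf E)$ one layer at a time, taking as given the facts recalled above. The distributive lattice structure, the monoid structure with unit ${\leqslant_X}$, and the residuation of $\mathbin{;}$ by $\backslash,/$ are already stated. The InFL-structure with $0=\alpha\mathbin{;}{\leqslant_X^{c\smallsmile}}$ and ${\sim}R=R^{c\smallsmile}\mathbin{;}\alpha$, ${-}R=\alpha\mathbin{;}R^{c\smallsmile}$ is \cite[Lemma 3.10]{CR-RDqRA}, and closure of $\mathsf{Up}(\mathbf E)$ under $\neg$ is \cite[Lemma 3.5]{CR-RDqRA}. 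So what remains is to show that $\neg$ is an involution, that it satisfies $\mathsf{(Dm)}$, and that it satisfies $\mathsf{(Dp)}$. Throughout we use that $\alpha$ and $\beta$ are bijections contained in $E$, so Lemma~\ref{lem:important_eq_injective_map} lets us move complements past $\alpha$, $\beta$, $\alpha^{\smallsmile}$. We also use $\beta^{\smallsmile}=\beta$, $\beta\mathbin{;}\beta=\text{id}_X$, and $\alpha^{\smallsmile}=\alpha^{-1}$.

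For the involution, compute
\[\neg\neg R=\alpha\mathbin{;}\beta\mathbin{;}(\alpha\mathbin{;}\beta\mathbin{;}R^c\mathbin{;}\beta)^c\mathbin{;}\beta=\alpha\mathbin{;}\beta\mathbin{;}\alpha\mathbin{;}\beta\mathbin{;}R\mathbin{;}\beta\mathbin{;}\beta,\]
using Lemma~\ref{lem:important_eq_injective_map} and $R^{cc}=R$ (valid since $R\subseteq E$). The trailing $\beta\mathbin{;}\beta$ cancels. The condition $\beta=\alpha\mathbin{;}\beta\mathbin{;}\alpha$ gives $\alpha\mathbin{;}\beta\mathbin{;}\alpha\mathbin{;}\beta=\beta\mathbin{;}\beta=\text{id}_X$, so $\neg\neg R=R$. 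For $\mathsf{(Dm)}$, complementation turns $\cup$ into $\cap$, and composition with a bijection on either side preserves $\cap$ because the bijection is injective. Hence $\neg(R\cup S)=\neg R\cap\neg S$.

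The main obstacle is $\mathsf{(Dp)}$: $\neg(R\mathbin{;}S)={\sim}({-}\neg S\mathbin{;}{-}\neg R)$. The approach is to first compute ${-}\neg R=\alpha\mathbin{;}(\alpha\mathbin{;}\beta\mathbin{;}R^c\mathbin{;}\beta)^{c\smallsmile}$. By Lemma~\ref{lem:important_eq_injective_map} the complement removes the $c$ on $R$, and taking the converse gives
\[{-}\neg R=\alpha\mathbin{;}\beta\mathbin{;}R^{\smallsmile}\mathbin{;}\beta\mathbin{;}\alpha^{\smallsmile}.\]
Then
\[{-}\neg S\mathbin{;}{-}\neg R=\alpha\mathbin{;}\beta\mathbin{;}S^{\smallsmile}\mathbin{;}\beta\mathbin{;}\alpha^{\smallsmile}\mathbin{;}\alpha\mathbin{;}\beta\mathbin{;}R^{\smallsmile}\mathbin{;}\beta\mathbin{;}\alpha^{\smallsmile}.\]
The middle factor $\beta\mathbin{;}\alpha^{\smallsmile}\mathbin{;}\alpha\mathbin{;}\beta$ collapses to $\text{id}_X$, leaving $\alpha\mathbin{;}\beta\mathbin{;}(R\mathbin{;}S)^{\smallsmile}\mathbin{;}\beta\mathbin{;}\alpha^{\smallsmile}$. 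Applying ${\sim}T=T^{c\smallsmile}\mathbin{;}\alpha$, and again pushing the complement inside with Lemma~\ref{lem:important_eq_injective_map} before taking converses, yields
\[\alpha\mathbin{;}\beta\mathbin{;}(R\mathbin{;}S)^c\mathbin{;}\beta\mathbin{;}\alpha^{\smallsmile}\mathbin{;}\alpha=\alpha\mathbin{;}\beta\mathbin{;}(R\mathbin{;}S)^c\mathbin{;}\beta=\neg(R\mathbin{;}S).\]
The delicate points are keeping the order of factors correct under converse and checking that the hypothesis $\beta=\alpha\mathbin{;}\beta\mathbin{;}\alpha$ appears exactly where needed. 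If the $\alpha$'s do not cancel as written, the fix is to rewrite $\alpha\mathbin{;}\beta=\beta\mathbin{;}\alpha^{\smallsmile}$, which follows from the hypothesis. Distributivity is inherited from $\mathsf{Up}(\mathbf E)$.

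Finally, if $\alpha=\text{id}_X$, then ${\sim}R=R^{c\smallsmile}={-}R$, so the algebra is cyclic.
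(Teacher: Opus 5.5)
Your argument is correct. The paper does not reprove this result (it is quoted from \cite{CR-RDqRA}), but the facts it recalls just beforehand are exactly what your direct relational check of $\neg\neg R=R$, $\mathsf{(Dm)}$ and $\mathsf{(Dp)}$ relies on: Lemma~\ref{lem:important_eq_injective_map}, the formulas ${\sim}R=R^{c\smallsmile}\mathbin{;}\alpha$ and ${-}R=\alpha\mathbin{;}R^{c\smallsmile}$, and closure of $\mathsf{Up}(\mathbf E)$ under $\neg$. So yours is the intended route.

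Your closing hedge is unnecessary. The $\mathsf{(Dp)}$ computation goes through exactly as written, using only $\alpha^{\smallsmile}\mathbin{;}\alpha=\text{id}_X$ and $\beta\mathbin{;}\beta=\text{id}_X$. The hypothesis $\beta=\alpha\mathbin{;}\beta\mathbin{;}\alpha$ is needed only for $\neg\neg R=R$.
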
	

An algebra is said to be an \emph{equivalence distributive quasi relation algebra} if it is of the form described by Theorem~\ref{Theorem:Dq(E)}. $\mathsf{EDqRA}$
denotes the class of such algebras.  If $E=X^2$ then the
algebra $\mathbf{D}(E,\preccurlyeq_{E})$ is called a \emph{full distributive quasi relation algebra}. $\mathsf{FDqRA}$ denotes the
class of such algebras.
In~\cite[Theorem 4.4]{CR-RDqRA} it was shown that 
$\mathbb{IP}(\mathsf{FDqRA})=\mathbb{I}(\mathsf{EDqRA})$, 
analogous to the case for relation algebras (cf.~\cite[Chapter 3]{Mad06}).
\begin{definition}
{\cite[Definition 4.5]{CR-RDqRA}}\label{Definition:RDqRA}
A DqRA $\mathbf{A} = \left\langle A, \wedge, \vee, \cdot,  {\sim}, {-}, {\neg}, 1 \right\rangle$ 
is said to be \emph{representable} if 
$\mathbf{A} \in \mathbb{ISP}\left(\mathsf{FDqRA}\right)$
or, equivalently, $\mathbf{A} \in \mathbb{IS}\left(\mathsf{EDqRA}\right)$. 
\end{definition}

If the poset $(X, \leqslant_X)$ used in the representation of $\mathbf{A}$ is finite, then we say that $\mathbf{A}$ is  \emph{finitely} representable. If $\A$ is bounded and the embedding $\varphi$ used in the representation preserves the bounds,  we say that $\A$ is \emph{normally representable}. 

Below we give representations for the DqRAs $\A$, $\B$, $\C$ and $\D$ in Figure~\ref{fig:small_examples_DqRAs}. These representations will be used later to construct representations for the DqRAs with identifiers $D^6_{4, 6}$ and $D^6_{3,3}$. 

\begin{example}{\cite[Example 5.1]{CR-RDqRA}}\label{ex:AandCRepresentations}
Consider the algebras $\A=\mathbf{S}_3$
and $\C=D^{4}_{3,2}$ depicted in Figure~\ref{fig:small_examples_DqRAs}. Let $\mathbf{X}_1 = (X_1, \leqslant_{X_1})$ be the $2$-element antichain with $X_{1} = \left\{x, y\right\}$, and let $E_1 = (X_{1})^2$. Further, set the order automorphism $\alpha_1 = \left\{\left(x, y\right), \left(y, x\right)\right\}$ 
and the dual order automorphism $\beta_1= \textnormal{id}_{X_{1}}$ 
(all of which are depicted in Figure~\ref{fig:S3Representation}).
The lattice 
$\mathsf{Up}(E_1, \preccurlyeq_{E_1})$ is the $16$-element Boolean lattice and
$\alpha_1 \mathbin{;}\leqslant_{X_{1}}^{c\smallsmile} =
{\leqslant_{X_{1}}^{c\smallsmile}} \mathbin{;} \alpha_1
= {\sim}{\leqslant_{X_{1}}} 
=
{\leqslant_{X_{1}}}$.
Since $\{\varnothing, \leqslant_{X_{1}}, (X_{1})^2\}$ and 
$\{\varnothing, \leqslant_{X_{1}},\{(x,y),(y,x)\}, (X_{1})^2\}$ are subuniverses  of the DqRA
$\mathbf{D}(E_1, \preccurlyeq_{E_1}) = \langle \mathsf{Up}(E_1,\preccurlyeq_{E_1}),\cap, \cup, \mathbin{;},  {\sim},{-}, \neg, \leqslant_{X_{1}}\rangle$, the algebra
$\A=\mathbf{S}_3$ can be represented by the subalgebra
$\langle \{\varnothing, \leqslant_{X_{1}}, (X_{1})^2\} ,\cap, \cup, \mathbin{;},  {\sim},{-}, \neg, \leqslant_{X_{1}} \rangle$ of $\mathbf D(E_1, \preccurlyeq_{E_1})$,
and $\C=D^{4}_{3,2}$ can be represented by
$\langle \{\varnothing, \leqslant_{X_{1}},\{(x,y),(y,x)\}, (X_{1})^2\} ,\cap, \cup, \mathbin{;},  {\sim},{-}, \neg, \leqslant_{X_{1}} \rangle$ of $\mathbf{D}(E_1, \preccurlyeq_{E_1})$
{\upshape(}also depicted in 
Figure~\ref{fig:S3Representation}{\upshape)}. Note that both $\A$ and $\C$ are normally representable. 
\end{example}

\begin{figure}[ht]
\centering
\begin{tikzpicture}[scale=0.7,pics/sample/.style={code={\draw[#1] (0,0) --(0.6,0) ;}},
    Dotted/.style={
    dash pattern=on 0.1\pgflinewidth off #1\pgflinewidth,line cap=round,
    shorten >=#1\pgflinewidth/2,shorten <=#1\pgflinewidth/2},
    Dotted/.default=3]
\begin{scope}[ box/.style = {draw,dashdotdotted,inner sep=20pt,rounded corners=5pt,thick}]
\node[draw,circle,inner sep=1.5pt] (x) at (-2.5,0.7) {};
\node[draw,circle,inner sep=1.5pt] (y) at (-0.5,0.7) {};
\path (x) edge [->, bend left=25, dashed, thick] node {} (y);
\path (y) edge [->, bend left=25, dashed, thick] node {} (x);
\draw [->,dotted, thick] (x) edge[loop left,looseness=35]node{} (x);
\draw [->,dotted, thick] (y) edge[loop right, looseness=35]node{} (y);
\node[label,anchor=north,xshift=-1pt] at (x) {\footnotesize$x$\normalsize};
\node[label,anchor=north,xshift=-1pt] at (y) {\footnotesize$y$\normalsize};
\path (0,4) 
 node[matrix,anchor=north east,draw,nodes={anchor=center},inner sep=2pt, thick]  {
  \pic{sample=dashed}; & \node{\footnotesize$\alpha_1$\normalsize}; \\
  \pic{sample=dotted}; & \node{\footnotesize$\beta_1$\normalsize}; \\
  \pic{sample=dashdotdotted}; & \node{\footnotesize$E_1$ blocks\normalsize}; \\
 };
 \node[box,fit=(x)(y)] {};
\node[label,anchor=north,xshift=22pt,yshift=-25pt] at (x) {\footnotesize$\mathbf{X}_{1}$\normalsize};
\end{scope}

\begin{scope}[xshift=5cm,scale=1, yshift=1cm]
\node[draw,circle,inner sep=1.5pt,fill] (bot) at (0,0) {};
\node[draw,circle,inner sep=1.5pt,fill] (id) at (0,1) {};
\node[draw,circle,inner sep=1.5pt,fill] (top) at (0,2) {};
\draw[order] (bot)--(id)--(top);
\node[label,anchor=east,xshift=1pt] at (top) {\footnotesize$(X_{1})^2$\normalsize};
\node[label,anchor=east,xshift=1pt] at (id) {\footnotesize${\leqslant_{X_{1}}}$\normalsize};
\node[label,anchor=east,xshift=1pt] at (bot) {\footnotesize$\varnothing$\normalsize};
\node[label,anchor=north,xshift=0pt,yshift=-10pt] at (bot) {\footnotesize$\A \hookrightarrow \mathbf{D}(E_1, \preccurlyeq_{E_1})$\normalsize};
\end{scope}

\begin{scope}[xshift=10cm, yshift=1cm]
\node[draw,circle,inner sep=2pt,fill] (bot) at (1,0) {};
\node[draw,circle,inner sep=2pt,fill] (top) at (1,2) {};
\node[draw,circle,inner sep=2pt,fill] (1) at (0,1) {};
\node[draw,circle,inner sep=2pt] (a) at (2,1) {};
\draw[order] (bot)--(1);
\draw[order] (bot)--(a);
\draw[order] (1)--(top);
\draw[order] (a)--(top);
\node[label,anchor=south] at (top) {\footnotesize$(X_{1})^2$\normalsize};
\node[label,anchor=north,xshift=1pt] at (bot) {\footnotesize$\varnothing$\normalsize};
\node[label,anchor=east,xshift=1pt] at (1) {\footnotesize${\leqslant_{X_{1}}}$\normalsize};
\node[label,anchor=west,xshift=-1pt] at (a) {\footnotesize$\{(x,y),(y,x)\}$\normalsize};
\node[label,anchor=north,yshift=-15pt] at (bot) {\footnotesize$\C \hookrightarrow \mathbf{D}(E_1, \preccurlyeq_{E_1})$\normalsize};
\end{scope}

\end{tikzpicture}
\caption{The poset $\mathbf{X}_{1}$ used to represent $\A$  and $\mathbf{C}$ (left), the
representation of $\A$ (middle) and the
representation of $\mathbf{C}$.}
\label{fig:S3Representation}
\end{figure}

\begin{example}{\cite[Example 5.7]{CR-RDqRA}}\label{ex:BRepresentation}
For the algebra $\B=D^4_{2,2}$ depicted in Figure~\ref{fig:small_examples_DqRAs}, let $\mathbf X_2=(X_2, \leqslant_{X_2})$ be the $2$-element antichain with $X_2=\{u, v\}$. Set $E_2= (X_2)^2$ and $\alpha_2 = \beta_2 =\textnormal{id}_{X_{2}}$.
Then $\B$ is isomorphic to the subalgebra $\langle \{\varnothing, \leqslant_{X_{2}},\{(u, v),(v, u)\}, (X_{2})^2\} ,\cap, \cup, \mathbin{;},  {\sim},{-}, \neg, \leqslant_{X_{2}}\rangle$
of 
$\mathbf{D}\left(E_2, \preccurlyeq_{E_2}\right)$. 
\end{example}

\begin{example}{\cite[Example 5.2]{CMR2026}}
\label{ex:DRepresentation}
The $4$-element Sugihara chain, $\mathbf{D}=\mathbf{S}_4$, also depicted in 
Figure~\ref{fig:small_examples_DqRAs}, can be (normally) represented over the poset $\mathbf X_3 = (X_3, \leqslant_{X_3})$ where  $X_3=\{s, t, u, v, w\}$ and
$$\leqslant_{X_3}=\textnormal{id}_{X_{3}}\cup
\left(\{s,t\}\times \{u\}\right)\cup
\left(\{u\}\times\{v,w\}\right)\cup
\left(\{s,t\}\times\{v,w\}\right).$$
Here $E_3 = (X_3)^2$,
$\alpha_3=\{(v,w),(w,v),(u,u),(s,t),(t,s)\}$ and 
$\beta_3=\{(s,v),(v,s),(u,u),(t,w),(w,t)\}$. 
Then  $\mathbf{D}$ is isomorphic to the subalgebra
$\langle \{\varnothing, \leqslant_{X_{3}},{\leqslant_{X_{3}}}{\setminus}
\{(u,u)\}, (X_{3})^2\} ,\cap, \cup, \mathbin{;},  {\sim},{-}, \neg, \leqslant_{X_{3}}\rangle$
of $\mathbf{D}(E_3, \preccurlyeq_{E_3})$.
See Figure~\ref{Fig:S4Representation} for a depiction of this poset and the representation of $\D = \mathbf S_4$.
\end{example}

\begin{figure}[ht]
\centering
\begin{tikzpicture}[scale=0.8,pics/sample/.style={code={\draw[#1] (0,0) --(0.6,0) ;}},
Dotted/.style={
dash pattern=on 0.1\pgflinewidth off #1\pgflinewidth,line cap=round,
shorten >=#1\pgflinewidth/2,shorten <=#1\pgflinewidth/2},
Dotted/.default=3]
\begin{scope}[xshift=-3cm, yshift=0cm,box/.style = {draw,dashdotdotted,inner sep=20pt,rounded corners=5pt,thick}]
\node[draw,circle,inner sep=1.5pt] (a) at (-1,1) {};
\node[draw,circle,inner sep=1.5pt] (b) at (1,1) {};
\node[draw,circle,inner sep=1.5pt] (c) at (0,0) {};
\node[draw,circle,inner sep=1.5pt] (d) at (-1,-1) {};
\node[draw,circle,inner sep=1.5pt] (e) at (1,-1) {};
\draw[order] (a)--(c)--(d);
\draw[order] (b)--(c)--(e);
\node[label,anchor=south,yshift=-1pt] at (a) {\footnotesize$v$\normalsize};
\node[label,anchor=south,yshift=-2pt, xshift=5pt] at (b) {\footnotesize$w$\normalsize};
\node[label,anchor=north] at (d) {\footnotesize$s$\normalsize};
\node[label,anchor=north, xshift=5pt] at (e) {\footnotesize$t$\normalsize};
\node[label,anchor=east] at (c) {\footnotesize$u$\normalsize};
\path (a) edge [->, bend left=23, dashed, thick] node {} (b);
\path (b) edge [->, bend left=23, dashed, thick] node {} (a);
\path (d) edge [->, bend left=23, dashed, thick] node {} (e);
\path (e) edge [->, bend left=23, dashed, thick] node {} (d);
\draw [->,dashed, thick] (c) edge[loop above,looseness=40]node{} (c);
\path (a) edge [->, bend left=23, dotted, thick] node {} (d);
\path (b) edge [->, bend left=23, dotted, thick] node {} (e);
\path (d) edge [->, bend left=23, dotted, thick] node {} (a);
\path (e) edge [->, bend left=23, dotted, thick] node {} (b);
\draw [->,dotted, thick] (c) edge[loop right,looseness=40]node{} (c);
\node[box,fit=(a)(b)(d)(e)(c),] {};
\node[label,anchor=north,xshift=25pt,yshift=-25pt] at (d) {\footnotesize$\mathbf{X}_{3}$\normalsize};
\path (2.5,0) 
 node[matrix,anchor=west,draw,nodes={anchor=center},inner sep=2pt, thick]  {
 \pic{sample=solid}; & \node{\footnotesize$\leqslant_{X_3}$\normalsize}; \\
  \pic{sample=dashed}; & \node{\footnotesize$\alpha_3$\normalsize}; \\
  \pic{sample=dotted}; & \node{\footnotesize$\beta_3$\normalsize}; \\
  \pic{sample=dashdotdotted}; & \node{\footnotesize$E_3$ blocks\normalsize}; \\
 };
\end{scope}

\begin{scope}[xshift=6cm,yshift=0cm]
\node[draw,circle,inner sep=1.5pt,fill] (p) at (0,-1.1) {};
\node[draw,circle,inner sep=1.5pt,fill] (q) at (0,-0.4) {};
\node[draw,circle,inner sep=1.5pt,fill] (r) at (0,0.3) {};
\node[draw,circle,inner sep=1.5pt,fill] (s) at (0,1) {};
\draw[order] (p)--(q)--(r)--(s);
\node[label,anchor=west] at (s) {\footnotesize$\left\{s, t, u,v, w\right\}^2$\normalsize};
\node[label,anchor=west] at (r) {\footnotesize$\leqslant_{X_{3}}$\normalsize};
\node[label,anchor=west] at (q) {\footnotesize${\leqslant_{X_{3}}}{\setminus}\{(u,u)\}$\normalsize};
\node[label,anchor=west] at (p) {\footnotesize$\varnothing$\normalsize};
\node[label,anchor=north,yshift=-1pt] at (p) {\footnotesize$\mathbf{D}\hookrightarrow \mathbf{D}(E_3, \preccurlyeq_{E_3})$\normalsize};
\end{scope}

\end{tikzpicture}
\caption{The poset $\mathbf{X}_3$ used to represent $\mathbf{D}=\mathbf{S}_4$ (left) and the representation of $\D = \mathbf{S}_4$ (right).}
\label{Fig:S4Representation}
\end{figure}

\subsection{Frames}\label{subsec:frames}

Frames for DInFL-algebrasDInFL-algebras and DqRAs were first described in~\cite{CJR24}. These frames  are duals for complete perfect DInFL-algebras and DqRAs, respectively. The definitions given in~\cite{CJR24} use a ternary
relation $R$ to represent the monoid operation on the dual frame. In this  paper, we will use the definition of a DqRA-frame given in~\cite{CJR26}, where instead of $R$, a binary set operation $\circ$ is used. Since our focus is exclusively on DqRAs, we omit the prefix ``DqRA'' and simply refer to DqRA-frames as \emph{frames} for the remainder of the paper. In what follows, we state only those results that are needed for the present paper, and refer the reader to~\cite{CJR26} for a more detailed account of the duality results.

Given a binary set operation $\circ: F \times F \to \mathcal{P}(F)$, it can be extended to
a binary operation on $\mathcal P(F)$ via $U\circ V=\bigcup\{e\circ f\mid e\in U,f\in V\}$. We will  also make use of the notation $e\circ V=\{e\}\circ V$ and $U\circ f=U\circ \{f\}$.

\begin{definition}\label{def:frame}
A \emph{frame} is a tuple $\mathbf{F}= \left(F, I, \preccurlyeq_F, \circ, {^{\sim}}, {^{-}}, {^\neg}\right)$ with 
$I\subseteq F$, a partial order $\preccurlyeq_F$ on $F$, a binary set-operation $\circ:F\times F\to \mathcal{P}(F)$ 
and functions $^{\sim}, ^-, ^{\neg} : F \to F$ such that the following conditions hold for all $e, f, g, h \in F$:
\begin{multicols}{2}
\begin{enumerate}[
label=\textup{(\roman*)},
itemsep=0pt,
parsep=0pt,
topsep=0pt,
partopsep=0pt
]
\item $f
\in I \circ e$ iff $e \preccurlyeq_F f$ iff $f \in e\circ I$; 
\item If $e \preccurlyeq_F f$  and $e\in I$, then  $f\in I$;
\item If $e \preccurlyeq_F f$ and $e \in g \circ h$, then $f \in g \circ h$; 
\item $(e\circ f)\circ g = e \circ (f\circ g)$;
\item $g^\sim\in e\circ f$ iff $f^{-}\in g\circ e$;
\item $e^{\sim -} \preccurlyeq_F e$ and $  e^{-\sim} \preccurlyeq_F e$;
\item $e^{\neg\neg}=e$;
\item If $e\preccurlyeq_F f$, then $f^{\neg}\preccurlyeq_F e^{\neg}$;
\item $g^{-}\in e\circ f$ iff $g^{\neg}\in 
f^{\sim\neg}\circ e^{\sim\neg}$.
\item[] 
\end{enumerate}
\end{multicols}
\end{definition}

The following lemmas contain some useful results that will be used throughout this paper. 

\begin{lemma}{\cite[Lemma 3.3]{CJR26}}\label{lem:twiddle_minus_involutive}
Let  $\mathbf{F}= \left(F, I,\preccurlyeq_F, \circ, {^{\sim}}, {^{-}}, {^{\neg}}\right)$ be a frame. Then the following hold for all $e, f, g, h \in F$:
\begin{multicols}{2}
\begin{enumerate}[label=\textup{(\roman*)}]
\item $e^{\sim -} = e = e^{-\sim}$; 
\item If $e \preccurlyeq_F f$, then $f^- \preccurlyeq_F e^-$ and $f^\sim \preccurlyeq_F e^\sim$; 
\item If $e \preccurlyeq_F f$ and $g \in {f}\circ {h}$, then $g \in {e}\circ {h}$; 
\item If $e \preccurlyeq_F f$ and $g\in {h}\circ{f}$, then $g\in {h}\circ{e}$. 
\end{enumerate}
\end{multicols}
\end{lemma}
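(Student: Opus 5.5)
We prove the four parts in the order (ii), (iii), (iv), (i). Parts (ii)–(iv) are quick consequences of conditions (i), (iii), (iv) and (v) of Definition~\ref{def:frame}; part (i) then follows from (ii) together with condition (vi).

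For (ii), suppose $e \preccurlyeq_F f$. We first show $f^- \preccurlyeq_F e^-$.
\begin{itemize}
\item By condition (i), $e^- \preccurlyeq_F e^-$ gives $e^- \in I \circ e^-$, so there is some $i \in I$ with $e^- \in i \circ e^-$.
\item Condition (v) says $g^\sim \in a\circ b$ iff $b^- \in g\circ a$. Read as "$b^- \in g \circ a$ iff $g^\sim \in a \circ b$" with $b = e$, $g = i$, $a = e^-$, this turns $e^- \in i \circ e^-$ into $i^\sim \in e^- \circ e$.
\item Note this detour through $i^\sim$ is awkward, since we have no control over $i^\sim$.
\end{itemize}
A cleaner route is to characterise $\preccurlyeq_F$ through $\circ$ and the negations, and then transport it. Combining conditions (i) and (iii), we get
\[ e \preccurlyeq_F f \iff \exists i\in I\,( f \in e \circ i ). \]
Now assume $f \in e\circ i$ with $i \in I$. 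Writing $f = (f^{-})^{\sim}$ requires part (i), which we do not yet have. So instead we use condition (iii) directly. From $e\preccurlyeq_F f$ and $g \in h\circ k$ with $e \in h\circ k$, we get $f \in h\circ k$.

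The plan is therefore as follows.
\begin{itemize}
\item For $f^- \preccurlyeq_F e^-$, it suffices to show $e^- \in I \circ f^-$.
\item Since $f^- \preccurlyeq_F f^-$, condition (i) gives $i\in I$ with $f^-\in i\circ f^-$.
\item Condition (v), with $g=i$ and the middle variable $e$, turns this into $i^\sim \in f^-\circ f$.
\item The goal $e^-\in i\circ f^-$ is, by (v), equivalent to $i^\sim\in f^-\circ e$.
\item That last membership follows from Lemma~\ref{lem:twiddle_minus_involutive}(iv): here $e\preccurlyeq_F f$ and $i^\sim\in f^-\circ f$. So we prove (iv) first.
\end{itemize}
The argument for $f^\sim \preccurlyeq_F e^\sim$ is symmetric, using (v) in the other direction together with (iii).

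For (iii) and (iv), we use associativity (condition (iv)) and condition (i).
\begin{itemize}
\item For (iii): since $e \preccurlyeq_F f$, we have $f \in e\circ I$. Hence $f\circ h \subseteq (e\circ I)\circ h = e\circ(I\circ h)$.
\item By conditions (i) and (iii), $I\circ h = {\uparrow}h$.
\item So it remains to show $e\circ{\uparrow}h \subseteq e\circ h$. Rather than prove this directly, we use the dual form: any $g\in f\circ h$ is, via $f\in I\circ f$… in fact directly, $g\in f\circ h\subseteq (e\circ I)\circ h = e\circ(I\circ h)$.
\item The key is then that $e\circ k\subseteq e\circ h$ whenever $h\preccurlyeq_F k$, i.e.\ (iv) itself. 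So (iii) and (iv) must be proved together.
\end{itemize}
For that joint step we use the trick $e\circ(I\circ h)=(e\circ I)\circ h$ read from the other side. Take $g \in h \circ f$ with $e\preccurlyeq_F f$. Then $f\in e\circ I$, so
\[ g \in h\circ(e\circ I) = (h\circ e)\circ I. \]
Hence $g\in k\circ i$ for some $k\in h\circ e$ and some $i \in I$. By condition (i), $k\preccurlyeq_F g$, and then condition (iii) (upward closure of $h\circ e$) gives $g\in h\circ e$. This proves (iv) outright. Part (iii) is the mirror argument, using $f\in I\circ e$ and $g\in (I\circ e)\circ h = I\circ(e\circ h)$.

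Finally, for (i), condition (vi) gives $e^{\sim-}\preccurlyeq_F e$. We derive the reverse inequality $e \preccurlyeq_F e^{\sim -}$ from the defining biconditional (v), using (iii) and (iv).
\begin{itemize}
\item Condition (i) gives $i\in I$ with $e\in i\circ e$.
\item We aim to show $e^{\sim-} \in i' \circ e$ for some $i' \in I$.
\item Apply (v) twice. Set $g = e^\sim$; then $(e^{\sim})^{-}$ lies in $g\circ a$ exactly when $g^\sim\in a\circ e^\sim$.
\end{itemize}
\textbf{This step is the main obstacle.} One has to choose the instances of (v) so that the unit elements in $I$ match up, and I expect some bookkeeping there. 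The fallback is to use (ii) already proved: applying the order-reversing maps $^-$ and $^\sim$ to $e^{\sim-}\preccurlyeq_F e$ yields $e^{\sim-\sim}$ related to $e^\sim$. Combining this with the other half of (vi) and antisymmetry should close the argument.
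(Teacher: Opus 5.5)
Your arguments for (ii)--(iv) are sound once the abandoned attempts are stripped out. The proof of (iv), via $g\in h\circ(e\circ I)=(h\circ e)\circ I$, condition (i) and upward closure (condition (iii)), is correct, and (iii) is its mirror image. The derivation of $f^-\preccurlyeq_F e^-$ through two instances of condition (v) plus your (iv) checks out, as does the symmetric one for $f^\sim\preccurlyeq_F e^\sim$ using your (iii), and neither uses part (i).

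The gap is part (i). You leave it unfinished, and the fallback you propose does not work. Applying the order-reversing map ${}^\sim$ to $e^{\sim-}\preccurlyeq_F e$ gives $e^\sim\preccurlyeq_F e^{\sim-\sim}$. Condition (vi) at $e^\sim$ gives $e^{\sim-\sim}\preccurlyeq_F e^\sim$. Antisymmetry therefore only yields $e^{\sim-\sim}=e^\sim$, and likewise $e^{-\sim-}=e^-$. So the identity holds only on the image of ${}^\sim$ (resp.\ ${}^-$), and without surjectivity you cannot conclude $e^{\sim-}=e$. No argument using only order-reversal, (vi) and antisymmetry can do better. On any poset with a bottom element and more than one element, taking both maps constant at the bottom gives order-reversing maps that satisfy (vi) but are not involutive. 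Condition (v) has to be used.

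In fact (i) is the easiest part and needs none of (ii)--(iv). Since $e^\sim\preccurlyeq_F e^\sim$, condition (i) gives some $i\in I$ with $e^\sim\in i\circ e^\sim$. Read (v) with $g:=e$ and the two arguments of $\circ$ being $i$ and $e^\sim$: ``$e^\sim\in i\circ e^\sim$ iff $(e^\sim)^-\in e\circ i$''. This gives $e^{\sim-}\in e\circ I$, hence $e\preccurlyeq_F e^{\sim-}$ by condition (i). Together with (vi) and antisymmetry, $e^{\sim-}=e$.

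Symmetrically, pick $i\in I$ with $e^-\in e^-\circ i$ and read (v) with $g:=e^-$ as ``$(e^-)^\sim\in i\circ e$ iff $e^-\in e^-\circ i$''. This gives $e^{-\sim}\in I\circ e$, so $e\preccurlyeq_F e^{-\sim}$ and hence $e^{-\sim}=e$.

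Your attempted instantiation went wrong in two places: you started from a unit witness for $e$ rather than for $e^\sim$ (resp.\ $e^-$), and you set $g=e^\sim$ where $g$ must be $e$ (resp.\ $e^-$).
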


\begin{lemma}\label{lem:IcircI=I}
Let  $\mathbf{F}= \left(F, I,\preccurlyeq_F, \circ, {^{\sim}}, {^{-}}, ^{\neg}\right)$ be a frame. Then $I\circ I = I$. 
\end{lemma}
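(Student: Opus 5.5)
We prove the two inclusions $I\circ I\subseteq I$ and $I\subseteq I\circ I$ separately, using only conditions (i) and (ii) of Definition~\ref{def:frame} together with the reflexivity of $\preccurlyeq_F$.

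For $I\circ I\subseteq I$, take $f\in I\circ I$. By the definition of the extended operation, there are $e,e'\in I$ with $f\in e'\circ e$. In particular $f\in I\circ e$, so condition (i) gives $e\preccurlyeq_F f$. Since $e\in I$, condition (ii) then yields $f\in I$.

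For $I\subseteq I\circ I$, take $e\in I$. Reflexivity gives $e\preccurlyeq_F e$, so condition (i) gives $e\in I\circ e$. Because $e\in I$, we have $I\circ e=I\circ\{e\}\subseteq I\circ I$, and hence $e\in I\circ I$.

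Combining the two inclusions gives $I\circ I=I$. There is no real obstacle here. The only point needing care is the unfolding $U\circ V=\bigcup\{e\circ f\mid e\in U, f\in V\}$, which guarantees that $f\in I\circ I$ forces $f\in I\circ e$ for some $e\in I$. Associativity (iv) is not needed.
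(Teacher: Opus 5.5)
Your proof is correct and follows the paper's argument. Both use reflexivity plus condition (i) for $I\subseteq I\circ I$, and condition (i) followed by (ii) for $I\circ I\subseteq I$; the only cosmetic difference is that you use the form $f\in I\circ e$ of (i) where the paper uses the form $i\in j\circ I$.
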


\begin{proof}
First, let $i \in I$. Now $i\preccurlyeq_F i$, so $i \in I \circ i$ by (i) of Definition~\ref{def:frame}. Hence, since $i \in I$, we get $i \in I \circ I$. 

Conversely, let $i \in I \circ I$. Then there exists some $j \in I$ such that $i \in j \circ I$. Hence, by (i) of Definition~\ref{def:frame}, $j \preccurlyeq_F i$. Since $j \in I$, we can apply (ii) of Definition~\ref{def:frame} to get $i \in I$. 
\end{proof}

Given a lattice $\mathbf{A}= \langle A, \wedge, \vee\rangle$, we denote by $J^\infty(\mathbf{A})$ and $M^\infty(\mathbf{A})$ the completely join-irreducible and completely meet-irreducible elements, respectively. 
We say that a lattice $\mathbf{A}$ is  \emph{perfect} if for every $a \in A$, we have 
$a = \bigvee \{\, j \in J^\infty(\mathbf{A}) \mid j \leqslant a \,\} = \bigwedge \{\, m \in M^\infty(\mathbf{A}) \mid a \leqslant m\,\}$, i.e., $J^\infty(\mathbf{A})$ 
 is join-dense and $M^\infty(\mathbf{A})$ is meet-dense.
Although some authors incorporate completeness into the definition of a perfect lattice, we treat completeness as a separate assumption.  We note that in a completely distributive lattice, every completely join-irreducible element is completely join-prime, and every completely meet-irreducible element is completely meet-prime.

The following proposition says that we can construct a complete perfect DqRA from a frame. 

\begin{proposition}{\cite[Proposition 3.13]{CJR26}}\label{prop:complex_algebra_frame}
Let $\mathbf{F}= \left(F, I,\preccurlyeq_F, \circ, {^{\sim}}, {^{-}}, {^{\neg}}\right)$ be a frame and let $\mathsf{Up}\left(F, \preccurlyeq_F\right)$ be  the set of all upsets
of $\left(F,\preccurlyeq_F\right)$.  For all $U \in \mathsf{Up}\left(F, \preccurlyeq_F\right)$, define
${\sim} U  = \left\{w \in F \mid w^- \notin U\right\}$,  
$-U  = \left\{w \in F \mid w^{\sim} \notin U\right\}$, and ${\neg} U  = \left\{w \in F \mid w^\neg \notin U\right\}$.
Then $\mathbf{F}^+ = \langle \mathsf{Up}\left(F, \preccurlyeq_F\right), \cap, \cup, \circ, \sim, -, \neg, I\rangle$ is a complete perfect DqRA. 
\end{proposition}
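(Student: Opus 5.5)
The plan is to check the DqRA axioms on $\mathbf{F}^+$ one group at a time, each time transferring the corresponding frame condition of Definition~\ref{def:frame} to upsets. Completeness and perfectness will come out of the lattice part essentially for free.

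\emph{Lattice part and closure.} First, $\mathsf{Up}(F,\preccurlyeq_F)$ is closed under arbitrary unions and intersections. Hence it is a complete, completely distributive lattice. Its completely join-irreducibles are the principal upsets ${\uparrow}e$, and its completely meet-irreducibles are the complements of principal downsets. Every upset is the union of the principal upsets it contains and the intersection of the sets $F\setminus{\downarrow}e$ for $e\notin U$, so the lattice is perfect.

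Next I would check that the operations stay inside $\mathsf{Up}(F,\preccurlyeq_F)$:
\begin{itemize}
\item $U\circ V$ is an upset by condition (iii).
\item $I$ is an upset by condition (ii).
\item ${\sim}U$ and ${-}U$ are upsets by Lemma~\ref{lem:twiddle_minus_involutive}(ii). If $w\preccurlyeq_F w'$, then $w'^-\preccurlyeq_F w^-$; so $w^-\notin U$ forces $w'^-\notin U$.
\item $\neg U$ is an upset by condition (viii) in the same way.
\end{itemize}

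\emph{Monoid and residuation.} Associativity of $\circ$ on subsets follows from (iv) by distributing the unions. For the unit, $I\circ U=\{f\mid \exists e\in U,\ e\preccurlyeq_F f\}={\uparrow}U=U$ by (i), and symmetrically $U\circ I=U$.

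For the InFL condition (\ref{eq:InFL}), I need $U\circ V\subseteq W$ iff $U\subseteq -(V\circ{\sim}W)$. Unwinding the definitions, $U\subseteq-(V\circ{\sim}W)$ says that for all $u\in U$, $v\in V$ and $w$ with $w^-\notin W$, we have $u^\sim\notin v\circ w$. Condition (v), combined with $e^{\sim-}=e=e^{-\sim}$ from Lemma~\ref{lem:twiddle_minus_involutive}(i), rewrites membership statements of the form $g^\sim\in e\circ f$ into $f^-\in g\circ e$. This rotation lets me pass between ``$x\in u\circ v$ implies $x\in W$'' and the statement above. The second equivalence is handled by the symmetric rotation. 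Involutivity ${\sim}{-}U=U={-}{\sim}U$ is immediate from Lemma~\ref{lem:twiddle_minus_involutive}(i).

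\emph{The $\neg$ axioms.} Double negation $\neg\neg U=U$ follows from (vii). The De Morgan law is pointwise: $w^\neg\notin U\cup V$ iff $w^\neg\notin U$ and $w^\neg\notin V$.

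I expect (Dp), namely $\neg(U\circ V)={\sim}(-\neg V\circ-\neg U)$, to be the main obstacle. The plan is to compute both sides pointwise. The left side is $\{w\mid w^\neg\notin U\circ V\}$. The right side is $\{w\mid w^-\notin -\neg V\circ -\neg U\}$, where $-\neg V=\{x\mid x^{\sim\neg}\in V\}$. So $w$ fails to lie in the right side iff $w^-\in x\circ y$ for some $x,y$ with $x^{\sim\neg}\in V$ and $y^{\sim\neg}\in U$. Condition (ix) turns $w^-\in x\circ y$ into $w^\neg\in y^{\sim\neg}\circ x^{\sim\neg}$, which is exactly a witness that $w^\neg\in U\circ V$.

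For the converse, given $w^\neg\in u\circ v$ with $u\in U$ and $v\in V$, I set $y=u^{\neg-}$ and $x=v^{\neg-}$. This choice is made so that $y^{\sim\neg}=u$ and $x^{\sim\neg}=v$, using (vii) and Lemma~\ref{lem:twiddle_minus_involutive}(i). Then (ix) read backwards gives $w^-\in x\circ y$. The care needed here is mostly bookkeeping of the order of the factors and of the correct inverse among $\sim$, $-$ and $\neg$. Once (Dp) holds, all DqRA axioms are verified, which completes the argument.
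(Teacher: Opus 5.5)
Your proposal is correct. The paper gives no proof of this statement; it imports it from CJR26 and uses it as a black box, so there is no in-paper argument to compare against. Your direct, axiom-by-axiom transfer of the frame conditions to upsets is the natural self-contained proof, and every step checks out:

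\begin{enumerate}
\item Closure of $\mathsf{Up}(F,\preccurlyeq_F)$ under $\circ$, $I$, ${\sim}$, $-$ and $\neg$ follows from (iii), (ii), Lemma~\ref{lem:twiddle_minus_involutive}(ii) and (viii).
\item The unit law is exactly (i).
\item In both residuation equivalences, the ``rotation'' by (v) works because $^{\sim}$ and $^{-}$ are mutually inverse bijections of $F$, by Lemma~\ref{lem:twiddle_minus_involutive}(i). So a condition quantified over all elements of the form $w^{-}$ (respectively $w^{\sim}$) is a condition on all of $F$, which is what turns it into $U\circ V\subseteq W$. It is worth stating this surjectivity explicitly when you write it up.
\item For (Dp), your witnesses $x=v^{\neg-}$ and $y=u^{\neg-}$ do satisfy $x^{\sim\neg}=v$ and $y^{\sim\neg}=u$, by $e^{-\sim}=e$ and (vii). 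So (ix) read right-to-left gives $w^{-}\in x\circ y$ with $x\in-\neg V$ and $y\in-\neg U$.
\end{enumerate}

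Perfectness is defined in this paper purely for the lattice reduct. Your observation that the principal upsets and the complements of principal downsets are join- and meet-dense families of completely join- and meet-irreducibles therefore settles completeness and perfectness.
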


\noindent
The algebra $\mathbf{F}^+$ in Proposition~\ref{prop:complex_algebra_frame} is called the \emph{complex algebra} of $\mathbf{F}$.

Recall that if $\mathbf{A}$ is a complete perfect distributive lattice,   
then the  map $\kappa : J^{\infty}\left(\mathbf{A}\right) \to M^{\infty}\left(\mathbf{A}\right)$ defined by $\kappa\left(j\right) = \bigvee\left\{a \in A \mid j \not\leqslant a\right\}$ is an order isomorphism (\cite[Section II.5]{FJN1995}). 
The following lemma will be used in Section~\ref{sec:networks_representations}.

\begin{lemma}\label{lem:properties_kappa}
Let $\A=\langle A,\wedge,\vee\rangle$ be a complete perfect lattice.
\begin{enumerate}[label=\textup{(\roman*)}]
\item For each $j \in J^{\infty}(\A)$, we have $j \not\leqslant \kappa(j)$.
\item For all $a \in A$ and $j \in J^{\infty}(\A)$, we have $a \leqslant \kappa(j)$ iff $j \not\leqslant a$. 
\end{enumerate}
\end{lemma}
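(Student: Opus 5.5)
Both parts come directly from the definition $\kappa(j)=\bigvee\{a\in A\mid j\not\leqslant a\}$, together with the fact that a completely join-irreducible element of a complete perfect distributive lattice is completely join-prime (recalled just before the statement). Distributivity is implicit in the setting where $\kappa$ was introduced, and I will use it through join-primeness.

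For (i), suppose for contradiction that $j\leqslant\kappa(j)=\bigvee S$, where $S=\{a\in A\mid j\not\leqslant a\}$. Since $j$ is completely join-prime, $j\leqslant\bigvee S$ forces $j\leqslant a$ for some $a\in S$. This contradicts the definition of $S$, so $j\not\leqslant\kappa(j)$.

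If one prefers not to invoke join-primeness, there is an alternative route using perfectness and complete join-irreducibility. We have $j=j\wedge\bigvee S$. If the lattice is completely distributive (as the complete perfect distributive case is), this equals $\bigvee_{a\in S}(j\wedge a)$. Complete join-irreducibility then gives $j=j\wedge a$ for some $a\in S$, i.e.\ $j\leqslant a$, which is again a contradiction.

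For (ii), first take $a\in A$ with $j\not\leqslant a$. Then $a\in S$, so $a\leqslant\bigvee S=\kappa(j)$. Conversely, suppose $a\leqslant\kappa(j)$. If $j\leqslant a$, then $j\leqslant\kappa(j)$, contradicting (i); hence $j\not\leqslant a$.

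The only nontrivial step is (i). Its justification rests on the completely-join-prime property of completely join-irreducibles, which the paper has already noted holds in the completely distributive setting. I will simply cite that remark, or the standard reference \cite[Section II.5]{FJN1995}, and not re-derive it.
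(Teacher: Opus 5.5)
Your proof is correct and matches the paper's argument: (i) follows from complete join-primeness of $j$ applied to the join defining $\kappa(j)$, and (ii) follows from the definition of $\kappa$ in one direction and from (i) in the other. You are also right that distributivity is genuinely needed, even though the statement leaves it implicit and the paper's proof uses it only through join-primeness: in $M_3$, each atom $j$ is completely join-irreducible, yet $\kappa(j)=1$, so (i) fails there.
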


\begin{proof}
(i) Suppose $j \leqslant \kappa(j)$. Then, since $j$ is join-prime, there is some $a \in A$ such that $j \leqslant a$ and $j \not\leqslant a$, a contradiction. 

\noindent
(ii) First, if $j \not\leqslant a$, then $a \in \{d \in A \mid j \not\leqslant d\}$, and so $a \leqslant \bigvee \{d \in A \mid j \not\leqslant d\} = \kappa(j)$. 
For the converse implication, assume $a \leqslant \kappa(j)$. If $j \leqslant a$, then $j \leqslant \kappa(j)$, which contradicts (i). Hence, it must be the case that $j \not\leqslant a$. 
\end{proof}

For every completely join-irreducible $j$ of a complete perfect DqRA $\mathbf{A}$, define $j^{\sim} = {\sim}\kappa\left(j\right)$, $j^- = -\kappa\left(j\right)$, and $j^{\neg} = \neg\kappa(j)$. If $j \in J^{\infty}(\A)$, then $\kappa(j) \in M^{\infty}(\A)$, and so, since the unary operations $-, {\sim}$ and $\neg$ are dual
lattice isomorphisms, we have $j^{\sim}, j^{-}, j^{\neg} \in J^{\infty}(\A)$. 


\begin{proposition}{\cite[Proposition 3.16]{CJR26}}\label{prop:frame_from_perfect_DqRA}
Let $\mathbf{A}=\langle A,\wedge, \vee, \cdot, \sim, -, \neg, 1\rangle$ be a complete perfect DqRA. 
Set $A_+ = J^{\infty}(\A)$ and $I_{\A_+}=\left\{i \in A_+ \mid i \leqslant 1\right\}$. For all $i, j, k$ in $A_+$, define $\preccurlyeq_{\A_+} {=} \geqslant$, $k\in i \circ_{\A_+} j$ if, and only if, $k \leqslant i\cdot j$, $j^{\sim_{\A_+}} = {\sim}\kappa\left(j\right)$,  $j^{-_{\A_+}} = -\kappa\left(j\right)$, and $j^{\neg_{\A_+}} = \neg \kappa(j)$.
Then the structure  $\mathbf{A}_{+} =\left(A_+, I_{\A_+}, \preccurlyeq_{\A_+}, \circ_{\A_+}, {^{\sim_{\A_+}}}, {^{-_{\A_+}}}, ^{\neg_{\A_+}}\right)$ is a frame. 
\end{proposition}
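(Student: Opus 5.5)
We have to verify conditions (i)–(ix) of Definition~\ref{def:frame} for $\A_+$. The basic tools are the following: complete distributivity of $\A$, so that each $j\in J^\infty(\A)$ is completely join-prime; Lemma~\ref{lem:properties_kappa}; the fact that ${\sim},-,\neg$ are dual lattice isomorphisms with ${\sim}{-}a=a=-{\sim}a$ and $\neg\neg a=a$; the residuation law (\ref{eq:InFL}); and $\mathsf{(Dp)}$. The first step is to record that ${\sim},-,\neg$ send $M^\infty(\A)$ onto $J^\infty(\A)$, and that $\kappa^{-1}$ commutes with them appropriately. Since $\kappa(j^{\sim})$ is determined by $j^{\sim}={\sim}\kappa(j)$, we get $\kappa(j^{\sim})={\sim}j$ and similarly $\kappa(j^-)=-j$ and $\kappa(j^\neg)=\neg j$. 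This holds because a dual isomorphism maps the pair $(j,\kappa(j))$, which splits the lattice via Lemma~\ref{lem:properties_kappa}(ii), to a splitting pair.

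With this in hand, the order-theoretic conditions are immediate.
\begin{itemize}
\item[(ii)] $I$ is a down-set in $\leqslant$, i.e.\ an up-set for $\preccurlyeq=\geqslant$.
\item[(iii)] If $f\leqslant e\leqslant g\cdot h$, then $f\leqslant g\cdot h$.
\item[(vi), (vii)] We have $j^{\sim-}=-\kappa({\sim}\kappa(j))=-{\sim}j=j$, and likewise $j^{-\sim}=j$ and $j^{\neg\neg}=\neg\neg j=j$.
\item[(viii)] Since $\kappa$ is an order isomorphism and $\neg$ reverses order, $j\mapsto j^\neg$ reverses $\leqslant$.
\end{itemize}

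For (i), we use $1=\bigvee I$, by perfectness. Then $e\cdot 1=e=\bigvee_{i\in I}e\cdot i$ and $\cdot$ distributes over arbitrary joins, since it is residuated. So if $f\leqslant e$, then $f\leqslant e\cdot 1$, hence $f\in e\circ I$. Conversely, $f\leqslant e\cdot i$ with $i\leqslant 1$ gives $f\leqslant e$. The key observation here is that $f\in e\circ I$ means exactly $f\leqslant \bigvee_{i\in I} e\cdot i$ by complete join-primeness, and similarly for $I\circ e$.

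For (iv), associativity of the extended set operation: $k\in(e\circ f)\circ g$ iff $k\leqslant j\cdot g$ for some $j\leqslant e\cdot f$. By join-density and complete join-primeness, this holds iff $k\leqslant (e\cdot f)\cdot g$. The same argument on the other side gives $k\leqslant e\cdot(f\cdot g)$, so associativity of $\cdot$ finishes the argument.

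For (v), we translate via Lemma~\ref{lem:properties_kappa}(ii). The condition $g^{\sim}\leqslant e\cdot f$ says that $e\cdot f\not\leqslant \kappa(g^\sim)={\sim}g$. By (\ref{eq:InFL}), this is equivalent to $f\not\leqslant{\sim}(-{\sim}g\cdot e)={\sim}(g\cdot e)$. Since ${\sim}$ is a dual isomorphism, ${\sim}(g\cdot e)=\kappa$ of an appropriate element; more directly, $f\not\leqslant{\sim}(g e)$ iff $f^-\leqslant g e$. To see this, apply $-$: the condition becomes $-{\sim}(ge)=ge\not\leqslant -f=\kappa(f^-)$, which is equivalent to $f^-\leqslant ge$.

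For (ix), we work the same way. The condition $g^-\leqslant e f$ iff $ef\not\leqslant -g$. Applying $\neg$ and $\mathsf{(Dp)}$ gives $\neg(ef)={\sim}(-\neg f\cdot-\neg e)\not\geqslant \neg - g$. We then rewrite $-\neg f=\kappa$-images, using $e^{\sim\neg}=\neg\kappa({\sim}\kappa(e))=\neg{\sim}e$, together with $\mathsf{(Di)}$ ($\neg{\sim}a=-\neg a$), to identify $-\neg f=\neg{\sim}f=f^{\sim\neg}$. Equivalently, the goal is $g^\neg\leqslant f^{\sim\neg}\cdot e^{\sim\neg}$, i.e.\ $(-\neg f)(-\neg e)\not\leqslant\kappa(g^\neg)=\neg g$. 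Applying ${\sim}$ turns this into ${\sim}(\ldots)=\neg(ef)\not\geqslant{\sim}\neg g$. Since ${\sim}\neg g=\neg{-}g$ by $\mathsf{(Di)}$ in its other form, this says $\neg(ef)\not\geqslant\neg(-g)$, i.e.\ $ef\not\leqslant -g$, and we are done.

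This last bookkeeping is the main obstacle. It requires keeping track of which of ${\sim},-$ appears after applying $\neg$. For that we need both forms $\neg{\sim}a=-\neg a$ and $\neg{-}a={\sim}\neg a$, where the second follows from the first by involutivity.
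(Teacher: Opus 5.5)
Your proposal is correct. The paper gives no proof of this statement: it imports it verbatim as Proposition~3.16 of \cite{CJR26}. So there is no in-paper argument to compare against, and your direct verification stands as a complete, self-contained proof.

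Your organising device, the transport identities $\kappa(j^{\sim})={\sim}j$, $\kappa(j^{-})=-j$ and $\kappa(j^{\neg})=\neg j$, is exactly right. Combined with Lemma~\ref{lem:properties_kappa}(ii) in the form ``$j\leqslant a$ iff $a\not\leqslant\kappa(j)$'', it reduces each condition to something routine:
\begin{itemize}
\item (vi)--(viii) become the involution laws.
\item (v) becomes one application of (\ref{eq:InFL}) with $-{\sim}g=g$, followed by applying $-$.
\item (ix) becomes $\mathsf{(Dp)}$ together with the two forms $\neg{\sim}a=-\neg a$ and ${\sim}\neg a=\neg{-}a$ of $\mathsf{(Di)}$.
\end{itemize}
I checked each of these equivalences and they go through as you describe.

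Two minor points of presentation.
\begin{enumerate}
\item In (i), the inference from $f\leqslant e\cdot 1$ to $f\in e\circ I$ relies on the complete join-primeness step $f\leqslant\bigvee_{i\in I}e\cdot i$, which you only state afterwards. Put that step first.
\item You invoke complete distributivity without justification. What you actually need, namely that each $j\in J^\infty(\A)$ is completely join-prime, follows directly from distributivity and meet-density of $M^\infty(\A)$:
\begin{itemize}
\item Choose $m\in M^\infty(\A)$ lying above the unique lower cover of $j$ but not above $j$.
\item If $j\not\leqslant s$ for every $s\in S$, then $j\wedge s\leqslant m$. Since $m$ is meet-prime, this gives $s\leqslant m$.
\item Hence $\bigvee S\leqslant m$, and therefore $j\not\leqslant\bigvee S$.
\end{itemize}
The paper takes this for granted as well.
\end{enumerate}
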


If $\A$ is a complete perfect DqRA, we will sometimes omit the subscripts and write $\A_+ = (A_+, I, \preccurlyeq, \circ, ^{\sim}, ^{-}, ^{\neg})$ to denote its dual frame.  

\begin{example}\label{ex:dual_frames_ex}
Figure~\ref{fig:dual-frames} shows the dual frames of the DqRAs $\A$, $\B$, $\C$ and $\D$ in Figure~\ref{fig:small_examples_DqRAs}. 
Observe that $I_{\A_+}=\{1_\A\}$, $I_{\B_+}=\{1_\B\}$, $I_{\C_+}=\{1_\C\}$, and  $I_{\D_+}=\{1_\D, 0_\D\}$. In the tables for the binary circle, a sequence of elements such as $0_\D1_\D\top_\D$ denotes the set $\{0_\D,1_\D, \top_\D\}$.
The operation $^-$ is not shown since it is the inverse of $^\sim$, and for the frames in Figure~\ref{fig:dual-frames} we have $j^-= j^{\sim}$ for all join-irreducibles $j$.
\end{example}

\begin{figure}
\begin{tikzpicture}[xshift=-10cm, scale=0.7]
\footnotesize
\begin{scope}[xshift=-2cm]
\node[draw,circle,inner sep=2pt] (1) at (2,0.5) {};
\node[draw,circle,inner sep=2pt] (0) at (2,-0.5) {};
\draw[order] (1)--(0);
\node[label,anchor=west] at (0) {$\top_\A$};
\node[label,anchor=west] at (1) {$1_\A$};
\node[label,anchor=south,xshift=-30pt,yshift=-10pt] at (1) {$\A_+$:};
\node[anchor=west] at (3.5,0){
\footnotesize{$\begin{array}{|c|cc||c|c|}\hline
\circ_{\A_+} &1_\A&\ \top_\A & ^{\sim_{\A_+}} & ^{\neg_{\A_+}}\\\hline
1_{\A} & 1_\A & 1_\A\top_\A & \top_\A & \top_\A	\\
\top_\A & 1_\A\top_\A & 1_\A\top_\A & 1_\A & 1_\A	\\\hline
\end{array}$}\\[8pt]
};
\end{scope}

\begin{scope}[xshift=-1.1cm,yshift=-5.2cm]
\node[draw,circle,inner sep=2pt] (0) at (0,1) {};
\node[draw,circle,inner sep=2pt] (1) at (0,0) {};
\node[draw,circle,inner sep=2pt] (2) at (0,-1) {};
\draw[order] (2)--(1)--(0);
\node[label,anchor=west] at (0) {$0_\D$};
\node[label,anchor=west] at (1) {$1_\D$};
\node[label,anchor=west] at (2) {$\top_\D$};
\node[label,anchor=south,xshift=-30pt,yshift=-10pt] at (0) {$\D_+$:};
\node[anchor=west] at (1.5,0){
$\begin{array}{|c|ccc||c|c|}\hline
\circ_{\D_+} & 0_\D\ & 1_\D & \top_\D &^{\sim_{\D_+}} & ^{\neg_{\D_+}} \\\hline
0_\D & 0_\D & 0_\D & 0_\D1_\D\top_\D & \top_\D & \top_\D	\\
1_\D & 0_\D & 0_\D1_\D & 0_\D1_\D\top_\D & 1_\D & 1_\D \\
\top_\D & 0_\D1_\D\top_\D & 0_\D1_\D\top_\D & 0_\D1_\D\top_\D & 0_\D & 0_\D \\\hline
\end{array}$
};
\end{scope}





\begin{scope}[xshift=-5cm, yshift=-2.65cm]
\node[draw,circle,inner sep=2pt] (1) at (0,0) {};
\node[draw,circle,inner sep=2pt] (0) at (1.2,0) {};
\node[label,anchor=north] at (0) {$0_\B$};
\node[label,anchor=north] at (1) {$1_\B$};
\node[label,anchor=south,xshift=-10pt,yshift=8pt] at (1) {$\B_+$:};
\node[anchor=west] at (1.95,0.25){
$\begin{array}{|c|cc||c|c|}\hline
\circ_{\B_+} &1_\B & 0_\B &^{\sim_{\B_+}}& ^{\neg_{\B_+}}\\\hline
1_\B & 1_\B & 0_\B & 1_\B & 1_\B \\
0_\B & 0_\B & 1_\B & 0_\B & 0_\B	\\\hline
\end{array}$\\[8pt]
};
\end{scope}

\begin{scope}[xshift=6.3cm, yshift=-2.65cm]
\node[draw,circle,inner sep=2pt] (1) at (0,0) {};
\node[draw,circle,inner sep=2pt] (0) at (1.2,0) {};
\node[label,anchor=north] at (0) {$a$};
\node[label,anchor=north] at (1) {$1_\C$};
\node[label,anchor=south,xshift=-10pt,yshift=8pt] at (1) {$\C_+$:};
\node[anchor=west] at (1.95,0.25){
$\begin{array}{|c|cc||c|c|}\hline
\circ_{\C_+} &1_\C & a &^{\sim_{\C_+}}& ^{\neg_{\C_+}}\\\hline
1_\C & 1_\C & a & a & a \\
a & a & 1_\C & 1_\C & 1_\C	\\\hline
\end{array}$\\[8pt]
};
\end{scope}

\normalsize
\end{tikzpicture}

\caption{The dual frames of the DqRAs $\A$, $\B$, $\C$ and $\D$ in Figure~\ref{fig:small_examples_DqRAs}. 
}\label{fig:dual-frames}
\end{figure}



\section{Wreath products}\label{sec:wreath_products}

In this section, we adapt Comer's wreath-product construction in~\cite{Com83}, which he uses to build “big” polygroups from “small” ones.
Comer shows that important classes of polygroups, such as chromatic polygroups, are closed under the wreath-product construction. 
Here, we define the wreath product of two DqRA frames to obtain a larger DqRA frame. In the sequel, we use this construction to obtain representable DqRAs. For convenience, we omit subscripts from the relations and operations of the wreath product.   

Let $\A=\langle A,\wedge_\A,\vee_\A, \cdot_\A, {\sim}_\A, -_\A,\neg_\A, 1_\A\rangle$
and $\B=\langle B,\wedge_\B,\vee_\B,\cdot_\B, {\sim}_\B, -_\B,\neg_\B, 1_\B\rangle$ be
complete perfect DqRAs such that  $A \cap B=\varnothing$ and $1_\mathbf{A}\in \JA$, $1_{\B} \in \JB$ and $1_\B^{-_{\B_+}} = 1^{\sim_{\B_+}}_\B = 1_\B^{\neg_{\B_+}}=1_\B$.  Recall that $\A_+$ and $\mathbf{B}_+$ denote
their corresponding frames, respectively.

Now set $A_+[B_+]
= 
\JA\cup (\JB{\setminus}\{1_\B\})$ and $I = I_{\A_+} \cup (I_{\B_+}{\setminus}\{1_\B\})$. 
Define a binary relation $\preccurlyeq$ on $A_+[B_+]$ as follows:
\begin{align*}
i \preccurlyeq j \quad & \textnormal{ iff } \quad i, j \in \JA \textnormal{ and } i \preccurlyeq_{\A_+} j \notag \\
\quad & \textnormal{ or } \quad  i, j \in \JB{\setminus}\{1_\B\} \textnormal{ and } i \preccurlyeq_{\B_+} j \notag\\
\quad & \textnormal{ or } \quad  i \in \JA,  j \in \JB{\setminus} \{1_\B\} \textnormal{ and } 1_\B \prec_{\B_+} j \notag \\
\quad & \textnormal{ or } \quad  i \in \JB{\setminus} \{1_\B\}, j \in \JA \textnormal{ and } i \prec_{\B_+} 1_\B. 
\end{align*}

The proof of the following lemma is routine and left for the reader. 

\begin{lemma}\label{lem:wreat_product_poset}
Let $\A=\langle A,\wedge_\A,\vee_\A, \cdot_\A, {\sim}_\A, -_\A,\neg_{\A},1_\A\rangle$
and $\B=\langle B,\wedge_\B,\vee_\B,\cdot_\B,  {\sim}_\B, -_\B,\neg_{\B},1_\B\rangle$ be
complete perfect DqRAs such that $A \cap B=\varnothing$,  $1_\mathbf{A}\in \JA$, $1_{\B} \in \JB$ and $1_\B^{-_{\B_+}} = 1^{\sim_{\B_+}}_\B = 1_{\B}^{\neg_{\B_+}} =1_\B$.
Then $(A_+[B_+], \preccurlyeq)$ is a poset. 
\end{lemma}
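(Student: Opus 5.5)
We verify reflexivity, antisymmetry and transitivity directly from the four clauses defining $\preccurlyeq$, using that $\preccurlyeq_{\A_+}$ and $\preccurlyeq_{\B_+}$ are partial orders. Here $\prec_{\B_+}$ is the strict part of $\preccurlyeq_{\B_+}$. Throughout we write $B' = \JB{\setminus}\{1_\B\}$. We also note that the clauses are mutually exclusive according to where $i$ and $j$ lie, since $\JA \cap B' = \varnothing$ because $A\cap B=\varnothing$. Hence for any pair $(i,j)$ exactly one clause can apply, and the proof is a case split on the membership of the elements involved.

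\emph{Reflexivity.} If $i\in \JA$, the first clause gives $i\preccurlyeq i$. If $i\in B'$, the second clause does.

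\emph{Antisymmetry.} Suppose $i\preccurlyeq j$ and $j\preccurlyeq i$.
\begin{itemize}
\item If both lie in $\JA$, antisymmetry of $\preccurlyeq_{\A_+}$ gives $i=j$.
\item If both lie in $B'$, antisymmetry of $\preccurlyeq_{\B_+}$ gives $i=j$.
\item If $i\in\JA$ and $j\in B'$, the third clause gives $1_\B\prec_{\B_+} j$ and the fourth gives $j\prec_{\B_+}1_\B$. This contradicts antisymmetry of $\preccurlyeq_{\B_+}$, so this mixed case cannot occur.
\end{itemize}

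\emph{Transitivity.} Suppose $i\preccurlyeq j\preccurlyeq k$ and split into the eight membership patterns of $(i,j,k)$.
\begin{itemize}
\item All three in $\JA$, or all three in $B'$: use transitivity of $\preccurlyeq_{\A_+}$ or $\preccurlyeq_{\B_+}$.
\item $(\JA,\JA,B')$: we have $1_\B\prec_{\B_+}k$, which is exactly what $i\preccurlyeq k$ requires.
\item $(\JA,B',B')$: from $1_\B\prec_{\B_+}j\preccurlyeq_{\B_+}k$ we get $1_\B\prec_{\B_+}k$. Strictness persists because $k\ne 1_\B$, as $k\in B'$.
\item $(B',\JA,\JA)$: $i\prec_{\B_+}1_\B$ is given directly.
\item $(B',B',\JA)$: $i\preccurlyeq_{\B_+}j\prec_{\B_+}1_\B$ gives $i\prec_{\B_+}1_\B$.
\item $(\JA,B',\JA)$: we need $i\preccurlyeq_{\A_+}k$. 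But the hypotheses give $1_\B\prec_{\B_+}j\prec_{\B_+}1_\B$, which is impossible. The case is vacuous.
\item $(B',\JA,B')$: the hypotheses give $i\prec_{\B_+}1_\B\prec_{\B_+}k$, so $i\prec_{\B_+}k$, hence $i\preccurlyeq_{\B_+}k$ and the second clause gives $i\preccurlyeq k$.
\end{itemize}

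No step is a real obstacle. The only care needed is in the mixed cases, where one must use the strict order $\prec_{\B_+}$ to rule out impossible patterns and note that $1_\B$ is excluded from $B'$. The extra hypotheses on $1_\B$ and $1_\A$ are not needed for this lemma.
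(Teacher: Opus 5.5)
Your proof is correct and is exactly the routine verification the paper leaves to the reader. You check reflexivity and antisymmetry (excluding the mixed case via $1_\B\prec_{\B_+}j\prec_{\B_+}1_\B$), and you handle all eight membership patterns for transitivity properly. The one quibble is your closing remark. The hypothesis $1_\B\in\JB$ is tacitly used, since both the defining clauses and your chain $i\prec_{\B_+}1_\B\prec_{\B_+}k$ compare elements with $1_\B$ inside $(\JB,\preccurlyeq_{\B_+})$. This is harmless, as $\preccurlyeq_{\B_+}$ is just ${\geqslant_\B}$ restricted. The hypothesis $1_\A\in\JA$ and the fixedness of $1_\B$ under ${}^{\sim_{\B_+}}$, ${}^{-_{\B_+}}$, ${}^{\neg_{\B_+}}$ are indeed irrelevant here.
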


Now define $\circ: A_+[B_+]\times A_+[B_+] \to \mathcal{P}(A_+[B_+])$  by setting, for all $i, j\in A_+[B_+]$,
\begin{equation*}
i\circ j = 
\begin{cases} 
(i \circ_{\mathbf{A}_+} j) \cup (I_{\B_+}{\setminus}\{1_\B\}) &  \textnormal{ if } i, j \in \JA\\
i \circ_{\B_+} j &  \textnormal{ if } i, j \in \JB {\setminus} \{1_\B\} \textnormal{ and } 1_\B \notin i\circ_{\B_+} j\\
((i \circ_{\B_+} j) {\setminus} \{1_\B\}) \cup \JA &  \textnormal{ if } i, j \in \JB {\setminus} \{1_\B\} \textnormal{ and } 1_\B \in i\circ_{\B_+} j\\
1_\B \circ_{\B_+} j &  \textnormal{ if } i \in \JA, j \in \JB {\setminus}\{1_\B\} \textnormal{ and } 1_\B \notin 1_\B\circ_{\B_+} j\\
((1_\B \circ_{\B_+} j) {\setminus} \{1_\B\}) \cup \JA &  \textnormal{ if } i \in \JA, j \in \JB {\setminus} \{1_\B\} \textnormal{ and } 1_\B \in 1_\B\circ_{\B_+} j\\
i \circ_{\B_+} 1_\B &  \textnormal{ if } i \in  \JB {\setminus}\{1_\B\}, j \in \JA \textnormal{ and } 1_\B \notin i\circ_{\B_+} 1_\B\\
((i \circ_{\B_+} 1_\B) {\setminus} \{1_\B\}) \cup \JA &  \textnormal{ if } i \in \JB {\setminus} \{1_\B\}, j \in \JA \textnormal{ and } 1_\B \in i\circ_{\B_+} 1_\B. 
\end{cases}
\raisetag{-3\baselineskip}
\end{equation*}

For each $i \in A_+[B_+]$, define $^\sim, ^-, ^{\neg}: A_+[B_+] \to A_+[B_+]$ 
by
\vspace{-0.5cm}
\begin{multicols}{2}
\begin{itemize}  
\item[]
\[ 
i^{\sim} = 
\begin{cases} 
i^{\sim_{\A_+}} & \textnormal{ if } i \in \JA\\
i^{\sim_{\B_+}} & \textnormal{ if } i \in \JB{\setminus} \{1_\B\},
\end{cases}
\]
\item[] 
\[ 
i^{-} = 
\begin{cases} 
i^{-_{\A_+}} & \textnormal{ if } i \in \JA\\
i^{-_{\B_+}} & \textnormal{ if } i \in \JB{\setminus} \{1_\B\}
\end{cases}
\]
\end{itemize}
\end{multicols}
\noindent
and
\[ 
i^{\neg} = 
\begin{cases} 
i^{\neg_{\A_+}} & \textnormal{ if } i \in \JA\\
i^{\neg_{\B_+}} & \textnormal{ if } i \in \JB{\setminus} \{1_\B\}.
\end{cases}
\]

\begin{theorem}\label{thm:wreat_product_DqRA-frames}Let $\A=\langle A,\wedge_\A,\vee_\A, \cdot_\A, {\sim}_\A, -_\A, \neg_\A, 1_\A\rangle$
and $\B=\langle B,\wedge_\B,\vee_\B,\cdot_\B,  {\sim}_\B, -_\B,\neg_\B, 1_\B\rangle$ be
complete perfect DqRAs such that $A \cap B=\varnothing$, $1_\mathbf{A}\in \JA$, $1_{\B} \in \JB$ and $1_\B^{-_{\B_+}} = 1^{\sim_{\B_+}}_\B = 1_\B^{\neg_{\B_+}}= 1_\B$. Then 
$\A_+[\B_+] = (A_+[B_+],I,\preccurlyeq, \circ, ^{\sim}, ^{-},^{\neg})$  
is a frame. 
\end{theorem}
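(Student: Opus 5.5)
The proof will verify conditions (i)--(ix) of Definition~\ref{def:frame} one at a time for $\A_+[\B_+]$. Lemma~\ref{lem:wreat_product_poset} already tells us that $\preccurlyeq$ is a partial order. Throughout, we treat $A_+[B_+]$ as the frame $\B_+$ with the point $1_\B$ ``blown up'' into a copy of $\A_+$. For this we introduce the collapse map $\pi: A_+[B_+]\to \JB$, sending every element of $\JA$ to $1_\B$ and fixing $\JB{\setminus}\{1_\B\}$. The definition of $\circ$ is designed to give the following key identity, which should follow by inspecting the seven cases and using $1_\B\circ_{\B_+}1_\B \ni 1_\B$:
\[
  i\circ j=\pi^{-1}(\pi(i)\circ_{\B_+}\pi(j))\quad\text{whenever } \{i,j\}\not\subseteq \JA ,
\]
while for $i,j\in\JA$ we have $i\circ j=(i\circ_{\A_+}j)\cup \pi^{-1}(I_{\B_+}{\setminus}\{1_\B\})$. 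The order behaves similarly: $i\preccurlyeq j$ iff either $i,j\in\JA$ and $i\preccurlyeq_{\A_+}j$, or $\pi(i)\preccurlyeq_{\B_+}\pi(j)$ with $\pi(i)\neq\pi(j)$. The unary maps satisfy $\pi(i^\sim)=\pi(i)^{\sim_{\B_+}}$, and likewise for $^-$ and $^\neg$. This uses the hypothesis that $1_\B$ is fixed by all three maps, which also guarantees that $\JB{\setminus}\{1_\B\}$ is closed under them and that $\JA$ maps into $\JA$.

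With these descriptions in hand, each axiom splits into two cases. In the \emph{pure case}, all the elements involved lie in $\JA$. In the \emph{mixed case}, at least one element lies outside $\JA$, and we reduce to the corresponding axiom in $\B_+$ via $\pi$.

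For (i), take $e\in\JA$. Then $I\circ e=(I_{\A_+}\circ_{\A_+}e)\cup(I_{\B_+}\circ_{\B_+}1_\B\cap \JB{\setminus}\{1_\B\})$. The first part is $\{f\in\JA\mid e\preccurlyeq_{\A_+}f\}$ by (i) for $\A_+$. The second part is $\{f\mid 1_\B\prec_{\B_+}f\}$ by (i) for $\B_+$. Together these are exactly the $\preccurlyeq$-successors of $e$. The case $e\notin\JA$ is a direct translation of (i) for $\B_+$.

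Conditions (ii), (iii), (vi), (vii) and (viii) are routine case checks. For (vi), note that $1_\B\preccurlyeq 1_\B$, so $I_{\B_+}$ is an up-set containing $1_\B$.

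Conditions (v) and (ix) involve one membership $g^\sim\in e\circ f$ together with the unary maps. In the mixed cases, apply $\pi$ to both sides and invoke (v) or (ix) for $\B_+$. In the pure case, take $e,f,g\in\JA$; then $g^\sim\in\JA$, so the extra summand $I_{\B_+}{\setminus}\{1_\B\}$ is irrelevant, and the condition reduces to (v) or (ix) for $\A_+$. One more case needs attention: $e,f\in\JA$ and $g\notin\JA$. Here we need $g^\sim\in I_{\B_+}{\setminus}\{1_\B\}$ iff $f^-\in g\circ e$, that is, iff $\pi(f)^-=1_\B\in g\circ_{\B_+}1_\B$. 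This is (v) for $\B_+$ applied to $(1_\B,1_\B,g)$, since $1_\B\circ_{\B_+}1_\B=I_{\B_+}$.

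The main obstacle will be associativity (iv), because the exceptional behaviour of $\circ$ on $\JA\times\JA$ breaks the clean collapse formula. The plan is to split according to how many of $e,f,g$ lie in $\JA$. If none or exactly one lie in $\JA$, or more generally if no inner product is computed purely inside $\JA$, both sides equal $\pi^{-1}$ of the $\B_+$-products by associativity in $\B_+$. When $\pi^{-1}(\{1_\B\})=\JA$ appears in an intermediate product, it contributes all of $\JA$, and we use $\JA\circ j=\pi^{-1}(1_\B\circ_{\B_+}\pi(j))$ for $j\notin\JA$. The delicate cases are those where two or three factors lie in $\JA$. If all three lie in $\JA$, the $\JA$-part of each side is handled by associativity in $\A_+$. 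The $\B$-part of $(e\circ f)\circ g$ is $(I_{\B_+}{\setminus}\{1_\B\})\circ g$ together with $I_{\B_+}{\setminus}\{1_\B\}$. This equals $\pi^{-1}(I_{\B_+}\circ_{\B_+}1_\B){\setminus}\JA$, which by Lemma~\ref{lem:IcircI=I} is $I_{\B_+}{\setminus}\{1_\B\}$, and the other side gives the same set symmetrically. When exactly two factors lie in $\JA$, we must check that $(I_{\A_+}$-type terms$)\circ h$ and $\JA\circ h$ agree. This uses the fact that $e\circ_{\A_+}f\neq\varnothing$, because $\top_\A$-like elements exist. Equivalently, it uses $1_\A\in\JA$ and (i) for $\A_+$, which make $\JA\circ_{\A_+}\JA=\JA$. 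Then $(e\circ f)\circ h=\pi^{-1}(I_{\B_+}\circ_{\B_+}\pi(h))=\pi^{-1}(1_\B\circ_{\B_+}1_\B\circ_{\B_+}\pi(h))=e\circ(f\circ h)$. This nonemptiness of $e\circ_{\A_+} f$ is where we expect to need care, so we will first try to derive it from (v) and (i) in $\A_+$.
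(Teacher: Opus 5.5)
The gap is in associativity (iv), and it cannot be closed under the stated hypotheses.

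\textbf{The two unsupported steps.}

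First, you say the nonemptiness of $e\circ_{\A_+}f$ can be derived from (i) and (v) of $\A_+$. It is false in general. Your claim that it is ``equivalent'' to $\JA\circ_{\A_+}\JA=\JA$ conflates a statement about a union with one about each pair: that identity does hold, via $1_\A$, but it says nothing about individual products.

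Second, the case with exactly one factor in $\JA$ is not automatic, as you assert. Suppose $e\in\JA$, $f,g\notin\JA$ and $1_\B\in f\circ_{\B_+}g$. Then $f\circ g\supseteq\JA$, so computing $e\circ(f\circ g)$ requires the pure product $e\circ\JA=(e\circ_{\A_+}\JA)\cup(I_{\B_+}{\setminus}\{1_\B\})$. Your rule $\JA\circ j=\pi^{-1}(1_\B\circ_{\B_+}\pi(j))$ only covers $\JA$ standing to the left of a factor outside $\JA$.

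What your case split actually requires is, for all $e,f\in\JA$: $e\circ_{\A_+}f\neq\varnothing$ and $e\circ_{\A_+}\JA=\JA\circ_{\A_+}e=\JA$. For finite $\A$ this says $e\cdot_\A f\neq\bot_\A$ and $e\cdot_\A\top_\A=\top_\A\cdot_\A e=\top_\A$. Neither follows from the hypotheses.

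\textbf{Counterexample.} Let $\A$ be the three-element MV-chain $\bot<c<\top$ with $1_\A=\top$ and $c\cdot_\A c=\bot$. This is a DqRA with $\neg={\sim}$. Let $\B=D^4_{2,2}$. All hypotheses of the theorem hold, and:
\begin{itemize}
\item $\JA=\{c,1_\A\}$, with $c\circ_{\A_+}c=\varnothing$ and $c\circ_{\A_+}1_\A=\{c\}$;
\item $I_{\B_+}=\{1_\B\}$;
\item in $\A_+[\B_+]$ one gets $c\circ 0_\B=\{0_\B\}$, $0_\B\circ 0_\B=\JA$ and $c\circ c=\varnothing$.
\end{itemize}
Hence
\[
(c\circ 0_\B)\circ 0_\B=\{c,1_\A\}\neq\{c\}=c\circ(0_\B\circ 0_\B),
\qquad
(c\circ c)\circ 0_\B=\varnothing\neq\{0_\B\}=c\circ(c\circ 0_\B).
\]

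\textbf{Consequences.} The theorem as stated is false, so no proof along your lines, or any other, can succeed without an extra hypothesis. The condition above suffices: under it your case split for (iv) does go through. It holds, for example, when $1_\A$ is the top of $(\JA,\preccurlyeq_{\A_+})$, and it holds in both examples of the paper.

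The paper's own proof of (iv) has the same hole. The case $i\in\JA$, $j,k,m\in\JB{\setminus}\{1_\B\}$, $\ell\in\JA$ is among those ``left to the reader'', and it fails here with $i=c$, $j=k=m=0_\B$, $\ell=1_\A$.

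\textbf{Minor slip.} Your description of $\preccurlyeq$ should read ``$\{i,j\}\not\subseteq\JA$ and $\pi(i)\preccurlyeq_{\B_+}\pi(j)$''. Requiring $\pi(i)\neq\pi(j)$ loses reflexivity on $\JB{\setminus}\{1_\B\}$.
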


\begin{proof}
We outline how conditions (i)--(ix) of Definition~\ref{def:frame} can be shown to hold.

(i)  Let $i, j \in A_+[B_+]$.  We only outline the proof of $j\in I\circ i$ iff $i\preccurlyeq j$.  The proof  of $j\in i\circ I$ iff $i\preccurlyeq j$ is similar.  Firstly, assume $i \preccurlyeq j$. We have to find some $k \in I$ such that $j \in k \circ i$.

If  $i, j \in A_+$, then $i \preccurlyeq_{\mathbf{A}_+} j$ and it follows from the fact that $\mathbf{A}_+$ satisfies (i) of Definition~\ref{def:frame}, that there exists some $k \in I_{\A_+}\subseteq I$ such that $j \in k \circ_{\A_+} i$.
If $i,j \in B_+ {\setminus} \{1_\B\}$, it follows easily that $j\in 1_\A\circ i$.  Then, since $1_\A\in I_{\A_+}\subseteq I$, we let $k=1_\A$.

If $j \in A_+$ and $i \in B_+{\setminus} \{1_\B\}$, we have $i \prec_{\B_+} 1_\B$, and hence $1_\B <_B 1_\B \cdot_\B i$. This implies $1_\B \in 1_\B \circ_{\B_+} i$, which means $1_\A \circ i = ((1_\B \circ_{\B_+} i){\setminus} \{1_\B\} \cup A_+$, and so $j \in 1_\A \circ i$. Moreover, $1_\A \in I$, so again we take $k = 1_\A$.

If $i \in A_+$ and $j \in B_+{\setminus} \{1_\B\}$, then it must be the case that $1_\B \prec_{\B_+} j$. Hence, $j <_\B 1_\B$, which means $j \in I_{\B_+} {\setminus}\{1_\B\} \subseteq I$. Now $j \leqslant_B j \cdot_\B 1_\B$, so $j \in j \circ_{\B_+} 1_\B$. Since $j <_\B 1_{\B}$, we have $1_\B \notin j \circ_{\B_+} 1_\B$, and thus $j \circ i = j \circ_{\B_+} 1_\B$. Therefore, $j \in j \circ i$, and in this case we take $k = j$. 



Conversely, assume $j \in I \circ i$. Then there is some $k \in I$ such that $j \in k \circ i$. If $i, j \in A_+$ and $k \in I_{\B_+}{\setminus}\{1_\B\} \subseteq B_+{\setminus} \{1_\B\}$, we have  $1_\B \in k \circ_{\B_+} 1_{\B}$, and so, $1_\B \leqslant_\B k \cdot_\B 1_\B = k <_\B 1_B$, which is a contradiction. 
The remaining seven cases each fall in one of the following categories.

If $i, j, k \in A_+$,  $k\in A_+$ and $i, j \in B_{+}{\setminus}\{1_\B\}$, or  
$i, j, k \in B_+{\setminus}\{1_\B\}$, it follows from the definition of $\circ$ and the fact that $\A_+$ and 
$\B_+$ satisfy (i) of Definition~\ref{def:frame} that $i \preccurlyeq_{\A_+} j$ or 
$i \preccurlyeq_{\B_+} j$, respectively. Hence $i \preccurlyeq j$.

If  $j, k \in A_+$ and $i \in B_+{\setminus} \{1_\B\}$, then $1_\B \in 1_\B \circ_{\B_+} i$, which means $i \prec_{\B_+} 1_\B$, and so $i \preccurlyeq j$.

If $j \in A_+$ and $i, k \in B_+{\setminus} \{1_\B\}$, then $1_\B \in k \circ_{\B_+} i$, so we can use the fact that $\B_+$ satisfies (i) of Definition~\ref{def:frame} to get $i \prec_{\B_+} 1_\B$, which implies $i \preccurlyeq j$. 

If $i, k \in A_+$ and $j \in B_+{\setminus}\{1_\B\}$, then it must be the case that $j \in I_{\B_+}{\setminus}\{1_\B\}$. Hence, $j <_\B 1_\B$, and so $1_B \prec_{\B_+} j$. This shows $i \preccurlyeq j$.

Finally, if $i \in A_+$ and $j, k \in B_+{\setminus} \{1_{\B}\}$, then $k \in I_{\B_+}$ and $j \in k\circ_{\B_+} 1_{\B}$. Since $\B_+$ satisfies (i) of Definition~\ref{def:frame}, it follows that $1_\B \preccurlyeq_{\B_+} j \neq 1_\B$, so $1_\B \prec_{\B_+} j$, and hence $i \preccurlyeq j$.

(ii)
Let $i \in I$
and assume $i \preccurlyeq j$. If $i \in I_{\B_+}{\setminus} \{1_\B\}$ and $j \in A_+$, then $1_\B \prec_{\B_+} i$ and $i \prec_{\B_+} 1_\B$, which is impossible. 
If $i \in I_{\A_+}$ and $j \in A_+$, or $i \in I_{\B_+}{\setminus} \{1_\B\}$ and $j \in B_+{\setminus}\{1_\B\}$, then $j\in I_{\A_+}$ or  $j \in I_{\B_+}{\setminus}\{1_\B\} \subseteq I$, respectively, since $\mathbf{A}_+$ and $\B_+$ satisfy (ii) of Definition~\ref{def:frame}.


If $i \in I_{\A_+}$ and $j \in B_+{\setminus}\{1_\B\}$, then we have $1_\B \prec_{\B_+} j$. Hence, $j <_\B 1_\B$, which implies $j \in I_{\B_+}{\setminus}\{1_\B\} \subseteq I$.



(iii) 
Assume 
$i \preccurlyeq j$ and $i \in k \circ \ell$. If $j, k, \ell \in A_+$ and $i \in B_+{\setminus}\{1_\B\}$ we have $1_\B <_{\B_+} i <_{\B_+} 1_\B$, which is a contradiction. The remaining fifteen cases can be shown in one of the following ways.

If $i,j, k, \ell \in A_+$, 
$i,j,k,\ell\in B_+{\setminus}\{1_\B\}$, 
$k \in A_+$ and $i, j, \ell \in B_+{\setminus}\{1_\B\}$,
or 
$i, j, k \in B_+{\setminus}\{1_\B\}$ and $\ell \in A_+$, it follows
directly from the fact that $\A_+$ and $\B_+$ satisfy (iii) of Definition~\ref{def:frame} that $j \in k \circ_{\A_+} \ell \subseteq k \circ \ell$, or
$j\in k\circ_{\B_+} \ell\subseteq k\circ \ell$, or
$j\in 1_\B\circ_{\B_+}\ell\subseteq k\circ \ell$
or $j\in k\circ_{\B_+}1_\B\subseteq k\circ \ell$, respectively.

If $i, j, k \in A_+$ and $\ell \in B_+{\setminus} \{1_\B\}$, $i, j \in A_+$ and $k, \ell \in B_+{\setminus}\{1_\B\}$, 
$i, k \in B_+{\setminus}\{1_\B\}$ and $j, \ell \in  A_+$, 
$j, k \in A_+$ and $i, \ell \in B_+{\setminus}\{1_\B\}$,
$j \in A_+$ and $i, k, \ell \in B_+{\setminus}\{1_\B\}$,
or 
$i, j, \ell \in A_+$ and $k \in B_+{\setminus}\{1_\B\}$, we have 
$A_+\subseteq k \circ \ell$.
Hence, since $j \in A_+$ in each of these cases, we have $j \in k \circ \ell$. 

If $i \in A_+$ and $j, k, \ell \in B_+{\setminus} \{1_\B\}$,
$i, k \in A_+$ and $j, \ell \in B_+{\setminus}\{1_\B\}$, or 
$i,\ell \in A_+$ and $j, k \in B_+{\setminus}\{1_\B\}$, then $1_\B \in k \circ_{\B_+} \ell$, $1_{\B} \in 1_\B \circ_{\B_+} \ell$, or $1_\B\in k \circ_{\B_+} 1_\B$, respectively.
Moreover, in all three cases, $1_\B\prec_{\B_+} j$
since $i\preccurlyeq j$. Hence, using the fact that 
$\B_+$ satisfies (iii) of Definition~\ref{def:frame}, it follows that $j \in k\circ \ell$ in each case.

If $k, \ell \in A_+$ and $i, j \in B_+{\setminus}\{1_\B\}$
or $i, k, \ell \in A_+$ and $j \in B_+{\setminus}\{1_\B\}$,
then $j <_{\B_+} 1_{\B}$, which implies that 
$j\in I_{\B_+}\backslash\{1_\B\}$.  Hence, $j\in k\circ \ell$.

(iv) We only prove one inclusion since the other inclusion is similar. Let $\ell \in (i\circ j) \circ k$. Then there exists some $m \in A_+[B_+]$ such that $m \in i \circ j$ and $\ell \in m \circ k$. Depending on whether each of $i, j, k, \ell$ and $m$ belongs to $A_+$ or $B_+$, there are thirty-two cases to consider. We highlight a few of the more involved cases here.  The other cases follow similar arguments and are left to the reader.
 

Assume $i, j, k, \ell \in A_+$ and $m \in B_+{\setminus}\{1_\B\}$. Then $m \in I_{\B_+}{\setminus}\{1_\B\}$ and $1_\B \in m \circ_{\B_+} 1_\B$. Since $j, k \in A_+$, we have $j \circ k = (j \circ_{\A_+} k) \cup (I_{\B_+}{\setminus}\{1_\B\})$, so $m \in j \circ k$. Now $1_\B \in m \circ_{\B_+} 1_\B$ implies $1_\B \leqslant_\B m \cdot_\B 1_\B = 1_\B \cdot_\B m$, which means $1_\B \in 1_\B \circ_{\B_+} m$. Hence, $i \circ m = ((1_\B \circ_{\B_+} m){\setminus} \{1_\B\}) \cup A_+$, and so $\ell \in i \circ m$. We thus obtain $\ell \in i \circ (j \circ k)$.  

Next, assume that $i, j, \ell \in A_+$ and $k, m \in B_{+}{\setminus}\{1_\B\}$. Then $m <_\B 1_\B$ and $1_\B \in m \circ_{\B_+} k$.  Since $m <_\B 1_\B$, we get $m \cdot_\B k \leqslant_\B 1_\B \cdot_\B k$. From $1_\B \in m \circ_{\B_+} k$, we get $1_\B \leqslant_\B m \cdot_\B k$, and so $1_\B \leqslant_\B m \cdot_\B k$, which implies $1_\B \in 1_\B \circ_{\B_+} k$. It follows that $i \circ k = ((1_\B \circ_{\B_+} k){\setminus}\{1_\B\}) \cup A_+$, and hence $\ell \in i \circ k$. Since $1_\B \in 1_\B \circ_{\B_+} k$, we also know $j \circ k = ((1_\B \circ_{\B_+} k){\setminus}\{1_\B\}) \cup A_+$. Now $k \leqslant_\B k = 1_\B \cdot_\B k$ and $k \neq 1_\B$, so $k \in (1_\B \circ_{\B_+} k){\setminus}\{1_\B\} \subseteq j \circ k$. This shows $\ell \in i \circ (j\circ k)$.

Now assume that $k, m \in A_+$ and $i, j, \ell \in B_+{\setminus} \{1_\B\}$. Here $1_\B \in i \circ_{\B_+} j$ and $\ell <_{\B} 1_\B$. The first part implies $1_\B \leqslant_\B i \cdot_\B j$, and so $\ell <_\B i \cdot_\B j$. This shows $\ell \in ((i \circ_{\B_+} j){\setminus} \{1_\B\}) \cup A_+ =i \circ j $. Now $j \leqslant_\B j \cdot_\B 1_\B$,  so $j \in j \circ_{\B_+} 1_\B$. Hence, since $j \neq 1_\B$, we get $j \in (j \circ_{\B_+} 1_{\B}){\setminus}\{1_\B\} \subseteq j \circ k$. It follows that $\ell \in i \circ (j \circ k)$. 

Considering one final case, assume that $\ell \in A_+$ and $i, j, k, m \in B_+ {\setminus} \{1_\B\}$. Then we have $m \in i \circ_{\B_+} j$ and $1_\B \in m \circ_{\B_+} k$, which means $1_\B \in (i \circ_{\B_+} j) \circ_{\B_+} k = i \circ_{\B_+} (j \circ_{\B_+} k)$. Hence, there exists some $n \in B_+ {\setminus} \{1_\B\}$ such that $n \in j \circ_{\B_+} k$ and $1_\B \in i \circ_{\B_+} n$. The second part implies that $i \circ n = ((i \circ_{\B_+} n) {\setminus} \{1_\B\}) \cup A_+$, so $\ell \in i \circ n$. Since $n \in j \circ_{\B_+} k$ and $n \neq 1_\B$, it must be the case that $n \in j \circ k$. Consequently, $\ell \in i \circ (j \circ k)$. 



(v) 
To prove the forward implication, assume that $k^{\sim} \in i \circ j$. Then there are eight cases to consider.

If $i, j, k \in A_+$ or $i,j,k \in B_+{\setminus}\{1_\B\}$, then, since $\A_+$ and $\B_+$ both satisfy (v) of Definition~\ref{def:frame}, we have
$j^{-}=j^{{-}_{\A_+}}\in k\circ_{\A_+} i$ or $j^{-}=j^{{-}_{\B_+}}\in k\circ_{\B_+} i$, respectively. In the first subcase, we get $j^- \in k\circ i$. In the second subcase, since $j \neq 1_\B$ and $1_\B^{{\sim}_{\B_+}} = 1_\B$, we have $j^{-_{\B_+}} \neq 1_\B$, and so  $j^{-}=j^{{-}_{\B_+}}\in (k\circ_{\B_+} i){\setminus}\{1_\B\} \subseteq k\circ i$.
For the three cases that have $j\in A_+$, we show that $A_+\subseteq k\circ i$
and therefore $j^{-}=j^{{-}_{\A_+}}\in k\circ i$.  We consider the case with $i, j \in A_+$ and $k \in B_+{\setminus} \{1_\B\}$ and leave the other two cases for the reader to check. In this case 
$k^{\sim_{\B_+}} <_\B 1_\B$. Hence, $1_\B \prec_{\B_+} k^{\sim_{\B_+}}$, and so, since $1_\B^{-_{\B_+}} = 1_\B$, we get $k \prec_{\B_+} 1_\B$. Consequently, $1_\B <_\B k = k \cdot_{\B} 1_\B$, which means $1_\B \in k \circ_{\B_+} 1_\B$. It follows that $k \circ i = ((k \circ_{\B_+} 1_\B){\setminus} \{1_\B\}) \cup A_+$, and therefore $j^- = j^{-_{\A_+}} \in k \circ i$. 

In the remaining three cases $j\in B_+{\setminus}\{1_\B\}$. In these cases, we use the fact that $\B_+$ satisfies (v) of Definition~\ref{def:frame} to show that 
$j^- = j^{{-}_{\B_+}}\in k\circ i$. As an example, consider the case with $k \in A_+$ and $i, j \in B_+{\setminus}\{1_\B\}$. Then $1_\B \in i\circ_{\B_+} j$ so that $i\circ j = ((i\circ_{\B_+} j) {\setminus} \{1_\B\}) \cup A_+$. Since $1_\B^{\sim_{\B_+}} = 1_\B$, we have $1^{\sim_{\B_+}}_\B \in i\circ_{\B_+} j$. Applying Definition~\ref{def:frame}(v) to this gives $j^{-_{\B_+}} \in 1_\B \circ_{\B_+} i$. 
Now $j\neq 1_\B$ implies $j^{-_{\B_+}}\neq 1_\B$, so 
$j^- = j^{-_{\B_+}} \in (1_\B \circ_{\B_+} i){\setminus} \{1_\B\} \subseteq k \circ i$.

Conversely, assume $j^- \in k \circ i$. Again, there are eight cases to consider.

If $i, j, k \in A_+$ or $i, j, k \in B_+{\setminus}\{1_\B\}$, the result follows from the fact that $\A_+$ and $\B_+$, respectively, satisfy (v) of Definition~\ref{def:frame}.  If $k\in A_+$,  we show that $A_+\subseteq i\circ j$,
which implies that $k^\sim=k^{\sim_{\A_+}}\in i\circ j$.  In the last three cases we have $k\in B_+{\setminus}\{1_\B\}$. Here we use the fact that $\B_+$ satisfies (v) of Definition~\ref{def:frame} and $k^{\sim_{\B_+}}\neq 1_\B$ to show that 
$k^{\sim} = k^{\sim_{\B_+}}\in i\circ j$. 

Items (vi) and (vii) follow directly from the definition of $^{\neg}$
and the fact that both $\A_+$ and $\B_+$ are frames,
so $^{\neg_{\A_+}}$ and $^{\neg_{\B_+}}$ satisfy these
conditions.

(viii) Let $i,j\in \JA[\JB]$ and
assume $i\preccurlyeq j$.  If $i,j\in A_+$ or $i,j\in B_+{\setminus}\{1_\B\}$, respectively, it follows that $j^\neg\preccurlyeq i^\neg$
since both $\A_+$ and $\B_+$ satisfy Definition~\ref{def:frame}(viii).
If $i\in A_+$ and $j\in B_+{\setminus}\{1_\B\}$, then $i^\neg=i^{\neg_{\A_+}}\in A_+$,
$j^\neg=j^{\neg_{\B_+}}\in B_+{\setminus}\{1_\B\}$ and
$i\preccurlyeq j$ implies $1_\B\prec_{\B_+} j$.  Since $^{\neg_{\B_+}}$
is order-reversing, we have $j^{\neg_{\B_+}}\prec 1^{\neg_{\B_+}}_\B = 1_\B$, and hence 
$j^{\neg} =j^{\neg_{\B_+}}\preccurlyeq i^{\neg_{\A_+}} = i^{\neg}$. The case
with $j\in A_+$ and $i\in B_+{\setminus}\{1_\B\}$ follows similarly.

(ix) Let $i,j,k\in \JA[\JB]$. 
There are eight cases to consider.

First, if $i, j, k \in A_+$, then $k^{-} \in i\circ j$ iff $k^{{-}_{\A_+}}\in i\circ_{\A_+} j$ iff
$k^{\neg_{\A_+}}\in j^{\sim_{\A_+}\neg_{\A_+}}\circ_{\A_+} i^{\sim_{\A_+}\neg_{\A_+}}$ iff $k^{\neg} \in j^{\sim\neg}\circ i^{\sim\neg}$  since $\A_+$ satisfies (ix) of Definition~\ref{def:frame}. 

If $i,j,k \in B_+{\setminus}\{1_\B\}$, then $k^{-_{\B_+}}, k^{\neg_{\B_+}} \neq 1_\B$, and hence $k^{-}=k^{{-}_{\B_+}}\in i\circ j$ iff $k^{{-}_{\B_+}}\in  i\circ_{\B_+} j$ iff $k^{\neg}=k^{\neg_{\B_+}}\in j^{\sim_{\B_+}\neg_{\B_+}}\circ_{\B_+} i^{\sim_{\B_+}\neg_{\B_+}}$ iff $k^{\neg} \in j^{\sim\neg}\circ i^{\sim\neg}$ since $\B_+$ satisfies (ix) of Definition~\ref{def:frame}.

If $i, k \in A_+$ and $j \in B_+{\setminus}\{1_\B\}$, then $k^-= k^{-_{\A_+}} \in i \circ j$ iff $1_\B \in 1_\B \circ_{\B_+} j$ iff $1^{-_{\B_+}} \in 1_\B \circ_{\B_+} j$ iff $1_\B^{\neg_{\B_+}}\in j^{{\sim}_{\B_+}\neg_{\B_+}}\circ_{\B_+} 1_\B^{{\sim}_{\B_+}\neg_{\B_+}}$ iff $1_\B\in j^{{\sim}_{\B_+}\neg_{\B_+}}\circ_{\B_+} 1_\B$ iff $k^{\neg_{\A_+}} \in j^{{\sim}_{\B_+}\neg_{\B_+}}\circ i^{{\sim}_{\A_+}\neg_{\A_+}}$ iff $k^{\neg} \in j^{\sim\neg}\circ i^{\sim\neg}$.

The other two mixed cases involving $k \in A_+$ can be proved similarly. 
In the last three cases we have $k\in \B_+{\setminus}\{1_\B\}$. As an example, we consider the case with $i, j \in A_+$ 
Here we have $k^- = k^{-_{\B_+}} \in i \circ j$ iff $k^{-_{\B_+}} <_\B 1_\B$ iff $1_\B \prec_{\B_+} k^{-_{\B_+}}$ iff $k \prec_{\B_+} 1_\B^{\sim_{\B_+}} = 1_\B$ iff $1_\B = 1_\B^{\neg_{\B_+}} \prec_{\B_+} k^{\neg_{\B_+}}$ iff $k^{\neg_{\B_+}} <_\B 1_\B$ iff $k^{\neg}= k^{\neg_{\B_+}} \in j^{\sim_{\A_+}\neg_{\A_+}}\circ i^{{\sim}_{\A_+}\neg_{\A_+}} = j^{\sim\neg}\circ i^{{\sim}\neg}$.  
\end{proof}
The structure $\mathbf{A}_+[\mathbf{B}_+]=\left(A_+[B_+],I,\preccurlyeq,\circ,^{\sim},^{-},^{\neg}\right)$ is called the {\em wreath product} of $\mathbf{A}_+$ and $\mathbf{B}_+$.

\begin{example}\label{ex:wreath_product_example_1}
Consider the  dual frames $\A_+$ and $\B_+$ of the DqRAs $\A$ and $\B$  in Figure~\ref{fig:small_examples_DqRAs}. 
Both $\A_+$ and $\B_+$ are depicted 
in Figure~\ref{fig:dual-frames}.
Since $1_\A \in A_+$, $1_\B \in B_+$ and $1_\B^{-_{\B_+}} = 1^{\sim_{\B_+}}_\B = 1^{\neg_{\B_+}}_\B = 1_\B$, it follows from Theorem~\ref{thm:wreat_product_DqRA-frames} that the wreath product $\A_+[\B_+]$ is a frame. The wreath product of $\A_+$ and $\B_+$ is shown in Figure~\ref{fig:wreath_product_example_1}. As before, in the tables for $\circ$, a sequence of elements such as $1_\A\top_\A$ denotes the set $\{1_\A, \top_\A\}$. Since $I_{\B_+}{\setminus}\{1_\B\} = \varnothing$, we have $I = \{1_\A\}$.  
%


\begin{figure}
\centering
\begin{tikzpicture}[scale=0.8]
\footnotesize
\node[draw,circle,inner sep=1.75pt] (top) at (0,0) {};
\node[draw,circle,inner sep=1.75pt] (id) at (0,1) {};
\node[draw,circle,inner sep=1.75pt] (0) at (1.2,0.5) {};
\draw[order] (top)--(id);
\node[label,anchor=east,xshift=1pt] at (id) {$1_\A$};
\node[label,anchor=east,xshift=1pt] at (top) {$\top_\A$};
\node[label,anchor=west,xshift=-1pt] at (0) {$0_\B$};
\node[label,anchor=north,yshift=-10pt] at (top) {$\A_+[\B_+]$};
\node[anchor=west] at (2.7,0.2){
$\begin{array}{|c|ccc||c|c|}\hline
\circ & 0_\B\ &1_\A& \top_\A &^\sim&^\neg\\\hline
0_\B & 1_\A\top_\A & 0_\B & 0_\B & 0_\B & 0_\B	\\
1_\A & 0_\B & 1_\A & 1_\A\top_\A  & \top_\A & \top_\A	\\
\top_\A & 0_\B  & 1_\A\top_\A & 1_\A\top_\A & 1_\A & 1_\A \\\hline
\end{array}$
};
\normalsize
\end{tikzpicture}
\caption{The wreath product $\A_{+}[\B_+]$ of $\A_+$ and $\B_+$ in Figure~\ref{fig:dual-frames}.}\label{fig:wreath_product_example_1}
\end{figure}

\end{example}

\begin{example}\label{ex:wreath_product_example_2}
Consider the dual frames $\C_+$ 
and $\D_+$ of the DqRAs $\C$ and $\D$ in Figure~\ref{fig:small_examples_DqRAs}. Both $\C_+$ and $\D_+$ are also shown in Figure~\ref{fig:dual-frames}. 
Since $1_\C \in C_+$, $1_\D\in D_+$ and $1_\D^{-_{\D_+}} = 1_\D^{\sim_{\D_+}} = 1_\D^{\neg_{\D_+}} = 1_\D$, we can apply Theorem~\ref{thm:wreat_product_DqRA-frames} to conclude that $\C_+[\D_+]$ is a frame. In this case $I = \{1_\C, 0_\D\}$. See Figure~\ref{fig:wreath_product_example_2} for more details. 

\begin{figure}
\centering
\begin{tikzpicture}[scale=0.8]
\footnotesize
\node[draw,circle,inner sep=2pt] (top) at (1,0) {};
\node[draw,circle,inner sep=2pt] (0) at (1,2) {};
\node[draw,circle,inner sep=2pt] (1) at (0,1) {};
\node[draw,circle,inner sep=2pt] (a) at (2,1) {};
\draw[order] (top)--(1);
\draw[order] (top)--(a);
\draw[order] (1)--(0);
\draw[order] (a)--(0);
\node[label,anchor=north] at (top) {$\top_\D$};
\node[label,anchor=south,xshift=1pt] at (0) {$0_\D$};
\node[label,anchor=east,xshift=1pt] at (1) {$1_\C$};
\node[label,anchor=west,xshift=-1pt] at (a) {$a$};
\node[label,anchor=north,yshift=-20pt] at (top) {$\C_+[\D_+]$};
\node[anchor=west] at (3.3,0.65){
$\begin{array}{|c|cccc||c|c|}\hline
\circ & 0_\D\ &1_\C& a & \top_\D &^{\sim}& ^{\neg}\\\hline
0_\D & 0_\D	& 0_\D & 0_\D & 0_\D1_\C a\top_\D & \top_\D &\top_\D	\\
1_\C& 0_\D & 0_\D1_\C & 0_\D a & 0_\D1_\C a\top_\D& a& a	\\
a & 0_\D & 0_\D a & 0_\D 1_\C & 0_\D1_\C a\top_\D & 1_\C & 1_\C \\
\top_\D & 0_\D1_\C a\top_\D &0_\D1_\C a\top_\D & 0_\D1_\C a\top_\D &  0_\D1_\C a\top_\D &0_\D&0_\D\\\hline
\end{array}$
};
\normalsize
\end{tikzpicture}
\caption{The wreath product $\mathbf{C}_{+}[\mathbf{D}_+]$ of $\mathbf C_+$ and $\mathbf D_+$ in Figure~\ref{fig:dual-frames}.}\label{fig:wreath_product_example_2}
\end{figure}

\end{example}

\begin{remark}
In \cite{CMR2026}, we introduced a nested sum construction $\mathbf K[\mathbf L]$ as a method for constructing a new qRA from two qRAs $\mathbf K$ and $\mathbf L$. The construction is subject to several restrictions on the choice of $\mathbf{K}$, which are described in detail in that paper. In  particular, $\mathbf{K}$ is required to be odd (i.e., $-1={\sim}1 = \neg 1 = 1$) and $1$ must be a join-irreducible of $\mathbf{K}$.
Now let $\A$ be a DqRA whose identity element $1$ is a join-irreducible.
If $\mathbf{S}_{n}$ is an even Sugihara chain, then its dual frame satisfies the requirements outlined above, and so $\A_+[(\mathbf{S}_n)_+]$ is a frame by Theorem~\ref{thm:wreat_product_DqRA-frames}. It is also straightforward to check that $\mathbf{S}_{n-1}$ satisfies the requirements imposed on $\mathbf{K}$ in \cite{CMR2026}, so $S_{n-1}[\mathbf A]$ is a DqRA. Moreover, the reader can verify that $\mathbf S_{n-1}[\mathbf A]$ is isomorphic to the complex algebra of $\A_+[(\mathbf{S}_n)_+]$. 

\end{remark}

\section{Frame networks and representability}\label{sec:networks_representations}

Networks provide the underlying structures on which representation games for relation algebras~\cite{HH2002book} and weakening relation algebras~\cite{JS23} are played. In these games, players successively extend labelled networks while preserving consistency conditions that reflect the algebraic operations, and they  
are used to characterise representability for both relation algebras and weakening relation algebras in terms of a winning strategy for the builder player.
In this section, we define a network for a frame and establish a correspondence between the representability of a finite DqRA and the existence of consistent saturated frame networks for its dual frame. This will enable us to show that the complex algebra of the wreath product of the dual frames of two finite normally representable DqRAs satisfying additional conditions is itself representable.

\begin{definition}\label{def:network_for_frame}
A \emph{network} \emph{for a frame} $\mathbf{F} = \left(F, I,\preccurlyeq_F, \circ, {^{\sim}}, {^{-}}, ^\neg\right)$ is a tuple $\mathcal{N} = (N, \lambda, \alpha, \beta)$ where $N$ is a set of nodes, $\lambda: N^2 \to F$ is a \emph{labelling function}, and $\alpha, \beta: N \rightharpoonup N$ are partial functions \footnote{We require $\alpha$ and $\beta$ to be partial functions so that, in a representation game, the builder player can extend these maps incrementally as the game progresses. This allows her to determine their values in response to the moves made during each round, rather than having to specify their values in advance at the outset of the game. However, throughout this paper, $\alpha$ and $\beta$ will always be total functions.}.
\end{definition}

\begin{definition}\label{def:consistent_network_for_frame}
A network $\mathcal{N} = (N, \lambda, \alpha, \beta)$ for a frame $\mathbf{F} = \left(F, I,\preccurlyeq_F, \circ, {^{\sim}}, {^{-}}, ^\neg\right)$ is said to be  \emph{consistent} if the following conditions hold for all $x, y, z \in N$:
\begin{enumerate}[label=\textup{(\roman*)}]
\item $\lambda(x, x) \in I$ 
\item $\lambda(x, z) \in \lambda(x,y) \circ \lambda(y, z)$
\item If $\lambda(x,y) \in I$ and $\lambda(y, x) \in I$, then $x = y$.
\item If $\alpha(x)$ and $\alpha(y)$ are defined, then  $\lambda(x, y) \in I$ iff $\lambda(\alpha(x), \alpha(y)) \in I$. 
\item If $\beta(x)$ and $\beta(y)$ are defined, then $\lambda(x, y) \in I$ iff $\lambda(\beta(y), \beta(x)) \in I$.
\item If $\beta(\beta(x))$ is defined, then $\beta(\beta(x))=x$.
\item If $\alpha( \beta(\alpha(x)))$ is defined, then $\alpha(\beta(\alpha(x))) = \beta(x)$.
\item If $\alpha(x)$ is defined, then $\lambda(y, \alpha(x)) = \lambda(x, y)^{\sim}$.
\item If $\beta(\alpha(x))$ and $\beta(y)$ are defined, then  $\lambda(\beta(\alpha(x)), \beta(y)) = \lambda(x, y)^{\neg}$.
\end{enumerate}
\end{definition}

We need to define the following refinement of $\circ$ which will be used to define a saturated network. Define an operation $\cmax : F \times F \to \mathcal{P}(F)$ by saying that $f \in g \cmax h$ if, and only if,

\begin{enumerate}
\item $f \in g \circ h$, and
\item for all $g', h' \in F$, if $g \preccurlyeq g'$, $h\preccurlyeq h'$ and $f \in g' \circ h'$, then $g'=g$ and $h'=h$.
\end{enumerate}

A network for a frame $\mathbf{F}$ is said to be \emph{saturated} if $\alpha$ and $\beta$ are both surjective total functions, and if $\lambda(x, y) \in f \cmax g$, then there exists some $z \in N$ such that $\lambda(x, z) = f$ and $\lambda(z, y) = g$. 

We say that $f \in F$ is \emph{network-realised} if there exists a consistent saturated network $\mathcal{N}^f = (N^f, \lambda^f, \alpha^f, \beta^{f})$ for $\mathbf{F}$ and nodes $u, v \in N^f$ such that $\lambda^f(u, v) = f$. We say that a frame $\mathbf F$ has the \emph{network realisation property} if every $f \in F$ is network realised.

In what follows, we will show that if a finite frame $\mathbf F$ has the network realisation property, then its complex algebra is representable. 
To prove this, we construct a full DqRA for each frame element $f \in F$ from its corresponding consistent saturated network $\mathcal{N}^f$, and then show that the complex algebra of the frame embeds into the product of all the full DqRAs constructed in this way, thereby establishing that it is isomorphic to a subalgebra of a product of full DqRAs.

Fix a finite frame $\mathbf{F} = \left(F, I,\preccurlyeq_F, \circ, {^{\sim}}, {^{-}}, ^\neg\right)$ and an $f \in F$. Let $\mathcal{N}^f = (N^f, \lambda^f, \alpha^f, \beta^f)$ be a consistent saturated network for $\mathbf{F}$ such that $\lambda^f(u, v) = f$ for some $(u, v) \in N^f\times N^f$. 

Define ${\leqslant_{N^f}} \subseteq N^f \times N^f$ as follows:
$$
x \leqslant_{N^f} y \quad \textnormal{ iff } \quad \lambda^{f}(x, y) \in I
$$
Then $(N^f, \leqslant_{N^f})$ is a poset. It is straightforward to check that $\leqslant_{N^f}$ is reflexive and antisymmetric.  To prove that it is transitive, we use Lemma~\ref{lem:IcircI=I}.

\begin{lemma}\label{lem:alpha_beta}
\begin{enumerate}[label=\textup{(\roman*)}]
\item The map $\alpha^f: N^f \to N^f$ is an order automorphism  of $(N^f, \leqslant_{N^f})$. 
\item The map $\beta^{f}: N^f \to N^f$ is a  self-inverse dual order automorphism of $(N^f, \leqslant_{N^f})$. 
\item $\alpha^f\mathbin{;}\beta^{f}\mathbin{;}\alpha^{f} = \beta^{f}$
\end{enumerate}
\end{lemma}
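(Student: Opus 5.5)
The plan is to read off each claim directly from the consistency conditions of Definition~\ref{def:consistent_network_for_frame} and the saturation requirement that $\alpha^f$ and $\beta^f$ are total surjections. Since $N^f$ is finite, we have $|N^f|\le|F|^{?}$-independent finiteness issues aside: the network for a finite frame need not have finitely many nodes a priori. So we will not use a counting argument for bijectivity. Instead we derive injectivity from the order conditions and condition (iii).

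For (i), we first show that $\alpha^f$ is order-preserving and order-reflecting. Condition (iv) gives, for all $x,y\in N^f$, that $\lambda^f(x,y)\in I$ iff $\lambda^f(\alpha^f(x),\alpha^f(y))\in I$. In other words, $x\leqslant_{N^f} y$ iff $\alpha^f(x)\leqslant_{N^f}\alpha^f(y)$. Injectivity follows: if $\alpha^f(x)=\alpha^f(y)$, then reflexivity of $\leqslant_{N^f}$ (from condition (i)) gives $\alpha^f(x)\leqslant_{N^f}\alpha^f(y)$ and its reverse. Reflecting these gives $x\leqslant_{N^f}y$ and $y\leqslant_{N^f}x$, so $x=y$ by condition (iii), i.e.\ by antisymmetry. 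Together with surjectivity from saturation, $\alpha^f$ is a bijection that preserves and reflects the order, hence an order automorphism.

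For (ii), condition (vi) gives $\beta^f(\beta^f(x))=x$ for all $x$, since $\beta^f$ is total. So $\beta^f$ is self-inverse and in particular a bijection. Condition (v) states that $x\leqslant_{N^f}y$ iff $\beta^f(y)\leqslant_{N^f}\beta^f(x)$. Thus $\beta^f$ is an order-reversing bijection whose inverse, namely $\beta^f$ itself, is also order-reversing. Hence $\beta^f$ is a self-inverse dual order automorphism.

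For (iii), condition (vii) gives $\alpha^f(\beta^f(\alpha^f(x)))=\beta^f(x)$ for every $x$, since all the maps are total. We then translate this into relational composition, using the convention that $\gamma$ denotes the graph $\{(x,\gamma(x))\}$. With that convention, $\alpha^f\mathbin{;}\beta^f\mathbin{;}\alpha^f$ is the graph of $x\mapsto\alpha^f(\beta^f(\alpha^f(x)))$, because the composition $R\mathbin{;}S$ applies $R$ first. So the identity in (iii) is exactly condition (vii).

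The only point requiring care is this bookkeeping of composition order, together with remembering that injectivity of $\alpha^f$ must come from order-reflection and antisymmetry rather than from surjectivity. Everything else is immediate from the definitions.
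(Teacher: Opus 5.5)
Your proposal is correct and follows the same route as the paper. Surjectivity comes from saturation and order preservation/reflection from condition (iv) for $\alpha^f$; conditions (v) and (vi) handle $\beta^f$; condition (vii) gives $\alpha^f\mathbin{;}\beta^f\mathbin{;}\alpha^f=\beta^f$. Your explicit injectivity argument (order-reflection plus antisymmetry via condition (iii)) and your check of the composition order fill in details the paper leaves implicit.

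The only blemish is the garbled opening sentence claiming that $N^f$ is finite. You immediately contradict that claim and never use it, so it should simply be deleted.
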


\begin{proof}
(i) Firstly, since $\mathcal{N}^f$ is saturated,  $\alpha^f$ is surjective. 
To see that $\alpha^{f}$ is an order automorphism of $\left(N^f, \leqslant_{N^f}\right)$, let $x, y\in N^f$. Since $\alpha^f$ is total, both $\alpha(x)$ and $\alpha(y)$ are defined and we can apply (iv) of Definition~\ref{def:consistent_network_for_frame}: 
$x \leqslant_{N^f} y \textnormal{ iff }  \lambda^f(x, y) \in I \textnormal{ iff } \lambda^f(\alpha^f(x), \alpha^f(y)) \in I 
\textnormal{ iff } \alpha^f(x)\leqslant_{N^f} \alpha^f(y)$.

(ii) The map $\beta^f$ is surjective since $\mathcal{N}^f$ is saturated. It follows from  Definition~\ref{def:consistent_network_for_frame}(vi) that $\beta^f$ is self-inverse. Using  
(v), we get:
$x \leqslant_{N^f} y \textnormal{ iff }  \lambda^f(x, y) \in I
\textnormal{ iff } \lambda^f(\beta^f(y), \beta^f(x)) \in I
\textnormal{ iff } 
\beta^f(y)\leqslant_{N^f} \beta^f(x)$. This shows that $\beta^f: N^f \to N^f$ is a dual order automorphism.  

(iii) This follows from (vii) of Definition~\ref{def:consistent_network_for_frame}.
\end{proof}

Recall that $N^f \times N^f$ can be partially ordered as follows: for $(u, v), (x, y) \in N^f \times N^f$,
\[
(x, y) \preccurlyeq_{N^f} (u, v) \quad \textnormal{ iff } \quad u \leqslant_{N^f} x \textnormal{ and } y \leqslant_{N^f} v.
\]
Since $\alpha^f$ and $\beta^f$ satisfy the conditions of Theorem~\ref{Theorem:Dq(E)} by Lemma~\ref{lem:alpha_beta}, we obtain the following result. 

\begin{theorem}\label{thm:new_DqRA}
The algebra $\mathbf{D}(N^f\times N^f, \preccurlyeq_{N^f}) =\left\langle \mathsf{Up}\left(N^f \times N^f, \preccurlyeq_{N^f}\right), \cap, \cup, \mathbin{;}, {\sim}, -, \neg, \leqslant_{N^f}\right\rangle$ is a full DqRA. 
\end{theorem}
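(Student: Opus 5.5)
The plan is to verify directly that the hypotheses of Theorem~\ref{Theorem:Dq(E)} hold. Take $X = N^f$ with the order $\leqslant_{N^f}$, the equivalence relation $E = N^f \times N^f$, $\alpha = \alpha^f$ and $\beta = \beta^f$. Since $E$ is the full relation $(N^f)^2$, the resulting algebra is full by definition, provided it is a DqRA at all. So the only work is checking the hypotheses.

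\emph{The poset.} First confirm that $(N^f, \leqslant_{N^f})$ is a poset.
\begin{itemize}
\item Reflexivity is Definition~\ref{def:consistent_network_for_frame}(i).
\item Antisymmetry is Definition~\ref{def:consistent_network_for_frame}(iii).
\item For transitivity, suppose $\lambda^f(x,y), \lambda^f(y,z) \in I$. Definition~\ref{def:consistent_network_for_frame}(ii) gives $\lambda^f(x,z) \in \lambda^f(x,y) \circ \lambda^f(y,z) \subseteq I \circ I$. By Lemma~\ref{lem:IcircI=I}, $I \circ I = I$, so $\lambda^f(x,z) \in I$.
\end{itemize}

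\emph{The inclusions.} The conditions ${\leqslant_{N^f}} \subseteq E$ and $\alpha^f, \beta^f \subseteq E$ hold trivially, because $E$ is all of $(N^f)^2$.

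\emph{The maps.} Lemma~\ref{lem:alpha_beta}(i) says $\alpha^f$ is an order automorphism. Lemma~\ref{lem:alpha_beta}(ii) says $\beta^f$ is a self-inverse dual order automorphism. Lemma~\ref{lem:alpha_beta}(iii) gives $\alpha^f \mathbin{;} \beta^f \mathbin{;} \alpha^f = \beta^f$.

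The one point to check is that the automorphisms are genuine bijections, since the lemma relies on saturation to get surjectivity. For $\beta^f$, injectivity comes from $\beta^f\beta^f = \mathrm{id}$. For $\alpha^f$, injectivity follows from order-reflection together with antisymmetry. Indeed, if $\alpha^f(x) = \alpha^f(y)$, then $\alpha^f(x) \leqslant_{N^f} \alpha^f(y)$ and the reverse both hold. Since $\alpha^f$ reflects the order, $x \leqslant_{N^f} y$ and $y \leqslant_{N^f} x$, so $x = y$. Thus $\alpha^f$ is a total bijection.

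Theorem~\ref{Theorem:Dq(E)} then yields that $\mathbf{D}(N^f \times N^f, \preccurlyeq_{N^f})$ is a DqRA. As $E = (N^f)^2$, it is a full DqRA.

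No step is a real obstacle. The only subtlety is the bijectivity of $\alpha^f$, which is handled by the order-reflection argument above.
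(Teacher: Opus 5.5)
Your proposal is correct and follows the paper's route: both verify the hypotheses of Theorem~\ref{Theorem:Dq(E)} with $E = N^f\times N^f$. The poset axioms come from Definition~\ref{def:consistent_network_for_frame}(i)--(iii) together with Lemma~\ref{lem:IcircI=I}, and the conditions on $\alpha^f$ and $\beta^f$ come from Lemma~\ref{lem:alpha_beta}. Your explicit injectivity argument for $\alpha^f$ (order-reflection plus antisymmetry) is a valid detail that the paper leaves implicit in Lemma~\ref{lem:alpha_beta}(i).
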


Now define $\psi^f: \mathsf{Up}\left(F, \preccurlyeq_F\right) \to \mathcal{P}\left(N^f\times N^f\right)$ by setting, for all $U \in \mathsf{Up}\left(F, \preccurlyeq_F\right)$,
\[
\psi^f(U) := \{(x, y) \mid \lambda^f(x, y) \in U\}.
\]

\begin{lemma}\label{lem:psi(U)_upset}
For all $U \in \mathsf{Up}\left(F, \preccurlyeq_F\right)$, we have $\psi^f(U) \in \mathsf{Up}\left(N^f \times N^f, \preccurlyeq_{N^f}\right)$. 
\end{lemma}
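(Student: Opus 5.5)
We must show that $\psi^f(U)$ is an up-set of $(N^f\times N^f,\preccurlyeq_{N^f})$. Fix $(x,y)\in\psi^f(U)$ and $(u,v)$ with $(x,y)\preccurlyeq_{N^f}(u,v)$, so $u\leqslant_{N^f}x$ and $y\leqslant_{N^f}v$. By the definition of $\leqslant_{N^f}$ this means $\lambda^f(u,x)\in I$ and $\lambda^f(y,v)\in I$. The goal is to show $\lambda^f(x,y)\preccurlyeq_F\lambda^f(u,v)$. Since $U$ is an up-set containing $\lambda^f(x,y)$, this gives $\lambda^f(u,v)\in U$, that is, $(u,v)\in\psi^f(U)$.

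The comparison is obtained in two steps using consistency condition (ii) of Definition~\ref{def:consistent_network_for_frame}.

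\textbf{Step 1.} Applying condition (ii) to the nodes $u,x,y$ gives $\lambda^f(u,y)\in\lambda^f(u,x)\circ\lambda^f(x,y)$. Since $\lambda^f(u,x)\in I$, we get $\lambda^f(u,y)\in I\circ\lambda^f(x,y)$. Definition~\ref{def:frame}(i) then yields $\lambda^f(x,y)\preccurlyeq_F\lambda^f(u,y)$.

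\textbf{Step 2.} Applying condition (ii) to the nodes $u,y,v$ gives $\lambda^f(u,v)\in\lambda^f(u,y)\circ\lambda^f(y,v)\subseteq\lambda^f(u,y)\circ I$. The other half of Definition~\ref{def:frame}(i) then gives $\lambda^f(u,y)\preccurlyeq_F\lambda^f(u,v)$.

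Transitivity of $\preccurlyeq_F$ combines the two steps into $\lambda^f(x,y)\preccurlyeq_F\lambda^f(u,v)$, which completes the argument. No step is a real obstacle. The only point needing care is orientation: the element of $I$ must be placed on the correct side of $\circ$ so that Definition~\ref{def:frame}(i) applies, on the left in Step 1 and on the right in Step 2. Totality of $\lambda^f$ on $N^f\times N^f$ guarantees that all the labels used are defined.
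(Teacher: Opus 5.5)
Your proof is correct and takes the paper's route: consistency condition (ii) applied to $u,x,y$ and then to $u,y,v$, combined with Definition~\ref{def:frame}(i). The only difference is cosmetic. The paper phrases each step as $\lambda^f(u,y)\in I\circ U=U$ and $\lambda^f(u,v)\in U\circ I=U$, whereas you extract the chain $\lambda^f(x,y)\preccurlyeq_F\lambda^f(u,y)\preccurlyeq_F\lambda^f(u,v)$ and use up-closure of $U$ once.
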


\begin{proof}
Let $(x, y)\in \psi_f(U)$ and assume $(x, y) \preccurlyeq_{N^f} (u, v)$. Since $(x, y)\in \psi^f(U)$, we get $\lambda_f(x, y) \in U$. We also have $u \leqslant_{N^f} x$ and $y\leqslant_{N^f} v$, and so $\lambda^f(u,x) \in I$ and $\lambda^f(y, v) \in I$. 
Using Definition~\ref{def:consistent_network_for_frame}(ii), 
we get $\lambda^f(u, y) \in \lambda^f(u, x) \circ \lambda^f(x, y)$. Consequently, $\lambda^f(u, y) \in I \circ U = U$. 
Similarly we have $\lambda^f(u, v) \in \lambda^f(u, y) \circ \lambda^f(y, v)$, which shows $\lambda^f(u, v) \in U \circ I = U$. It follows that $(u, v) \in \psi^f(U)$, as required. 
\end{proof}



In the next lemma we will show that $\psi^f$ is a homomorphism from the complex algebra $\mathbf{F}^+$ of $\mathbf{F}$ to the distributive quasi relation algebra $\mathbf{D}\left(N^f\times N^f, \preccurlyeq_{N^f}\right)$. We will need the fact that $\alpha^f$ has an inverse $(\alpha^f)^{-1}$. This follows from Lemma~\ref{lem:alpha_beta}(i). 


\begin{lemma}\label{lem:psi_f_homomorphism}
The map $\psi^f: \mathsf{Up}\left(F, \preccurlyeq_F\right) \to \mathsf{Up}\left(N^f\times N^f, \preccurlyeq_{N^f}\right)$ is a homomorphism from $\mathbf{F}^+$ to $\mathbf{D}(N^f\times N^f, \preccurlyeq_{N^f})$. 
\end{lemma}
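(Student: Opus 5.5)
We verify that $\psi^f$ preserves each operation of $\mathbf{F}^+$: $\cap$, $\cup$, the constant $I$, $\circ$, ${\sim}$, $-$ and $\neg$. Lemma~\ref{lem:psi(U)_upset} already shows that $\psi^f$ lands in $\mathsf{Up}(N^f\times N^f,\preccurlyeq_{N^f})$.

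\emph{Lattice operations and the identity.} Preservation of $\cap$ and $\cup$ is immediate, since $\psi^f$ is a preimage map under $\lambda^f$. For the identity, $\psi^f(I)=\{(x,y)\mid \lambda^f(x,y)\in I\}={\leqslant_{N^f}}$ by the definition of $\leqslant_{N^f}$.

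\emph{The negations.} For ${\sim}$ we need $\psi^f({\sim}U)={\sim}\psi^f(U)=\psi^f(U)^{c\smallsmile}\mathbin{;}\alpha^f$. We have $(x,y)\in \psi^f(U)^{c\smallsmile}\mathbin{;}\alpha^f$ iff $(y',x)\notin\psi^f(U)$, where $y=\alpha^f(y')$, i.e.\ $y'=(\alpha^f)^{-1}(y)$. By Definition~\ref{def:consistent_network_for_frame}(viii), $\lambda^f(x,\alpha^f(y'))=\lambda^f(y',x)^{\sim}$, so $\lambda^f(x,y)^{-}=\lambda^f(y',x)^{\sim-}=\lambda^f(y',x)$ by Lemma~\ref{lem:twiddle_minus_involutive}(i). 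Hence the condition is $\lambda^f(x,y)^-\notin U$, i.e.\ $\lambda^f(x,y)\in{\sim}U$. The case of $-U=\alpha^f\mathbin{;}\psi^f(U)^{c\smallsmile}$ is symmetric: apply (viii) with the roles arranged so that $\lambda^f(\alpha^f(x),y)$ is expressed via ${}^{\sim}$, then invert using $e^{-\sim}=e$. For $\neg$ we need $\neg\psi^f(U)=\alpha^f\mathbin{;}\beta^f\mathbin{;}\psi^f(U)^c\mathbin{;}\beta^f$. A pair $(x,y)$ lies here iff $(\beta^f(\alpha^f(x)),\beta^f(y))\notin\psi^f(U)$. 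By (ix), $\lambda^f(\beta^f(\alpha^f(x)),\beta^f(y))=\lambda^f(x,y)^{\neg}$, so the condition becomes $\lambda^f(x,y)^\neg\notin U$, i.e.\ $(x,y)\in\psi^f(\neg U)$. Totality of $\alpha^f,\beta^f$ guarantees that (viii) and (ix) always apply.

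\emph{Composition.} We must show $\psi^f(U\circ V)=\psi^f(U)\mathbin{;}\psi^f(V)$; this is the main step.

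For $\supseteq$: if $\lambda^f(x,z)\in U$ and $\lambda^f(z,y)\in V$, then consistency (ii) gives $\lambda^f(x,y)\in\lambda^f(x,z)\circ\lambda^f(z,y)\subseteq U\circ V$.

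For $\subseteq$: suppose $\lambda^f(x,y)=e\in g\circ h$ with $g\in U$ and $h\in V$. We want a witness node, but saturation only provides one for $e\in g\cmax h$. We therefore pass to a maximal pair. Since $F$ is finite, among all pairs $(g',h')$ with $g\preccurlyeq g'$, $h\preccurlyeq h'$ and $e\in g'\circ h'$, choose one that is maximal in the product order. Then $e\in g'\cmax h'$: any further enlargement $(g'',h'')$ with $e\in g''\circ h''$ is again such a pair, so by maximality it equals $(g',h')$. Saturation now yields $z$ with $\lambda^f(x,z)=g'$ and $\lambda^f(z,y)=h'$. Because $U$ and $V$ are up-sets, $g'\in U$ and $h'\in V$, so $(x,y)\in\psi^f(U)\mathbin{;}\psi^f(V)$.

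The finiteness of $F$, which ensures that maximal pairs exist, is the key point here. The two negation computations require only careful bookkeeping of converses and of the order of composition.
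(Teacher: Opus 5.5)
Your proposal is correct and follows the paper's proof step by step: preimages for $\cap$ and $\cup$, $\psi^f(I)={\leqslant_{N^f}}$, consistency condition (ii) plus a maximal pair and saturation for $\circ$, and conditions (viii) and (ix) for the negations. You also spell out why finiteness of $F$ yields a pair $(g',h')$ with $\lambda^f(x,y)\in g'\cmax h'$, which the paper leaves implicit. There is one small slip in the case you call symmetric: $(x,y)\in\alpha^f\mathbin{;}\psi^f(U)^{c\smallsmile}$ iff $\lambda^f(y,\alpha^f(x))\notin U$, and (viii) gives $\lambda^f(y,\alpha^f(x))=\lambda^f(x,y)^{\sim}$ directly. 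So $-$ is preserved without involving $\lambda^f(\alpha^f(x),y)$ or the identity $e^{-\sim}=e$.
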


\begin{proof}
Firstly, the fact that $\psi^f(U\cap V) = \psi^f(U)\cap\psi^f(V)$ and $\psi^f(U\cup V) = \psi^f(U)\cup\psi^f(V)$ for all $U, V\in \mathsf{Up}\left(F, \preccurlyeq_F\right)$
follows from basic set theory. 
%
It is also immediate from the definitions of $\psi^f$ and $\leqslant_{N^f}$ that $\psi^f(I) = {\leqslant_{N^f}}$. To see that $\psi^f$  preserves 
the monoid operation, let $(x, y) \in \psi^f(U)\mathbin{;} \psi^f(V)$. Then there exists some $z \in N^f$ such that $(x, z) \in \psi^f(U)$ and $(z, y) \in \psi^f(V)$. Hence,  $\lambda^f(x, z) \in U$ and $\lambda^f(z, y) \in V$. Applying (ii) of Definition~\ref{def:consistent_network_for_frame}, we get $\lambda^f(x,y) \in \lambda^f(x, z) \circ \lambda^f(z, y)$. It follows that $\lambda^f(x, y) \in U \circ V$.  We can thus conclude that $(x, y) \in \psi^f(U \circ V)$. 

For the other inclusion, let $(x, y) \in \psi^f(U\circ V)$. Then $\lambda^f(x, y) \in U \circ V$, and so there exist $g \in U$  and $h \in V$ such that $\lambda^f(x, y) \in g \circ h$. Since $F$ is finite, there exist $g', h' \in F$ such that $g \preccurlyeq g'$, $h \preccurlyeq h'$ and $\lambda^{f}(x, y) \in g' \cmax h'$. Since $\mathcal{N}^f$ is saturated, there exists some $z \in N^f$ such that $\lambda^f(x, z) = g'$ and $\lambda^f(z, y) = h'$.  Using the fact that $U$ and $V$ are upsets of $(F, \preccurlyeq_F)$, we can conclude that $\lambda^f(x, z) \in U$ and $\lambda^f(z, y) \in V$. Consequently, $(x, z) \in \psi^f(U)$ and $(z, y) \in \psi^{f}(V)$, and thus $(x, y) \in \psi^f(U)\mathbin{;}\psi^f(V)$. 



Next we show that $\psi^f$ preserves the linear negations. We will show that $\psi^f$ preserves $\sim$. The proof that it preserves $-$ follows
similarly.
%
Let $x, y \in N^f$. Then, using Definition~\ref{def:consistent_network_for_frame}(viii), we obtain
\begin{align*}
(x, y) \in \psi^f({\sim}U) 
& \textnormal{ iff } \lambda^f(x, y) \in {\sim}U
\textnormal{ iff } \lambda^f(x, y)^{-} \notin U 
\textnormal{ iff }  \lambda^f(x, \alpha^f((\alpha^f)^{-1}(y)))^{-} \notin U \\
&\textnormal{ iff }  \lambda^f((\alpha^f)^{-1}(y), x)^{\sim -} \notin U 
\textnormal{ iff }  \lambda^f((\alpha^f)^{-1}(y), x) \notin U 
\textnormal{ iff } ((\alpha^f)^{-1}(y), x) \notin \psi_f(U)\\
&\textnormal{ iff } (x, (\alpha^f)^{-1}(y)) \in \psi^f(U)^{c\smallsmile}
\textnormal{ iff } (x, y) \in \psi^f(U)^{c\smallsmile}\mathbin{;} \alpha^{f} =
{\sim}\psi^f(U).
\end{align*}


Finally we show that $\psi^f$ preserves the involution:
%
%
\begin{align*}
(x, y) \in \psi^f(\neg U)
& \textnormal{ iff }  \lambda^f(x, y) \in \neg U
 \textnormal{ iff } \lambda^f(x, y)^{\neg} \notin U
\textnormal{ iff } \lambda^f(\beta^f(\alpha^f(x)), \beta^f(y)) \notin U \\
&\textnormal{ iff } (\beta^f(\alpha^f(x)), \beta^f(y)) \notin \psi^f(U) 
\textnormal{ iff } (\beta^f(\alpha^f(x)), \beta^f(y)) \in \psi^f(U)^c\\
& \textnormal{ iff } (x,y) \in \alpha^{f}\mathbin{;} \beta^{f} \mathbin{;} \psi^f(U)^c\mathbin{;} \beta^{f}= \neg\psi^f(U).
\end{align*}
Here the third equivalence follows from Definition~\ref{def:consistent_network_for_frame}(ix). 
\end{proof}

We are now ready to prove one of the main results of this section.

\begin{proposition}\label{prop:consistent_sat_network_imply_representable} 
Let $\mathbf{F} = \left(F, I,\preccurlyeq_F, \circ, {^{\sim}}, {^{-}}, {^{\neg}}\right)$ be a \emph{finite} frame. If $\mathbf F$ has the network realisation property, then the complex algebra $\mathbf{F}^+$ of $\mathbf{F}$ is representable.
\end{proposition}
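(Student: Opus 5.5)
The plan is to assemble the homomorphisms $\psi^f$ from Lemma~\ref{lem:psi_f_homomorphism} into a single map into a product of full DqRAs and show that this map is injective. Since $\mathbf F$ has the network realisation property, for each $f\in F$ we fix a consistent saturated network $\mathcal N^f=(N^f,\lambda^f,\alpha^f,\beta^f)$ and nodes $u_f,v_f\in N^f$ with $\lambda^f(u_f,v_f)=f$. By Theorem~\ref{thm:new_DqRA}, each $\mathbf D(N^f\times N^f,\preccurlyeq_{N^f})$ is a full DqRA. We then define
\[
\Psi:\mathsf{Up}(F,\preccurlyeq_F)\to\prod_{f\in F}\mathsf{Up}(N^f\times N^f,\preccurlyeq_{N^f}),\qquad \Psi(U)=(\psi^f(U))_{f\in F}.
\]
This $\Psi$ is a homomorphism from $\mathbf F^+$ into $\prod_{f\in F}\mathbf D(N^f\times N^f,\preccurlyeq_{N^f})$, because each coordinate $\psi^f$ is a homomorphism by Lemma~\ref{lem:psi_f_homomorphism} and the operations of a product are computed coordinatewise.

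Next we prove injectivity. Let $U\neq V$ be upsets of $(F,\preccurlyeq_F)$; without loss of generality choose $f\in U\setminus V$. Then $\lambda^f(u_f,v_f)=f\in U$, so $(u_f,v_f)\in\psi^f(U)$. On the other hand $f\notin V$, so $(u_f,v_f)\notin\psi^f(V)$. Hence $\psi^f(U)\neq\psi^f(V)$, and therefore $\Psi(U)\neq\Psi(V)$. It follows that $\mathbf F^+$ is isomorphic to a subalgebra of a product of full DqRAs, that is, $\mathbf F^+\in\mathbb{ISP}(\mathsf{FDqRA})$. By Definition~\ref{Definition:RDqRA}, $\mathbf F^+$ is representable.

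Essentially all of the difficulty has already been handled in Lemma~\ref{lem:psi_f_homomorphism}; that is where saturation and finiteness of $F$ are used, to obtain $\cmax$-witnesses for preservation of $\circ$. The one remaining point to check is that the empty upset and the full set $F$ cause no trouble in the separation argument. They do not: any two distinct upsets differ at some element $f$, and every $f$ is network-realised. If $F=\varnothing$ the claim is trivial, since $\mathbf F^+$ is then the one-element algebra, which is the empty product.
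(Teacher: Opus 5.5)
Your proposal is correct and follows the paper's proof essentially step for step. The coordinatewise homomorphisms $\psi^f$ from Lemma~\ref{lem:psi_f_homomorphism} are assembled into a single homomorphism into $\prod_{f\in F}\mathbf{D}(N^f\times N^f,\preccurlyeq_{N^f})$, and injectivity comes from separating $U\neq V$ at some $f\in U\setminus V$ using an edge of $\mathcal{N}^f$ labelled $f$. Your remark about the degenerate case $F=\varnothing$ is a harmless addition that the paper does not address.
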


\begin{proof}
Since $\mathbf F$ has the network realisation property, for every $f \in F$, there is a consistent saturated network $\mathcal{N}^f = (N^f, \lambda^f, \alpha^f, \beta^{f})$ and nodes $u, v \in N^f$ such that $\lambda^f(u, v) = f$. Let $f \in F$ and define ${\leqslant_{N^f}}$ by 
$x \leqslant_{N^f} y$ iff $\lambda^{f}(x, y) \in I$.
Then $(N^f, \leqslant_{N^f})$ is a poset. It also follows from Lemma~\ref{lem:alpha_beta} that $\alpha^f$ is an order automorphism of $(N^f, \leqslant_{N^f})$, $\beta^f$ is a dual order-automorphism of $(N^f, \leqslant_{N^f})$, and $\alpha^f\mathbin{;} \beta^f\mathbin{;}\alpha^f = \beta^f$. For $(u, v), (x, y) \in N^f \times N^f$,
define $\preccurlyeq_{N^f}$ by 
$(x, y) \preccurlyeq_{N^f} (u, v)$ iff $u \leqslant_{N^f} x$ and $y \leqslant_{N^f} v$. Hence, by Theorem~\ref{thm:new_DqRA}, the algebra $\mathbf{D}(N^f\times N^f, \preccurlyeq_{N^f}) =\left\langle \mathsf{Up}\left(N^f \times N^f, \preccurlyeq_{N^f}\right), \cap, \cup, \mathbin{;}, {\sim}, -, \neg, \leqslant_{N^f}\right\rangle$ is a full DqRA. Moreover, by Lemma~\ref{lem:psi_f_homomorphism}, the map $\psi^f: \mathsf{Up}\left(F, \preccurlyeq_F\right) \to \mathsf{Up}(N^f\times N^f, \preccurlyeq_{N^f})$ defined by $$\psi^f(U) := \{(x, y) \mid \lambda^f(x, y) \in U\},$$ for all $U \in \mathsf{Up}\left(F, \preccurlyeq_F\right)$, is a homomorphism from $\mathbf{F}^+$ to $\mathbf{D}(N^f\times N^f, \preccurlyeq_{N^f})$. 
Now define \break $\psi: \mathsf{Up}(F, \preccurlyeq_F) \to \prod_{f \in F}\mathsf{Up}(N^f\times N^f, \preccurlyeq_{N^f})$ by setting, for all $U \in \mathsf{Up}(F, \preccurlyeq_F)$, 
$
\psi(U)(f) = \psi^f(U). 
$
Then $\psi$ is a homomorphism from $\mathbf{F}^+$ to $\prod_{f\in F}\mathbf{D}(N^f\times N^f, \preccurlyeq_{N^f})$, so all that remains to check is that $\psi$ is injective. Let $U, V \in \mathsf{Up}(F, \preccurlyeq_F)$ such that $U \neq V$. Then $U \not\subseteq V$ or $V \not\subseteq U$. Assume the first. Then there exists some $g \in F$ such that $g \in U$ while $g \notin V$. By assumption there exists a network $\mathcal{N}^g$ for $\mathbf{F}$ and $s, t \in N^g$ such that $\lambda^g(s, t) = g$. Hence,  $\lambda^g(s, t) \in U$ while $\lambda^g(s, t) \notin V$, which means $(s, t) \in \psi^g(U)$ while $(s, t) \notin \psi^g(V)$. This shows $\psi(U)(g) = \psi^g(U) \neq \psi^g(V) = \psi(V)(g)$. The other case is similar. 
\end{proof}

In the remainder of this section, we show that the dual frame of a finite normally representable DqRA has the network realisation property. 

Let $\A=\langle A,\wedge,\vee, \cdot, {\sim}, -, \neg, 1\rangle$ be a \emph{finite} normally representable DqRA. Then 
there exists a poset $\mathbf X = \left(X, \leqslant_X\right)$,
an equivalence relation $E\subseteq X\times X$ with
${\leqslant_X} \subseteq E$, an
order automorphism $\alpha:X\to X$
and a self-inverse dual order automorphism $\beta:X\to X$ such that 
$\alpha,\beta\subseteq E$ and
$\alpha\mathbin{;} \beta \mathbin{;} \alpha = \beta$.  Moreover, there exists an embedding
$\varphi: \mathbf{A} \hookrightarrow \mathbf{D}\left(E, \preccurlyeq_{E}\right)$. Since $\mathbf A$ is normally representable, it must be the case that $\varphi(\top) = E$ and $\varphi(\bot) = \varnothing$. 

For each $(x, y) \in E$, let $F_{(x, y)} = \{d \in A \mid (x, y) \in \varphi(d)\}$. In the next lemma, we show that this set is a prime filter of the lattice reduct of $\A$, and in the following lemma that the meet of the elements in $F_{(x,y)}$ is a join-irreducible. 
Note that when we write $\leqslant$ we mean the lattice order on $A$.  The partial order on $X$ is denoted by $\leqslant_X$. 

\begin{lemma}\label{lem:prime_filter_and_uparrow=F}
For each $(x, y) \in E$, the set $F_{(x, y)}$ is a prime filter of the lattice reduct of $\mathbf{A}$, and moreover ${\uparrow}\bigwedge F_{(x, y)} = F_{(x, y)}$.
\end{lemma}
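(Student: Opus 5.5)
We argue directly from the fact that $\varphi$ is a lattice embedding into $\mathbf{D}(E,\preccurlyeq_E)$, whose lattice operations are $\cap$ and $\cup$, together with normality, i.e.\ $\varphi(\top)=E$ and $\varphi(\bot)=\varnothing$.

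First, $F_{(x,y)}$ is a filter. It is nonempty, since $(x,y)\in E=\varphi(\top)$ gives $\top\in F_{(x,y)}$. It is upward closed: if $d\in F_{(x,y)}$ and $d\leqslant e$, then $\varphi(d)\subseteq\varphi(e)$ because $\varphi$ is order preserving, so $(x,y)\in\varphi(e)$. It is closed under meets: $\varphi(d\wedge e)=\varphi(d)\cap\varphi(e)$, so $d,e\in F_{(x,y)}$ gives $d\wedge e\in F_{(x,y)}$.

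Next, $F_{(x,y)}$ is proper and prime. Since $\varphi(\bot)=\varnothing$, we have $\bot\notin F_{(x,y)}$, so the filter is proper. If $d\vee e\in F_{(x,y)}$, then $(x,y)\in\varphi(d\vee e)=\varphi(d)\cup\varphi(e)$, so $d\in F_{(x,y)}$ or $e\in F_{(x,y)}$.

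Finally, we show ${\uparrow}\bigwedge F_{(x,y)}=F_{(x,y)}$. Because $\A$ is finite, $F_{(x,y)}$ is a finite nonempty set, so $m:=\bigwedge F_{(x,y)}$ is a finite meet of members of $F_{(x,y)}$. By closure under finite meets, $m\in F_{(x,y)}$. Upward closure then gives ${\uparrow}m\subseteq F_{(x,y)}$. Conversely, every $d\in F_{(x,y)}$ satisfies $m\leqslant d$ by definition of the meet, so $F_{(x,y)}\subseteq{\uparrow}m$.

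The argument has no real obstacle. The only points needing care are that normality is what makes the filter both nonempty and proper, and that finiteness of $\A$ is what places $\bigwedge F_{(x,y)}$ inside $F_{(x,y)}$.
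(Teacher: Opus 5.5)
Your proof is correct and follows the paper's argument essentially step for step. Nonemptiness and properness come from $\varphi(\top)=E$ and $\varphi(\bot)=\varnothing$; upward closure, meet closure and primeness come from $\varphi$ preserving the order, $\cap$ and $\cup$; and ${\uparrow}\bigwedge F_{(x,y)}=F_{(x,y)}$ follows because finiteness of $\A$ places $\bigwedge F_{(x,y)}$ inside $F_{(x,y)}$ (the paper gets this by citing preservation of finite meets, you by iterating binary meet closure, which is the same thing).
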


\begin{proof}
First, $(x, y) \in E = \varphi(\top)$, so $\top \in F_{(x, y)}$, which means $F_{(x, y)} \neq \varnothing$. Also, $(x, y) \notin \varnothing = \varphi(\bot)$, so $F_{(x, y)} \neq A$. 

To show that $F_{(x, y)}$ is upward closed, let $c\in F_{(x, y)}$ and assume $c \leqslant d$. The first part implies $(x, y) \in \varphi(c)$, and the second part implies $\varphi(c) \subseteq \varphi(d)$. Hence, $(x, y) \in \varphi(d)$, and so $d \in F_{(x, y)}$. 

Now let $c, d \in F_{(x, y)}$. Then we have $(x, y) \in \varphi(c)$ and $(x, y) \in \varphi(d)$. Therefore, $(x, y) \in \varphi(c) \cap \varphi(d) = \varphi(c \wedge d)$. This shows $c \wedge d \in F_{(x, y)}$. 

Finally, let $c \vee d \in  F_{(x, y)}$. Then $(x, y) \in \varphi(c \vee d) = \varphi(c) \cup \varphi(d)$, and so $(x, y) \in \varphi(c)$ or $(x, y) \in \varphi(d)$. We thus get $c \in F_{(x, y)}$ or $d \in F_{(x, y)}$. 



To see that ${\uparrow}\bigwedge F_{(x, y)} = F_{(x, y)}$, let $d \in F_{(x, y)}$. Then $\bigwedge F_{(x, y)} \leqslant d$, and so $d \in {\uparrow}\bigwedge F_{(x, y)}$. This shows that $F_{(x, y)} \subseteq {\uparrow}\bigwedge F_{(x, y)}$.  

For the other inclusion, let $d \in {\uparrow}\bigwedge F_{(x, y)}$. Then $\bigwedge F_{(x, y)} \leqslant d$. Since $\mathbf{A}$ is finite and $\varphi$ preserves finite meets, we have $(x, y) \in \varphi\left(\bigwedge F_{(x, y)}\right)$. Hence, $\bigwedge F_{(x, y)} \in F_{(x, y)}$, and so, since $F_{(x, y)}$ is upward closed, it follows that $d \in F_{(x, y)}$. 
\end{proof}

\begin{lemma}\label{lem:join_irreducible}
For each $(x, y) \in E$, we have $\bigwedge F_{(x, y)} \in A_+$.  
\end{lemma}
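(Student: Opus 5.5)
Write $p = \bigwedge F_{(x,y)}$. By Lemma~\ref{lem:prime_filter_and_uparrow=F} we have $F_{(x,y)} = {\uparrow} p$, so ${\uparrow}p$ is a prime filter. Since $\A$ is finite, $J^\infty(\A)$ is just the set of join-irreducible elements, which are distinct from $\bot$. The plan is to show that $p$ is join-prime and that $p \neq \bot$; in a finite distributive lattice this is equivalent to $p$ being join-irreducible, so $p \in \JA$.

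First, $p \neq \bot$. Since $F_{(x,y)} \neq A$ (again Lemma~\ref{lem:prime_filter_and_uparrow=F}, which uses $\varphi(\bot)=\varnothing$), we have $\bot \notin F_{(x,y)} = {\uparrow}p$. Hence $p \neq \bot$.

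Next, $p$ is join-prime. Suppose $p \leqslant c \vee d$. Then $c \vee d \in {\uparrow}p = F_{(x,y)}$, and primeness of the filter gives $c \in F_{(x,y)}$ or $d \in F_{(x,y)}$, that is, $p \leqslant c$ or $p \leqslant d$.

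Finally, join-prime implies join-irreducible. If $p = c \vee d$, then join-primeness gives $p \leqslant c$ or $p \leqslant d$. Combined with $c, d \leqslant p$, this forces $p = c$ or $p = d$. Together with $p \neq \bot$, it follows that $p$ is completely join-irreducible, because $\A$ is finite. Therefore $p \in \JA$.

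There is no real obstacle here. The only delicate point is $p \neq \bot$, and it relies on the normality of the representation ($\varphi(\bot)=\varnothing$), which was already used in Lemma~\ref{lem:prime_filter_and_uparrow=F}.
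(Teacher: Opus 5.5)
Your proof is correct and follows the paper's argument. Like the paper, you show that $\bigwedge F_{(x,y)}$ is join-prime using ${\uparrow}\bigwedge F_{(x,y)} = F_{(x,y)}$ and the primeness of the filter. Your explicit check that $\bigwedge F_{(x,y)} \neq \bot$ (from properness of $F_{(x,y)}$, i.e.\ $\varphi(\bot)=\varnothing$) fills in a detail the paper leaves implicit.
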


\begin{proof}
Since $\A$ is finite and $\langle A, \wedge, \vee\rangle$ is distributive, an element of $A$ is join-irreducible iff it is join-prime. We will thus show that $\bigwedge F_{(x, y)}$ is join-prime. Let $c, d \in A$ and assume $\bigwedge F_{(x, y)} \leqslant c \vee d$. Then we have $c \vee d \in {\uparrow}\bigwedge F_{(x, y)}$, and thus $c \vee d \in F_{(x, y)}$ by Lemma~\ref{lem:prime_filter_and_uparrow=F}. Since $F_{(x, y)}$ is a prime filter by Lemma~\ref{lem:prime_filter_and_uparrow=F}, we get $c \in F_{(x, y)}$ or $d \in F_{(x, y)}$. Therefore, $\bigwedge F_{(x, y)} \leqslant c$ or $\bigwedge F_{(x, y)} \leqslant d$. 
\end{proof}

Now define $\lambda: E \to A_+$ by setting, for each $(x, y) \in E$, 
$$
\lambda(x, y) := \bigwedge \{d \in A \mid (x, y) \in \varphi(d)\}.
$$
The next lemma guarantees that the function $\lambda$ is surjective. 

\begin{lemma}\label{lem:lambda_surjective}
For each join-irreducible $i \in A_+$, there exists some $(s, t) \in E$ such that $\lambda(s, t) = i$.  
\end{lemma}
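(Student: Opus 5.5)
Fix a join-irreducible $i \in A_+$. The plan is to use injectivity of $\varphi$ to produce a pair $(s,t)$ lying in $\varphi(i)$ but in no $\varphi(d)$ with $i \not\leqslant d$, and then read off that $\lambda(s,t)=i$.

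Since $i$ is join-irreducible, $i \neq \bot$. Let $i_* = \bigvee\{c \in A \mid c < i\}$, the unique lower cover of $i$ ($\A$ is finite). Then $i_* < i$, so $\varphi(i_*) \subsetneq \varphi(i)$ because $\varphi$ is an order embedding. Choose $(s,t) \in \varphi(i) \setminus \varphi(i_*)$. As $\varphi(i) \subseteq \varphi(\top) = E$, we have $(s,t) \in E$, so $\lambda(s,t)$ is defined.

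It remains to show $\lambda(s,t) = \bigwedge F_{(s,t)} = i$.
\begin{enumerate}[label=\textup{(\alph*)}]
\item Since $(s,t) \in \varphi(i)$, we have $i \in F_{(s,t)}$, hence $\lambda(s,t) \leqslant i$.
\item Suppose $\lambda(s,t) < i$. Then $\lambda(s,t) \leqslant i_*$.
\item By Lemma~\ref{lem:prime_filter_and_uparrow=F}, ${\uparrow}\lambda(s,t) = F_{(s,t)}$, so $i_* \in F_{(s,t)}$.
\item That means $(s,t) \in \varphi(i_*)$, contradicting the choice of $(s,t)$.
\end{enumerate}
Hence $\lambda(s,t) = i$.

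The only delicate step is showing that $\lambda(s,t)$ cannot drop strictly below $i$. It rests on two facts: the element $i_*$ exists and is the largest element strictly below $i$, which uses finiteness and join-irreducibility; and $F_{(s,t)}$ is exactly the principal filter generated by $\lambda(s,t)$, which is Lemma~\ref{lem:prime_filter_and_uparrow=F}. An alternative route uses $\kappa(i)$ from Lemma~\ref{lem:properties_kappa}: pick $(s,t) \in \varphi(i) \setminus \varphi(\kappa(i))$ and argue via primeness. The lower-cover argument is more direct, so I would use it.
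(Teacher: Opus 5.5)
Your proof is correct and is essentially the paper's argument. The paper picks $(s,t)\in\varphi(i)\setminus\varphi(\kappa(i))$, which is exactly the alternative you mention, using $i\not\leqslant\kappa(i)$ from Lemma~\ref{lem:properties_kappa}, and then reaches the same contradiction via ${\uparrow}\bigwedge F_{(s,t)} = F_{(s,t)}$. In fact $i\wedge\kappa(i) = i_*$ in a finite distributive lattice, so $\varphi(i)\setminus\varphi(\kappa(i)) = \varphi(i)\setminus\varphi(i_*)$ and the two choices of witness pair coincide; your lower-cover version simply avoids appealing to join-primeness.
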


\begin{proof}
Let $i \in A_+$. Since $i \not\leqslant \kappa(i)$ and $\varphi$ is injective, we have $\varphi(i)\not\subseteq \varphi(\kappa(i))$. Hence, there exists  some $(s, t) \in E$ such that $(s, t) \in \varphi(i)$ while $(s, t) \notin \varphi(\kappa(i))$. Since $(s, t) \in \varphi(i)$, we have $i \in F_{(s, t)}$, and so $\lambda(s, t) = \bigwedge F_{(s, t)} \leqslant i$. 

To see that $i \leqslant \bigwedge F_{(s, t)}$, suppose for the sake of a contradiction that this is not the case. Then $\bigwedge F_{(s, t)} \in \{d \in A \mid i \not\leqslant d\}$, and so $\bigwedge F_{(s, t)} \leqslant \kappa(i)$. This shows that $\kappa(i) \in {\uparrow}\bigwedge F_{(s, t)}$, which implies $\kappa(i) \in F_{(s, t)}$ by Lemma~\ref{lem:prime_filter_and_uparrow=F}. Hence, $(s, t) \in \varphi(\kappa(i))$, a contradiction. We can thus conclude that $\lambda(s, t) = i$. 
\end{proof}

The following lemma will also be useful. 

\begin{lemma}\label{lem:interaction_lambda_varphi}
For each $a \in A$, we have $(x, y) \in \varphi(a)$ iff $\lambda(x, y) \leqslant a$.
\end{lemma}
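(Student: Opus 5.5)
We prove the two implications separately. Both rest on Lemma~\ref{lem:prime_filter_and_uparrow=F}, which states that $F_{(x,y)}$ is upward closed and contains its own meet; by definition that meet is $\lambda(x,y)$. Fix $a \in A$ and $(x,y) \in E$.

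For the forward direction, assume $(x,y) \in \varphi(a)$. By the definition of $F_{(x,y)}$, this means $a \in F_{(x,y)}$. Hence $\lambda(x,y) = \bigwedge F_{(x,y)} \leqslant a$.

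For the converse, assume $\lambda(x,y) \leqslant a$. Then $a \in {\uparrow}\bigwedge F_{(x,y)}$. By Lemma~\ref{lem:prime_filter_and_uparrow=F}, ${\uparrow}\bigwedge F_{(x,y)} = F_{(x,y)}$, so $a \in F_{(x,y)}$, which says exactly that $(x,y) \in \varphi(a)$.

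Equivalently, for the converse: finiteness of $\A$ and preservation of finite meets by $\varphi$ give $(x,y) \in \varphi(\lambda(x,y))$. Monotonicity of $\varphi$ then yields $\varphi(\lambda(x,y)) \subseteq \varphi(a)$.

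No step here is a real obstacle. The only point needing care is that $\bigwedge F_{(x,y)}$ actually belongs to $F_{(x,y)}$. This uses the finiteness of $\A$, so that the meet is a finite meet preserved by $\varphi$, and this fact is already established in Lemma~\ref{lem:prime_filter_and_uparrow=F}.
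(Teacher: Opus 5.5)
Your proof is correct and is essentially the paper's. The forward direction is identical. Your ``equivalent'' converse (finiteness of $\A$ and preservation of finite meets give $(x,y)\in\varphi(\lambda(x,y))$, then monotonicity of $\varphi$) is exactly the paper's argument. Your primary route via ${\uparrow}\bigwedge F_{(x,y)} = F_{(x,y)}$ from Lemma~\ref{lem:prime_filter_and_uparrow=F} just repackages that same fact, since that lemma was itself proved by the same meet argument.
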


\begin{proof}
Let $a \in A$ and $(x, y) \in \varphi(a)$. Then $a \in F_{(x, y)}$, and so $\lambda(x, y) = \bigwedge F_{(x, y)} \leqslant a$. Conversely, assume $\lambda(x, y) \leqslant a$. Then $\varphi(\lambda(x, y)) \subseteq \varphi(a)$. Now $\mathbf{A}$ is finite and $\varphi$ preserves finite meets, so $(x, y) \in \varphi\left(\bigwedge F_{(x, y)}\right)$. Hence, $(x, y) \in \varphi(a)$. 
\end{proof}

We are now ready to prove that the dual frame of a finite normally representable DqRA has the network realisation property. In what follows, we will denote the equivalence class of $x \in X$ with respect to $E$ by $[x]$. 

\begin{proposition}\label{prop:SatConNetworkExists}
Let $\A=\langle A,\wedge,\vee, \cdot, {\sim}, -, \neg, 1\rangle$ be a \emph{finite} normally representable DqRA. Then $\A_+ = (A_+, I, \preccurlyeq, \circ, ^{\sim}, ^{-}, ^{\neg})$ has the network realisation property. 
\end{proposition}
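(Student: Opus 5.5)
Given a join-irreducible $i \in A_+$, I will use Lemma~\ref{lem:lambda_surjective} to fix $(s,t)\in E$ with $\lambda(s,t)=i$. The candidate network is $\mathcal N^i=(N^i,\lambda^i,\alpha^i,\beta^i)$, where $N^i=[s]$ is the $E$-class of $s$ (so $t\in N^i$), $\lambda^i$ is the restriction of $\lambda$ to $N^i\times N^i\subseteq E$, and $\alpha^i,\beta^i$ are the restrictions of $\alpha,\beta$ to $[s]$. Since $\alpha,\beta\subseteq E$ are bijections of $X$, they map $[s]$ bijectively onto $[s]$. Hence $\alpha^i,\beta^i$ are total and surjective on $N^i$, which gives the first half of saturation. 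Throughout, the workhorse is Lemma~\ref{lem:interaction_lambda_varphi}: $(x,y)\in\varphi(a)$ iff $\lambda(x,y)\leqslant a$. I will also keep in mind that $\preccurlyeq=\geqslant$ in $\A_+$.

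\textbf{Consistency.} I verify the conditions of Definition~\ref{def:consistent_network_for_frame} in order.
\begin{enumerate}[label=\textup{(\roman*)}]
\item Since $(x,x)\in{\leqslant_X}=\varphi(1)$, we get $\lambda(x,x)\leqslant 1$, i.e.\ $\lambda(x,x)\in I$.
\item We have $(x,z)\in\varphi(\lambda(x,y))\mathbin{;}\varphi(\lambda(y,z))=\varphi(\lambda(x,y)\cdot\lambda(y,z))$, so $\lambda(x,z)\leqslant\lambda(x,y)\cdot\lambda(y,z)$, which is exactly $\lambda(x,z)\in\lambda(x,y)\circ\lambda(y,z)$.
\item[(iii)--(v)] By Lemma~\ref{lem:interaction_lambda_varphi}, $\lambda(x,y)\in I$ iff $(x,y)\in\varphi(1)={\leqslant_X}$. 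So (iii) is antisymmetry of $\leqslant_X$, (iv) says $\alpha$ is an order automorphism, and (v) says $\beta$ is a dual order automorphism.
\item[(vi)--(vii)] These are exactly the hypotheses that $\beta$ is self-inverse and $\alpha\mathbin{;}\beta\mathbin{;}\alpha=\beta$.
\end{enumerate}

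For (viii) and (ix), two join-irreducibles are equal iff they lie below the same elements of $A$. So for (viii) it suffices to show, for every $a\in A$, that $\lambda(y,\alpha(x))\leqslant a$ iff $\lambda(x,y)^{\sim}\leqslant a$. Writing $j=\lambda(x,y)$, I chain
\[
{\sim}\kappa(j)\leqslant a \iff -a\leqslant\kappa(j) \iff j\not\leqslant -a \iff (x,y)\notin\varphi(-a).
\]
The steps use that $-$ and ${\sim}$ are inverse dual lattice isomorphisms, then Lemma~\ref{lem:properties_kappa}(ii), then Lemma~\ref{lem:interaction_lambda_varphi}. Since $\varphi(-a)=\alpha\mathbin{;}\varphi(a)^{c\smallsmile}$, the last condition is equivalent to $(y,\alpha(x))\in\varphi(a)$, i.e.\ $\lambda(y,\alpha(x))\leqslant a$.

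Condition (ix) is analogous: $\neg\kappa(j)\leqslant a$ iff $\neg a\leqslant\kappa(j)$ iff $(x,y)\notin\varphi(\neg a)=\alpha\mathbin{;}\beta\mathbin{;}\varphi(a)^c\mathbin{;}\beta$. Unwinding the compositions, this says $(\beta(\alpha(x)),\beta(y))\in\varphi(a)$.

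\textbf{Saturation.} Suppose $\lambda(x,y)\in g\cmax h$. Then $\lambda(x,y)\leqslant g\cdot h$, so $(x,y)\in\varphi(g)\mathbin{;}\varphi(h)$, and we get a $z$ with $(x,z)\in\varphi(g)$ and $(z,y)\in\varphi(h)$. Since $E$ is an equivalence relation, $z\in[s]$. Lemma~\ref{lem:interaction_lambda_varphi} gives $\lambda(x,z)\leqslant g$ and $\lambda(z,y)\leqslant h$, i.e.\ $g\preccurlyeq\lambda(x,z)$ and $h\preccurlyeq\lambda(z,y)$. By consistency (ii) we also have $\lambda(x,y)\in\lambda(x,z)\circ\lambda(z,y)$. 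The maximality clause in the definition of $\cmax$ then forces $\lambda(x,z)=g$ and $\lambda(z,y)=h$, as required.

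I expect the main obstacle to be the bookkeeping in (viii) and (ix). There the direction conventions for composing with the graphs of $\alpha$ and $\beta$, and the placement of converse and complement, must match the formulas ${\sim}R=R^{c\smallsmile}\mathbin{;}\alpha$, $-R=\alpha\mathbin{;}R^{c\smallsmile}$ and $\neg R=\alpha\mathbin{;}\beta\mathbin{;}R^c\mathbin{;}\beta$ precisely. I would first check (viii) against $-$ as above. If the orientation comes out reversed, I would swap to ${\sim}$ and use the cyclic relation $j^{\sim-}=j$ from Lemma~\ref{lem:twiddle_minus_involutive}.
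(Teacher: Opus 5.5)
Your proposal is correct and follows essentially the same route as the paper. You restrict $\lambda,\alpha,\beta$ to the $E$-class containing the edge $(s,t)$ supplied by Lemma~\ref{lem:lambda_surjective}, and verify (i)--(ix) using $\varphi(1)={\leqslant_X}$, the formulas for $-$ and $\neg$ in $\mathbf{D}(E,\preccurlyeq_E)$, and Lemma~\ref{lem:properties_kappa}(ii). You then obtain saturation by splitting $(x,y)\in\varphi(g)\mathbin{;}\varphi(h)$ and invoking maximality. The orientation in your (viii) computation is already right, since $(x,y)\in\alpha\mathbin{;}R$ iff $(\alpha(x),y)\in R$, so no swap to ${\sim}$ is needed.
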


\begin{proof}
Let $i$ be an arbitrary join-irreducible of $\A$. By Lemma~\ref{lem:lambda_surjective}, there is some $(s, t) \in E$ such that $\lambda(s, t) = i$. Notice that $\lambda\vert_{[t]\times [t]}(s, t) = i$, so we show that $([t], \lambda\vert_{[t]\times [t]}, \alpha\vert_{[t]}, \beta\vert_{[t]})$ is a consistent saturated network for $\mathbf{A}_+$. 

To prove (i) of Definition~\ref{def:consistent_network_for_frame}, let $u \in [t]$. We have $(u, u) \in {\leqslant_X} = \varphi(1)$, so $1 \in F_{(u, u)}$. Hence, $\lambda\vert_{[t]\times [t]}(u, u) =\lambda(u, u) = \bigwedge F_{(u, u)} \leqslant 1$, which means $\lambda\vert_{[t]\times [t]}(u, u) \in \{k \in A_+ \mid k \leqslant 1\}=I$.

For item (ii), let $u, v, w \in [t]$. Since $\mathbf{A}$ is finite and $\varphi$ preserves finite meets, it follows that $(u, v) \in \varphi\left(\bigwedge F_{(u, v)}\right)$ and $(v, w) \in \varphi\left(\bigwedge F_{(v, w)}\right)$. Hence, $(u, w) \in \varphi\left(\bigwedge F_{(u, v)}\right) \mathbin{;} \varphi\left(\bigwedge F_{(v, w)}\right)$, and so $\bigwedge F_{(u, v)} \cdot \bigwedge F_{(v, w)} \in F_{(u, w)}$. This shows that 
\begin{align*}
\lambda\vert_{[t]\times [t]}(u, w) = \lambda(u, w)  = \bigwedge F_{(u, w)} & \leqslant \bigwedge F_{(u, v)} \cdot \bigwedge F_{(v, w)}\\
& = \lambda(u, v) \cdot \lambda(v, w) \\
& = \lambda\vert_{[t]\times [t]}(u, v) \cdot \lambda\vert_{[t]\times [t]}(v, w).
\end{align*}
We can thus conclude that $\lambda\vert_{[t]\times [t]}(u, w) \in \lambda\vert_{[t]\times [t]}(u, v) \circ \lambda\vert_{[t]\times [t]}(v,w)$, as required.

Next we prove (iii).  Let $u, v \in [t]$ and assume $\lambda\vert_{[t] \times [t]}(u, v) \in I$ and $\lambda\vert_{[t]\times [t]}(v, u) \in I$. Then $\lambda(u, v) = \bigwedge F_{(u, v)} \leqslant 1$ and $\lambda(v, u) = \bigwedge F_{(v, u)} \leqslant 1$, and therefore, by Lemma~\ref{lem:prime_filter_and_uparrow=F}, $1 \in F_{(u, v)}$ and $1 \in F_{(v, u)}$. This shows that $(u, v) \in \varphi(1) = {\leqslant_X}$ and $(v, u) \in \varphi(1) = {\leqslant_X}$. Hence, since $\leqslant_X$ is antisymmetric, we get $u= v$.

For (iv), let $u, v \in [t]$. Using the definition of $I$ and Lemma~\ref{lem:prime_filter_and_uparrow=F}, we obtain the following equivalences:
\begin{align*}
\lambda\vert_{[t]\times [t]}(u, v) = \lambda(u, v) \in I
 \textnormal{ iff } \bigwedge F_{(u, v)} \leqslant 1 
 \textnormal{ iff } 1 \in {\uparrow}\bigwedge F_{(u, v)}
 \textnormal{ iff } 1 \in F_{(u, v)} \textnormal{ iff } (u, v) \in \varphi(1).
\end{align*}
Since $\varphi(1) = {\leqslant_X}$ and $\alpha$ is an order automorphism of $(X, \leqslant_X)$, it follows that 
\begin{align*}
\lambda\vert_{[t]\times [t]}(u, v)\in I
 \textnormal{ iff } (u, v) \in {\leqslant_X}
\textnormal{ iff } (\alpha(u), \alpha(v)) \in  {\leqslant_X} = \varphi(1)   \textnormal{ iff } 1 \in F_{(\alpha(u), \alpha(v))}.
\end{align*}
 Hence, applying Lemma~\ref{lem:prime_filter_and_uparrow=F} and the definition of $I$ again, we get
 
\begin{align*}
\lambda\vert_{[t]\times [t]}(u, v) &
\textnormal{ iff } 1 \in {\uparrow}\bigwedge F_{(\alpha(u), \alpha(v))} 
\textnormal{ iff } \bigwedge F_{(\alpha(u), \alpha(v))} \leqslant 1 
 \textnormal{ iff } \lambda(\alpha(u), \alpha(v)) \leqslant 1 \\
& \textnormal{ iff } \lambda(\alpha(u), \alpha(v)) \in I  
\textnormal { iff } \lambda\vert_{[t]\times [t]}(\alpha\vert_{[t]}(u), \alpha\vert_{[t]}(v)) \in I.
\end{align*}

Using the fact that $\beta$ is a dual-order automorphism of $(X, \leqslant_X)$, we can prove item (v). 
Item (vi) follows from the fact that $\beta: X \to X$ is self-inverse, while (vii) follows from the fact that $\alpha\mathbin{;} \beta\mathbin{;} \alpha = \beta$. 

For (viii), let $u, v \in [t]$ and $a$ an arbitrary element of $A$. Then the equivalences below follow from the definition of $^{\sim}$ and Lemma~\ref{lem:properties_kappa}(ii):
\begin{align*}
\lambda\vert_{[t]\times[t]}(u, v)^{\sim} = \lambda(u, v)^{\sim} \leqslant a = {\sim}{-} a 
\textnormal{ iff } {-}a \leqslant \kappa(\lambda(u, v)) 
 \textnormal{ iff } \lambda(u, v) \not\leqslant {-}a 
 \textnormal{ iff } (u, v) \notin \varphi(-a).
\end{align*}
Using the fact that $\varphi(-a) = -\varphi(a)$ and  Lemmas~\ref{lem:important_eq_injective_map} and~\ref{lem:interaction_lambda_varphi}, we obtain the following:
\begin{align*}
\lambda\vert_{[t]\times[t]}(u, v)^{\sim} \leqslant a
& \textnormal{ iff } (u, v) \notin -\varphi(a) = \alpha\mathbin{;}\varphi(a)^{c\smallsmile} 
 \textnormal{ iff } (u, v) \notin  (\alpha\mathbin{;}\varphi(a)^{\smallsmile})^c \\
& \textnormal{ iff } (u, v) \in \alpha\mathbin{;}\varphi(a)^{\smallsmile} 
 \textnormal{ iff } (v, \alpha(u)) \in \varphi(a) 
\textnormal{ iff } \lambda(v, \alpha(u)) \leqslant a.
\end{align*}
Since $u \in [t]$ and $\alpha\subseteq E$, we get $\alpha(u) \in [t]$, so
$
\lambda\vert_{[t]\times[t]}(u, v)^{\sim} \leqslant a \textnormal{ iff } 
\lambda\vert_{[t]\times[t]}(v, \alpha\vert_{[t]}(u)) \leqslant a.
$
Hence, as $a \in A$ was arbitrary, it follows that $\lambda\vert_{[t]\times[t]}(u, v)^{\sim} = \lambda\vert_{[t]\times[t]}(v, \alpha\vert_{[t]}(u))$.

To prove (ix), let $u, v \in [t]$ and $a$ an arbitrary element of $A$. Then:
\begin{align*}
\lambda\vert_{[t]\times[t]}(u, v)^{\neg} \leqslant a 
& \textnormal{ iff } \lambda(u, v)^{\neg} \leqslant {\neg}{\neg}a 
\textnormal{ iff } {\neg}a \leqslant \kappa(\lambda(u, v)) 
\textnormal{ iff } \lambda(u, v) \not\leqslant {\neg }a 
\textnormal{ iff } (u, v) \notin \varphi(\neg a) \\
& \textnormal{ iff } (u, v) \notin \neg\varphi(a) = \alpha\mathbin{;} \beta\mathbin{;}\varphi(a)^{c}\mathbin{;}\beta 
 \textnormal{ iff } (u, v) \notin  (\alpha\mathbin{;}\beta\mathbin{;}\varphi(a)\mathbin{;}\beta)^c \\
& \textnormal{ iff } (u, v) \in \alpha\mathbin{;}\beta\mathbin{;}\varphi(a)\mathbin{;}\beta 
 \textnormal{ iff } (\beta(\alpha(u)), \beta(v)) \in \varphi(a) 
 \textnormal{ iff } \lambda(\beta(\alpha(u)), \beta(v)) \leqslant a \\
& \textnormal{ iff } \lambda\vert_{[t]\times[t]}(\beta\vert_{[t]}(\alpha\vert_{[t]}(u)), \beta\vert_{[t]}(v)) \leqslant a.
\end{align*}
Since $a \in A$ was arbitrary, this shows that $\lambda\vert_{[t]\times[t]}(u, v)^{\neg} = \lambda\vert_{[t]\times[t]}(\beta\vert_{[t]}(\alpha\vert_{[t]}(u)), \beta\vert_{[t]}(v))$.

Finally, we show that $([t], \lambda\vert_{[t]\times [t]}, \alpha\vert_{[t]}, \beta\vert_{[t]})$ is saturated. First, $\alpha\vert_{[t]}$ and $\beta\vert_{[t]}$ are both total. Moreover, since $\alpha$ and $\beta$ are both surjective, $\alpha\vert_{[t]}$ and $\beta\vert_{[t]}$ are also surjective. 

Now  let $u, v \in [t]$ and assume $\lambda\vert_{[t]\times[t]}(u, v) \in j \circ^{\text{max}} k$. Then $\lambda(u, v) = \lambda\vert_{[t]\times[t]}(u,v) \in j \circ k$, and for all $j', k' \in A_+$, if $j' \leqslant j$, $k' \leqslant k$ and $\lambda(u, v) \in j'\circ k'$, then $j = j'$ and $k = k'$. The first part implies $\lambda(u, v) \leqslant j \cdot k$, and so $j\cdot k \in {\uparrow}\bigwedge F_{(u, v)}$. Hence, by Lemma~\ref{lem:prime_filter_and_uparrow=F}, we have $j\cdot k \in F_{(u, v)}$, which means $(u, v) \in \varphi(j\cdot k) = \varphi(j) \mathbin{;} \varphi(k)$. This implies there exists some $w \in X$ such that $(u, w) \in \varphi(j)$ and $(w, v) \in \varphi(k)$. Consequently, $j \in F_{(u, w)}$ and $k \in F_{(w, v)}$, and therefore $\lambda(u, w) = \bigwedge F_{(u, w)} \leqslant j$ and $\lambda(w, v) = \bigwedge F_{(w, v)} \leqslant k$. Moreover, since $\varphi(j) \subseteq E$ and $u\in [t]$, we obtain $w \in [t]$. Using item (ii) above, we get $\lambda(u, v) \in \lambda(u, w) \circ \lambda(w, v)$. It thus follows that $\lambda(u, w) = j$ and $\lambda(w, v) = k$.  Hence, $\lambda\vert_{[t]\times[t]}(u, w) = j$ and $\lambda\vert_{[t]\times[t]}(w, v) = k$. 
\end{proof}

In the following examples, we use the representations in Examples~\ref{ex:AandCRepresentations},~\ref{ex:BRepresentation} and \ref{ex:DRepresentation} of the DqRAs $\A$, $\B$, $\C$ and $\D$ in Figure~\ref{fig:small_examples_DqRAs} to construct consistent saturated networks for their dual frames in Figure~\ref{fig:dual-frames}. 

\begin{example}\label{ex:network_for_(S3)_+}
The representation of the $3$-element Sugihara chain $\A= \mathbf{S}_3$ yields the network $\mathcal{N}_{\A_+} = (X_1, \lambda_{\A_+}, \alpha_1, \beta_1)$ for $\A_+$, where $\lambda_{\A_+}: X_1\times X_1\to A_+$ is given by $\lambda_{\A_+}(x, x) = \lambda_{\A_+}(y, y) = 1_\A$ and $\lambda_{\A_+}(x, y) = \lambda_{\A_+}(y, x) = \top_\A$. One may readily verify that $\mathcal{N}_{\A_+}$ is a consistent saturated network for $\A_+$.  Moreover, since $\lambda_{\A_+}$ is surjective, $\A_+$ has the network realisation property. Figure~\ref{fig:network_for_S3} depicts this network. 

\begin{figure}[ht]
\centering
\begin{tikzpicture}[scale=1.5,pics/sample/.style={code={\draw[#1] (0,0) --(0.6,0) ;}},
Dotted/.style={
dash pattern=on 0.1\pgflinewidth off #1\pgflinewidth,line cap=round,
shorten >=#1\pgflinewidth/2,shorten <=#1\pgflinewidth/2},
Dotted/.default=3]
\footnotesize
\node[draw,circle,inner sep=2pt] (x) at (-2.5,0.7) {};
\node[draw,circle,inner sep=2pt] (y) at (-0.5,0.7) {};
\path (x) edge [bend left=20, dashed, thick, postaction={decorate},
decoration={
markings,
mark=at position 0.55 with {\arrow{Stealth[scale=0.7]}}
}] node {} (y);
\path (y) edge [bend left=20, dashed, thick, postaction={decorate},
decoration={
markings,
mark=at position 0.55 with {\arrow{Stealth[scale=0.7]}}
}] node {} (x);
\draw [thick] (x) edge[loop left,looseness=45,postaction={decorate},
decoration={
markings,
mark=at position 0.75 with {\arrow{Stealth[scale=0.7]}}
}]node{} (x); 
\draw [thick] (y) edge[loop right,looseness=45,postaction={decorate},
decoration={
markings,
mark=at position 0.75 with {\arrow{Stealth[scale=0.7]}}
}]node{} (y); 
\node[label,anchor=north,xshift=-1pt] at (x) {$x$};
\node[label,anchor=north,xshift=-1pt] at (y) {$y$};
\node[label,anchor=east,xshift=-20pt] at (x) {$1_\A$};
\node[label,anchor=west,xshift=20pt] at (y) {$1_\A$};
\node[label,anchor=west,xshift=35pt,yshift=15pt] at (x) {$\top_\A$};
\node[label,anchor=west,xshift=35pt,yshift=-16pt] at (x) {$\top_\A$};
\normalsize
\end{tikzpicture}
\caption{A consistent saturated network for the dual frame $\A_+$ of $\mathbf{A} = \mathbf{S}_3$.}\label{fig:network_for_S3}
\end{figure}
\end{example}

\begin{example}\label{ex:network_for_diamond_1=0^2}
Using the representation of $\mathbf{B} = D^4_{2,2}$ given in Example~\ref{ex:BRepresentation}, we obtain the consistent saturated network
$\mathcal{N}_{\B_+} = (X_2, \lambda_{\B_+}, \alpha_2, \beta_2)$ for $\B_+$, where $\lambda_{\B_+}: X_2 \times X_2 \to B_+$ is given by $\lambda_{\B_+}(u, u) = \lambda_{\B_+}(v, v) = 1_\B$ and $\lambda_{\B_+}(u, v) = \lambda_{\B_+}(v, u) = 0_\B$. Clearly then $\B_+$ has the network realisation property. Figure~\ref{fig:network_for_diamond_1=0^2} shows this network.  

\begin{figure}[ht]
\centering
\begin{tikzpicture}[scale=1.5,pics/sample/.style={code={\draw[#1] (0,0) --(0.6,0) ;}},
Dotted/.style={
dash pattern=on 0.1\pgflinewidth off #1\pgflinewidth,line cap=round,
shorten >=#1\pgflinewidth/2,shorten <=#1\pgflinewidth/2},
Dotted/.default=3]
\footnotesize
\node[draw,circle,inner sep=2pt] (u) at (-2.5,0.7) {};
\node[draw,circle,inner sep=2pt] (v) at (-0.5,0.7) {};
\path (u) edge [bend left=20, dashed, thick, postaction={decorate},
decoration={
markings,
mark=at position 0.55 with {\arrow{Stealth[scale=0.7]}}
}] node {} (v);
\path (v) edge [bend left=20, dashed, thick, postaction={decorate},
decoration={
markings,
mark=at position 0.55 with {\arrow{Stealth[scale=0.7]}}
}] node {} (u);
\draw [thick] (u) edge[loop left,looseness=45,postaction={decorate},
decoration={
markings,
mark=at position 0.75 with {\arrow{Stealth[scale=0.7]}}
}]node{} (u); 
\draw [thick] (v) edge[loop right,looseness=45,postaction={decorate},
decoration={
markings,
mark=at position 0.75 with {\arrow{Stealth[scale=0.7]}}
}]node{} (v); 
\node[label,anchor=north,xshift=-1pt] at (u) {$u$};
\node[label,anchor=north,xshift=-1pt] at (v) {$v$};
\node[label,anchor=east,xshift=-20pt] at (u) {$1_\B$};
\node[label,anchor=west,xshift=20pt] at (v) {$1_\B$};
\node[label,anchor=west,xshift=35pt,yshift=15pt] at (x) {$0_\B$};
\node[label,anchor=west,xshift=35pt,yshift=-16pt] at (x) {$0_\B$};
\normalsize
\end{tikzpicture}
\caption{A consistent saturated network for the dual frame $\B_+$ of the $4$-element DqRA $\mathbf{B}$ in Figure~\ref{fig:small_examples_DqRAs}.}\label{fig:network_for_diamond_1=0^2}
\end{figure}
\end{example}

\begin{example}\label{ex:network_for_diamond_with_a}
The representation in Example~\ref{ex:AandCRepresentations} of the DqRA $\C = D^4_{3,2}$ yields the consistent saturated network $\mathcal{N}_{\C_+} = (X_1, \lambda_{\C_+}, \alpha_1, \beta_1)$ for $\C_+$, where $\lambda_{\C_+}: X_1\times X_1\to C_+$ is given by $\lambda_{\C_+}(x, x) = \lambda_{\C_+}(y, y) = 1_\C$ and $\lambda_{\C_+}(x, y) = \lambda_{\C_+}(y, x) = a$.   Figure~\ref{fig:network_for_Diamond} depicts this network. As in the previous examples, $\lambda_{\C_+}$ is surjective, so $\C_+$ has the network realisation property.

\begin{figure}[ht]
\centering
\begin{tikzpicture}[scale=1.5,pics/sample/.style={code={\draw[#1] (0,0) --(0.6,0) ;}},
Dotted/.style={
dash pattern=on 0.1\pgflinewidth off #1\pgflinewidth,line cap=round,
shorten >=#1\pgflinewidth/2,shorten <=#1\pgflinewidth/2},
Dotted/.default=3]
\footnotesize
\node[draw,circle,inner sep=2pt] (x) at (-2.5,0.7) {};
\node[draw,circle,inner sep=2pt] (y) at (-0.5,0.7) {};
\path (x) edge [bend left=20, dashed, thick, postaction={decorate},
decoration={
markings,
mark=at position 0.55 with {\arrow{Stealth[scale=0.7]}}
}] node {} (y);
\path (y) edge [bend left=20, dashed, thick, postaction={decorate},
decoration={
markings,
mark=at position 0.55 with {\arrow{Stealth[scale=0.7]}}
}] node {} (x);
\draw [thick] (x) edge[loop left,looseness=45,postaction={decorate},
decoration={
markings,
mark=at position 0.75 with {\arrow{Stealth[scale=0.7]}}
}]node{} (x); 
\draw [thick] (y) edge[loop right,looseness=45,postaction={decorate},
decoration={
markings,
mark=at position 0.75 with {\arrow{Stealth[scale=0.7]}}
}]node{} (y); 
\node[label,anchor=north,xshift=-1pt] at (x) {$x$};
\node[label,anchor=north,xshift=-1pt] at (y) {$y$};
\node[label,anchor=east,xshift=-20pt] at (x) {$1_\C$};
\node[label,anchor=west,xshift=20pt] at (y) {$1_\C$};
\node[label,anchor=west,xshift=35pt,yshift=15pt] at (x) {$a$};
\node[label,anchor=west,xshift=35pt,yshift=-16pt] at (x) {$a$};
\normalsize
\end{tikzpicture}
\caption{A consistent saturated network for the dual frame $\C_+$ of the $4$-element DqRA $\C$ in Figure~\ref{fig:small_examples_DqRAs}.}\label{fig:network_for_Diamond}
\end{figure}
\end{example}

\begin{example}
\label{ex:network_for_(S4)_+}
Using the representation of $\mathbf{D} = \mathbf{S}_4$ given in Example~\ref{ex:DRepresentation}, we obtain the consistent saturated network $\mathcal{N}_{\D_+} = (X_3, \lambda_{\D_+}, \alpha_3, \beta_3)$ for $\D_+$, where $\lambda_{\D_+}: X_3 \times X_3 \to D_+$ is given by the table in Figure~\ref{fig:network_for_S4}. Clearly $\D_+$ has the network realisation property. 
The network is also shown in Figure~\ref{fig:network_for_S4}. 
\end{example}

\begin{figure}[ht]
\centering
\begin{tikzpicture}[scale=1.5,pics/sample/.style={code={\draw[#1] (0,0) --(0.6,0) ;}},
Dotted/.style={
dash pattern=on 0.1\pgflinewidth off #1\pgflinewidth,line cap=round,
shorten >=#1\pgflinewidth/2,shorten <=#1\pgflinewidth/2},
Dotted/.default=3]
\footnotesize
\node[draw,circle,inner sep=2pt] (y) at (-1,1) {};
\node[draw,circle,inner sep=2pt] (z) at (1,1) {};
\node[draw,circle,inner sep=2pt] (x) at (0,0) {};
\node[draw,circle,inner sep=2pt] (v) at (-1,-1) {};
\node[draw,circle,inner sep=2pt] (w) at (1,-1) {};
\node[label,anchor=south,xshift=-8pt,yshift=-10pt] at (y) {$v$};
\node[label,anchor=south,yshift=-8pt, xshift=8pt] at (z) {$w$};
\node[label,anchor=north,xshift=-8pt,yshift=8pt] at (v) {$s$};
\node[label,anchor=north,xshift=10pt,yshift=8pt] at (w) {$t$};
\node[label,anchor=east,xshift=-2pt] at (x) {$u$};
\draw [dashed,thick] (x) edge[loop above,looseness=45,postaction={decorate},
decoration={
markings,
mark=at position 0.75 with {\arrow{Stealth[scale=1]}}
}]node{} (x); 
\draw [thick] (y) edge[loop above,looseness=45, postaction={decorate},
decoration={
markings,
mark=at position 0.75 with {\arrow{Stealth[scale=1]}}
}]node{} (y); 
\draw [thick] (z) edge[loop above,looseness=45, postaction={decorate},
decoration={
markings,
mark=at position 0.75 with {\arrow{Stealth[scale=1]}}
}]node{} (z); 
\draw [thick] (v) edge[loop below,looseness=45, postaction={decorate},
decoration={
markings,
mark=at position 0.75 with {\arrow{Stealth[scale=1]}}
}]node{} (v);
\draw [thick] (w) edge[loop below,looseness=45, postaction={decorate},
decoration={
markings,
mark=at position 0.75 with {\arrow{Stealth[scale=1]}}
}]node{} (w);
\path (y) edge [bend left=20, dotted, thick, postaction={decorate},
decoration={
markings,
mark=at position 0.55 with {\arrow{Stealth[scale=1]}}
}] node {} (v);
\path (z) edge [bend left=20, dotted, thick, postaction={decorate},
decoration={
markings,
mark=at position 0.55 with {\arrow{Stealth[scale=1]}}
}] node {} (w);
\path (y) edge [bend left=20, dotted, thick, postaction={decorate},
decoration={
markings,
mark=at position 0.55 with {\arrow{Stealth[scale=1]}}
}] node {} (x);
\path (y) edge [bend left=20, dotted, thick, postaction={decorate},
decoration={
markings,
mark=at position 0.55 with {\arrow{Stealth[scale=1]}}
}] node {} (z);
\path (z) edge [bend left=20, dotted, thick, postaction={decorate},
decoration={
markings,
mark=at position 0.55 with {\arrow{Stealth[scale=1]}}
}] node {} (y);
\path (v) edge [bend left=20, dotted, thick, postaction={decorate},
decoration={
markings,
mark=at position 0.55 with {\arrow{Stealth[scale=1]}}
}] node {} (w);
\path (w) edge [bend left=20, dotted, thick, postaction={decorate},
decoration={
markings,
mark=at position 0.55 with {\arrow{Stealth[scale=1]}}
}] node {} (v);
\path (x) edge [bend left=20, dotted, thick, postaction={decorate},
decoration={
markings,
mark=at position 0.55 with {\arrow{Stealth[scale=1]}}
}] node {} (w);
\path (z) edge [bend left=23, dotted, thick, postaction={decorate},
decoration={
markings,
mark=at position 0.525 with {\arrow{Stealth[scale=1]}}
}] node {} (v);
\path (y) edge [bend left=23, dotted, thick, postaction={decorate},
decoration={
markings,
mark=at position 0.525 with {\arrow{Stealth[scale=1]}}
}] node {} (w);
\path (z) edge [bend left=20, dotted, thick, postaction={decorate},
decoration={
markings,
mark=at position 0.55 with {\arrow{Stealth[scale=1]}}
}] node {} (x);
\path (x) edge [bend left=20, dotted, thick, postaction={decorate},
decoration={
markings,
mark=at position 0.55 with {\arrow{Stealth[scale=1]}}
}] node {} (v);
\path (v) edge [bend left=20, thick, postaction={decorate},
decoration={
markings,
mark=at position 0.55 with {\arrow{Stealth[scale=1]}}
}] node {} (y);
\path (w) edge [bend left=20, thick, postaction={decorate},
decoration={
markings,
mark=at position 0.55 with {\arrow{Stealth[scale=1]}}
}] node {} (z);
\path (w) edge [bend left=20, thick, postaction={decorate},
decoration={
markings,
mark=at position 0.55 with {\arrow{Stealth[scale=1]}}
}] node {} (x);
\path (x) edge [bend left=20, thick, postaction={decorate},
decoration={
markings,
mark=at position 0.55 with {\arrow{Stealth[scale=1]}}
}] node {} (y);
\path (w) edge [bend left=23, thick, postaction={decorate},
decoration={
markings,
mark=at position 0.525 with {\arrow{Stealth[scale=1]}}
}] node {} (y);
\path (v) edge [bend left=23, thick, postaction={decorate},
decoration={
markings,
mark=at position 0.525 with {\arrow{Stealth[scale=1]}}
}] node {} (z);
\path (v) edge [bend left=20, thick, postaction={decorate},
decoration={
markings,
mark=at position 0.55 with {\arrow{Stealth[scale=1]}}
}] node {} (x);
\path (x) edge [bend left=20, thick, postaction={decorate},
decoration={
markings,
mark=at position 0.55 with {\arrow{Stealth[scale=1]}}
}] node {} (z);
\path (2.5,0.4) 
node[matrix,anchor=north east,draw,nodes={anchor=center},inner sep=2pt, thick]  {
\pic{sample=solid}; & \node{$0_\D$}; \\
\pic{sample=dashed}; & \node{$1_\D$}; \\
\pic{sample=dotted}; & \node{$\top_\D$}; \\
};
\node[anchor=west] at (-5.5,-0.15){
$\begin{array}{|c|ccccc|}\hline
\lambda_{\D_+} & s &t& u & v& w\\\hline
s & 0_\D & \top_\D & 0_\D & 0_\D & 0_\D	\\
t & \top_\D & 0_\D & 0_\D  & 0_\D & 0_\D	\\
u & \top_\D  & \top_\D & 1_\D & 0_\D & 0_\D \\
v & \top_\D & \top_\D & \top_\D & 0_\D & \top_\D \\
w & \top_\D & \top_\D & \top_\D & \top_\D & 0_\D \\ \hline
\end{array}$
};
\normalsize
\end{tikzpicture}
\caption{A consistent saturated network for the dual frame $\D_+$ of $\D = \mathbf{S}_4$ in Figure~\ref{fig:small_examples_DqRAs}.}
\label{fig:network_for_S4}
\end{figure}

\section{Representability via wreath products} \label{sec:NetworkForWreathPRoduct}

In this section, we demonstrate how the wreath product construction introduced in Section~\ref{sec:wreath_products} can be used to obtain representable DqRAs. More precisely, we show that, under certain conditions, the complex algebra of the wreath product of the dual frames of two finite normally representable DqRAs is itself representable. The key step in the proof is to show that the wreath product has the network realisation property so that we can apply Proposition~\ref{prop:consistent_sat_network_imply_representable}. 

The following lemma will be very useful. 

\begin{lemma}\label{lem:LambdaEqualToOne}
Let 
$\B=\langle B,\wedge,\vee,\cdot,  {\sim}, -,\neg, 1\rangle$ be a complete perfect DqRA with $1 \in B_+$ and $1^{-} = 1^{\sim} = 1^{\neg} = 1$, and let $\mathcal{N}=\left(N,\lambda,\alpha,\beta\right)$
be a consistent network for $\B_+ = (B_+, I, \preccurlyeq, \circ, ^{\sim}, ^{-}, ^{\neg})$. 
\begin{enumerate}[label=\textup{(\roman*)}]
\item If $\alpha(x)$ and $\alpha(y)$ are defined, then $\lambda(x,y)=1$ iff 
$\lambda(y,\alpha(x))=1$
iff $\lambda(\alpha(x),\alpha(y))=1$.  
\item If $\alpha(x)$, $\beta(x)$ and $\beta(y)$ are defined, then $\lambda(x,y)=1$ iff
$\lambda(\beta(\alpha(x)),\beta(y))=1$
iff
$\left.\lambda(\beta(y),\beta(x))=1\right.$.
\end{enumerate}
\end{lemma}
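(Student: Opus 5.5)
The proof should need only the consistency conditions (viii) and (ix) of Definition~\ref{def:consistent_network_for_frame}, together with the fixed-point assumptions $1^{\sim}=1^{-}=1^{\neg}=1$ and the fact that $^{\sim}$, $^{-}$ and $^{\neg}$ are bijections on $B_+$. That last fact comes from Lemma~\ref{lem:twiddle_minus_involutive}(i) and Definition~\ref{def:frame}(vii). The key observation is that for a bijection $\sigma$ with $\sigma(1)=1$ we have $e=1$ iff $\sigma(e)=1$. Indeed, if $\sigma(e)=1=\sigma(1)$, injectivity gives $e=1$.

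For (i), suppose $\alpha(x)$ and $\alpha(y)$ are defined. By (viii), $\lambda(y,\alpha(x))=\lambda(x,y)^{\sim}$. Since $^{\sim}$ is injective and fixes $1$, this gives $\lambda(x,y)=1$ iff $\lambda(y,\alpha(x))=1$. For the second equivalence, apply (viii) again with the roles changed: take the node $y$ (for which $\alpha(y)$ is defined) and the node $\alpha(x)$. This gives $\lambda(\alpha(x),\alpha(y))=\lambda(y,\alpha(x))^{\sim}$. Using injectivity of $^{\sim}$ and $1^{\sim}=1$ once more, $\lambda(y,\alpha(x))=1$ iff $\lambda(\alpha(x),\alpha(y))=1$. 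Chaining the two equivalences gives (i). Thus $\lambda(\alpha(x),\alpha(y))=\lambda(x,y)^{\sim\sim}$, and the whole argument is just bookkeeping of which $\alpha$-values must be defined at each use of (viii).

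For (ii), suppose $\alpha(x)$, $\beta(x)$ and $\beta(y)$ are defined. The first equivalence should follow from (ix), but (ix) requires $\beta(\alpha(x))$ to be defined. This is the main obstacle: the statement's hypothesis does not explicitly include it. I would resolve it by reading the statement as implicitly requiring that $\beta(\alpha(x))$ be defined, since it appears in the conclusion. In all networks used later, $\alpha$ and $\beta$ are total, so nothing is lost. With $\beta(\alpha(x))$ and $\beta(y)$ defined, (ix) gives $\lambda(\beta(\alpha(x)),\beta(y))=\lambda(x,y)^{\neg}$. Since $^{\neg}$ is a bijection fixing $1$, we get $\lambda(x,y)=1$ iff $\lambda(\beta(\alpha(x)),\beta(y))=1$.

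For the second equivalence in (ii), I would relate $\lambda(\beta(\alpha(x)),\beta(y))$ to $\lambda(\beta(y),\beta(x))$ through (vii) and (viii). Condition (vii) gives $\beta(x)=\alpha(\beta(\alpha(x)))$, provided $\alpha$ is defined at $\beta(\alpha(x))$; again I would invoke totality, or the assumption implicit in the conclusion. Set $z=\beta(\alpha(x))$. Applying (viii) to the node $z$ and the node $\beta(y)$ gives
\[
\lambda(\beta(y),\beta(x))=\lambda(\beta(y),\alpha(z))=\lambda(z,\beta(y))^{\sim}.
\]
Since $^{\sim}$ is injective and fixes $1$, $\lambda(\beta(y),\beta(x))=1$ iff $\lambda(\beta(\alpha(x)),\beta(y))=1$, which completes (ii).

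If one insists on avoiding (vii), an alternative would be to use (i) applied at the nodes $z$ and $\beta(y)$. However, the route through (vii) and (viii) is more direct, so I would present that one.
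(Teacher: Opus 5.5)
Your proof is correct and is essentially the paper's: you apply (viii) twice for (i), and for (ii) you use (ix), then (vii) to rewrite $\beta(x)=\alpha(\beta(\alpha(x)))$, followed by (viii) at the node $\beta(\alpha(x))$. Your explicit appeal to the injectivity of ${}^{\sim}$ and ${}^{\neg}$ (needed for the reverse implications) and your remark that $\beta(\alpha(x))$ and $\alpha(\beta(\alpha(x)))$ must be defined make precise points the paper leaves implicit, though your closing aside is off: applying (i) at $z$ and $\beta(y)$ only tells you about $\lambda(\beta(y),\alpha(z))$, so it still needs (vii).
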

\begin{proof}
    To prove item (i), we use the fact that $\mathcal{N}$ 
    satisfies condition (viii) from Definition~\ref{def:consistent_network_for_frame} to get
    \begin{align*}
        \lambda(x,y)=1&\textnormal{ iff } 
        \lambda(y,\alpha(x))=\lambda(x,y)^\sim=1^{\sim}=1\\
        &\textnormal{ iff } \lambda(\alpha(x),\alpha(y))=\lambda(y,\alpha(x))^{\sim}=1^{\sim}=1.
    \end{align*}
    For item (ii), we use conditions (vii), (viii) and (ix) of Definition~\ref{def:consistent_network_for_frame}, satisfied by $\mathcal{N}$, to get
    \begin{align*}
        \lambda(x,y)=1
        \textnormal{ iff } &\lambda(\beta(\alpha(x)),\beta(y))=\lambda(x,y)^\neg=1^\neg=1\\
        \textnormal{ iff } &
        \lambda(\beta(y),\beta(x))=
        \lambda(\beta(y),\alpha(\beta(\alpha(x))))
        = \lambda(\beta(\alpha(x)),\beta(y))^\sim=1^\sim=1. 
    \end{align*}    
\end{proof}

Now let $\A=\langle A,\wedge_\A,\vee_\A, \cdot_\A, {\sim}_\A, -_\A, \neg_\A, 1_\A\rangle$
and $\B=\langle B,\wedge_\B,\vee_\B,\cdot_\B,  {\sim}_\B, -_\B,\neg_\B, 1_\B\rangle$ be finite DqRAs such that $A \cap B=\varnothing$ and $1_\mathbf{A}\in \JA$, $1_{\B} \in \JB$ and $1_\B^{-_{\B_+}} = 1^{\sim_{\B_+}}_\B = 1_\B^{\neg_{\B_+}} = 1_\B$. 
Assume there exists a consistent saturated network  
$\mathcal{N}_{\A_+}=\left(N_{\A_+},\lambda_{\A_+},\alpha_{\A_+},\beta_{\A_+}\right)$ for $\A_+ = \left(A_+,I_{\A_+},\preccurlyeq_{\A_+},\circ_{\A_+},^{\sim_{\A_+}},^{-_{\A_+}},^{\neg_{\A_+}}\right)$. 
Likewise, assume there exists a consistent saturated network
$\mathcal{N}_{\B_+}=\left(N_{\B_+},\lambda_{\B_+},\alpha_{\B_+},\beta_{\B_+}\right)$ for $\B_+ = \left(B_+,I_{\B_+},\preccurlyeq_{\B_+},\circ_{\B_+},^{\sim_{\B_+}},^{-_{\B_+}},^{\neg_{\B_+}}\right)$. 
Recall the definition of the wreath product of $\A_+$ and $\B_+$, described above and below Lemma~\ref{lem:wreat_product_poset}.

For each $s\in N_{\B_+}$, let $_sN_{\A_+}$ be a copy of $N_{\A_+}$ and
let $x^s$ denote the copy of $x$ in $_sN_{\A_+}$.
Set $$N_{\A_+[\B_+]}=\displaystyle\bigcup_{s\in N_{\B_+} }{}_sN_{\A_+}.$$  That is, $N_{\A_+[\B_+]}$ contains a copy of $N_{\A_+}$ for each element of
$N_{\B_+}$.
Define an interim function $\lambda_{\A_+[\B_+]}:N_{\A_+[\B_+]}\times N_{\A_+[\B_+]}\to A_+[B_+]$ as follows:  for all $x^s,y^t \in N_{\A_+[\B_+]}$,
\[
\lambda_{\A_+[\B_+]}\left(x^s,y^t\right)=
\begin{cases}
   \lambda_{\A_+}(x,y)&\text{ if }\lambda_{\B_+}(s,t)=1_{\B}\\
   \lambda_{\B_+}(s,t)&\text{ if }\lambda_{\B_+}(s,t)\neq 1_{\B}.
\end{cases}
\]

\begin{lemma}\label{Lem:InterimLemmaOneAndTwo}
    The map $\lambda_{\A_+[\B_+]}: N_{\A_+[\B_+]}\times N_{\A_+[\B_+]}\to A_+[B_+]$ defined above satisfies 
    conditions (i) and (ii) of Definition~\ref{def:consistent_network_for_frame}.
\end{lemma}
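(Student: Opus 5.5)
Condition (i) is immediate, and condition (ii) reduces to a case analysis on which of the three $\B$-labels $\lambda_{\B_+}(s,t)$, $\lambda_{\B_+}(t,r)$, $\lambda_{\B_+}(s,r)$ equal $1_\B$. Throughout, we use consistency of $\mathcal{N}_{\A_+}$ and $\mathcal{N}_{\B_+}$ together with the case definition of $\circ$ on $A_+[B_+]$.

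For (i), take $x^s$. Consistency of $\mathcal{N}_{\B_+}$ gives $\lambda_{\B_+}(s,s)\in I_{\B_+}$. If $\lambda_{\B_+}(s,s)=1_\B$, then $\lambda_{\A_+[\B_+]}(x^s,x^s)=\lambda_{\A_+}(x,x)\in I_{\A_+}\subseteq I$. Otherwise $\lambda_{\B_+}(s,s)\in I_{\B_+}{\setminus}\{1_\B\}\subseteq I$. Either way the label lies in $I$.

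For (ii), fix $x^s,y^t,z^r$ and write $b_1=\lambda_{\B_+}(s,t)$, $b_2=\lambda_{\B_+}(t,r)$, $b_3=\lambda_{\B_+}(s,r)$. Consistency of $\mathcal{N}_{\B_+}$ gives $b_3\in b_1\circ_{\B_+} b_2$. There are eight cases according to which of $b_1,b_2,b_3$ equal $1_\B$.
\begin{itemize}
\item If all three equal $1_\B$, then every label is an $\A_+$-label, and $\lambda_{\A_+}(x,z)\in\lambda_{\A_+}(x,y)\circ_{\A_+}\lambda_{\A_+}(y,z)\subseteq \lambda_{\A_+}(x,y)\circ\lambda_{\A_+}(y,z)$.
\item If $b_1=b_2=1_\B\neq b_3$, then $b_3\in 1_\B\circ_{\B_+}1_\B$, so $1_\B\preccurlyeq_{\B_+} b_3$ by Definition~\ref{def:frame}(i). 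Hence $b_3<_\B 1_\B$, so $b_3\in I_{\B_+}{\setminus}\{1_\B\}$. The first clause of $\circ$ then puts $b_3$ in the composite of the two $\A_+$-labels.
\item If $b_1=1_\B$ and $b_2\neq 1_\B$, then $b_3\in 1_\B\circ_{\B_+} b_2$, and the relevant product is $i\circ b_2$ with $i\in A_+$. If $b_3\neq 1_\B$, then $b_3\in 1_\B\circ_{\B_+}b_2$ minus $1_\B$ lies in $i\circ b_2$ under either of the two defining clauses. If $b_3=1_\B$, then $1_\B\in 1_\B\circ_{\B_+}b_2$, so $i\circ b_2\supseteq A_+$ and contains the $\A_+$-label $\lambda_{\A_+}(x,z)$. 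The case $b_2=1_\B\neq b_1$ is symmetric, using $b_1\circ_{\B_+}1_\B$.
\item If neither $b_1$ nor $b_2$ equals $1_\B$, the relevant product is $b_1\circ b_2$. If $b_3\neq1_\B$, it lies in $b_1\circ_{\B_+}b_2$ (minus $1_\B$). If $b_3=1_\B$, then $1_\B\in b_1\circ_{\B_+}b_2$, so $A_+\subseteq b_1\circ b_2$.
\end{itemize}

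The main obstacle is purely bookkeeping: matching each configuration to the correct clause of the seven-case definition of $\circ$. The only nontrivial input is the second case above. It is the one place where an $\A_+$-composite must contain a $\B_+$-label, and there Definition~\ref{def:frame}(i) for $\B_+$ forces that label into $I_{\B_+}{\setminus}\{1_\B\}$, which is exactly the set added by the first clause of $\circ$.
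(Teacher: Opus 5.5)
Your proof is correct and follows essentially the same route as the paper: for (i) you split on whether $\lambda_{\B_+}(s,s)=1_\B$, and for (ii) you do a case analysis on the $\B$-labels, using consistency of $\mathcal{N}_{\A_+}$ and $\mathcal{N}_{\B_+}$ and the clauses defining $\circ$. The differences are only organisational. You split on whether $b_3=1_\B$ rather than on whether $1_\B$ lies in the relevant $\circ_{\B_+}$-product, which merges the paper's Cases 2/3, 4/5 and 6/7. In the case $b_1=b_2=1_\B$ you invoke Definition~\ref{def:frame}(i), where the paper computes $b_3\leqslant_\B 1_\B\cdot_\B 1_\B=1_\B$ directly.
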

\begin{proof}
    Recall that $I = I_{\A_+} \cup (I_{\B_+}{\setminus} \{1_\B\})$. Let $x^s\in N_{\A_+[\B_+]}$, then it
    follows from the definition of $\lambda_{\A_+[\B_+]}$  that $\lambda_{\A_+[\B_+]}\left(x^s,x^s\right)\in I$ and hence (i) is satisfied.

    For item (ii) of Definition~\ref{def:consistent_network_for_frame}, let $x^s,y^t,z^r\in N_{\A_+[\B_+]}$ and consider the following cases:
    
    Case 1: $\lambda_{\B_+}(s, t) = 1_\B$ and $\lambda_{\B_+}(t, r) = 1_\B$. Then \begin{align*}
    \lambda_{\A_+[\B_+]}(x^s,y^t)\circ 
    \lambda_{\A_+[\B_+]}(y^t,z^r) & = \lambda_{\A_+}(x, y) \circ \lambda_{\A_+}(y, z) \\
    & = (\lambda_{\A_+}(x, y) \circ_{\A_+} \lambda_{\A_+}(y, z)) \cup (I_{\B_+}{\setminus}\{1_\B\}).
    \end{align*}
    Since $\mathcal{N}_{\B_+}$ is a consistent network, we have $\lambda_{\B_+}(s, r) \in \lambda_{\B_+}(s,t) \circ_{\B_+} \lambda_{\B_+}(t, r)$, which means $\lambda_{\B_+}(s, r) \leqslant_\B \lambda_{\B_+}(s,t) \cdot_{\B} \lambda_{\B_+}(t, r) = 1_\B \cdot_{\B} 1_\B = 1_\B$.
    If $\lambda_{\B_+}(s,r)=1_\B$, then $\lambda_{\A_+[\B_+]}(x^s,z^r)=\lambda_{\A_+}(x,z)$ and so, since $\mathcal{N}_{\A_+}$ is
    consistent, 
    \begin{align*}
   \lambda_{\A_+[\B_+]}(x^s, z^r) = \lambda_{\A_+}(x,z)\in \lambda_{\A_+}(x,y)\circ_{\A_+} 
    \lambda_{\A_+}(y,z) & \subseteq \lambda_{\A_+[\B_+]}(x^s,y^t)\circ 
    \lambda_{\A_+[\B_+]}(y^t,z^r).
    \end{align*}
    If 
    $\lambda_{\B_+}(s,r) <_\B 1_\B$, then 
    $\lambda_{\A_+[\B_+]}(x^s,z^r)=\lambda_{\B_+}(s,r)   \in I_{\B_+}\backslash\{1_\B\} \subseteq \lambda_{\A_+[\B_+]}(x^s,y^t)\circ 
    \lambda_{\A_+[\B_+]}(y^t,z^r)$.
        
    Case 2: $\lambda_{\B_+}(s, t) = 1_\B$ and $\lambda_{\B_+}(t, r) \neq 1_\B$ 
    with $1_\B\notin 1_\B\circ_{\B_+}\lambda_{\B_+}(t,r)$. Then  
    \begin{align*}
    \lambda_{\A_+[\B_+]}(x^s,y^t)\circ 
    \lambda_{\A_+[\B_+]}(y^t,z^r) & = \lambda_{\A_+}(x, y)\circ
    \lambda_{\B_+}(t,r)  
    =  1_\B \circ_{\B_+} \lambda_{\B_+}(t, r).
    \end{align*}
    Since $\mathcal{N}_{\B_+}$ is consistent, we have  
    $\lambda_{\B_+}(s,r)\in \lambda_{\B_+}(s,t)\circ_{\B_+} 
    \lambda_{\B_+}(t,r) = 1_\B \circ_{\B_+} \lambda_{\B_+}(t, r)$, and as $1_\B\notin \lambda_{\B_+}(s,t)\circ_{\B_+}\lambda_{\B_+}(t,r)$ it follows that $\lambda_{\B_+}(s,r)\neq 1_{\B}$. Hence,
    $\lambda_{\A_+[\B_+]}(x^s,z^r)=\lambda_{\B_+}(s,r)$,
    and the result follows.
    
    Case 3: $\lambda_{\B_+}(s, t) = 1_\B$ and $\lambda_{\B_+}(t, r) \neq 1_\B$ 
   with $1_\B\in 1_\B\circ_{\B_+}\lambda_{\B_+}(t,r)$.
    Here 
    \begin{align*}
    \lambda_{\A_+[\B_+]}(x^s,y^t)\circ 
    \lambda_{\A_+[\B_+]}(y^t,z^r) & = \lambda_{\A_+}(x, y)\circ 
    \lambda_{\B_+}(t,r) 
    =((1_\B\circ_{\B_+} 
    \lambda_{\B_+}(t,r))\backslash\{1_{\B}\})\cup A_+.
    \end{align*}
    If $\lambda_{\B_+}(s,r)=1_\B$
    , then 
    $\lambda_{\A_+[\B_+]}(x^s,z^r)=\lambda_{\A_+}(x,z)\in A_+$.  If $\lambda_{\B_+}(s,r)\neq 1_{\B}$, then we have $\lambda_{\A_+[\B_+]}(x^s,z^r)=\lambda_{\B_+}(s,r)\in
    \lambda_{\B_+}(s,t)\circ_{\B_+} 
    \lambda_{\B_+}(t,r) = 1_\B \circ_{\B_+} \lambda_{\B_+}(t, r)$ since $\mathcal{N}_{\B_+}$ is a consistent network.  In either case, the result follows.
    
    Case 4: $\lambda_{\B_+}(s, t) \neq 1_\B$ and $\lambda_{\B_+}(t, r) = 1_\B$ 
    with $1_\B\notin \lambda_{\B_+}(s,t)\circ_{\B_+} 1_\B$. This case is similar to Case 2. 
    
    Case 5: $\lambda_{\B_+}(s, t) \neq 1_\B$ and $\lambda_{\B_+}(t, r) = 1_\B$ 
    with $1_\B\in \lambda_{\B_+}(s,t)\circ_{\B_+}1_\B$. This case is similar to Case 3.
    
    Case 6: $\lambda_{\B_+}(s, t) \neq 1_\B$ and $\lambda_{\B_+}(t, r) \neq 1_\B$ 
    with $1_\B\notin \lambda_{\B_+}(s,t)\circ_{\B_+}\lambda_{\B_+}(t,r)$. In this case 
    \begin{align*}
    \lambda_{\A_+[\B_+]}(x^s,y^t)\circ \lambda_{\A_+[\B_+]}(y^t,z^r) & = \lambda_{\B_+}(s,t)\circ\lambda_{\B_+}(t, r) = \lambda_{\B_+}(s,t)\circ_{\B_+}\lambda_{\B_+}(t, r). 
    \end{align*}
    Since $\mathcal{N}_{\B_+}$ is consistent, we have $\lambda_{\B_+}(s, r) \in \lambda_{\B_+}(s, t)\circ_{\B_+} \lambda_{\B_+}(t, r)$, and so it must be the case that $\lambda_{\B_+}(s, r) \neq 1_\B$. Hence, $\lambda_{\A_+[\B_+]}(x^s, z^r) = \lambda_{\B_+}(s, r) \in \lambda_{\A_+[\B_+]}(x^s,y^t)\circ \lambda_{\A_+[\B_+]}(y^t,z^r)$. 
    
    Case 7: $\lambda_{\B_+}(s, t) \neq 1_\B$ and $\lambda_{\B_+}(t, r) \neq 1_\B$ 
    with $1_\B \in \lambda_{\B_+}(s,t)\circ_{\B_+}\lambda_{\B_+}(t,r)$. Then 
    \begin{align*}
    \lambda_{\A_+[\B_+]}(x^s,y^t)\circ \lambda_{\A_+[\B_+]}(y^t,z^r) & = \lambda_{\B_+}(s,t)\circ\lambda_{\B_+}(t, r)  \\
    & = ((\lambda_{\B_+}(s,t)\circ_{\B_+}\lambda_{\B_+}(t, r)){\setminus}\{1_\B\}) \cup A_+. 
    \end{align*}
    If $\lambda_{\B_+}(s,r) = 1_\B$, then $\lambda_{\A_+[\B_+]}(x^s, z^r) = \lambda_{\A_+}(s, r) \in A_+$, and the result follows. On the other hand, if $\lambda_{\B_+}(s,r) \neq 1_\B$, then $\lambda_{\A_+[\B_+]}(x^s,z^r)=\lambda_{\B_+}(s,r)\in
    \lambda_{\B_+}(s,t)\circ_{\B_+} 
    \lambda_{\B_+}(t,r)$ since $\mathcal{N}_{\B_+}$ is consistent. Thus, $\lambda_{\A_+[\B_+]}(x^s,z^r) \in \lambda_{\A_+[\B_+]}(x^s,y^t)\circ \lambda_{\A_+[\B_+]}(y^t,z^r)$.
\end{proof}

Define a binary relation ${\equiv} \subseteq \left(\bigcup_{s\in N_{\B_+}} {}_sN_{\A_+}\right)\times \left(\bigcup_{s\in N_{\B_+}} {}_sN_{\A_+}\right)$ by  
\[x^s\equiv y^t\quad \textnormal{iff} \quad 
\lambda_{\A_+[\B_+]}(x^s,y^t)\in I\quad\text{and}\quad
\lambda_{\A_+[\B_+]}(y^t,x^s)\in I.\]

\begin{lemma}\label{lem:EquivRel}
The binary relation $\equiv$ defined above is an equivalence relation.
\end{lemma}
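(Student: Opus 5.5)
Reflexivity and symmetry are immediate, so the plan is to spend the effort on transitivity. For reflexivity, Lemma~\ref{Lem:InterimLemmaOneAndTwo} (condition (i) of Definition~\ref{def:consistent_network_for_frame}) gives $\lambda_{\A_+[\B_+]}(x^s,x^s)\in I$, hence $x^s\equiv x^s$. Symmetry holds because the defining condition of $\equiv$ is symmetric in the two arguments.

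For transitivity, assume $x^s\equiv y^t$ and $y^t\equiv z^r$. The idea is to combine condition (ii) of Lemma~\ref{Lem:InterimLemmaOneAndTwo} with Lemma~\ref{lem:IcircI=I}, applied to the wreath product frame $\A_+[\B_+]$, which is a frame by Theorem~\ref{thm:wreat_product_DqRA-frames}. By (ii),
\[
\lambda_{\A_+[\B_+]}(x^s,z^r)\in \lambda_{\A_+[\B_+]}(x^s,y^t)\circ\lambda_{\A_+[\B_+]}(y^t,z^r)\subseteq I\circ I = I,
\]
and similarly $\lambda_{\A_+[\B_+]}(z^r,x^s)\in \lambda_{\A_+[\B_+]}(z^r,y^t)\circ \lambda_{\A_+[\B_+]}(y^t,x^s)\subseteq I$. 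Hence $x^s\equiv z^r$.

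The one point to double-check is that the extension of $\circ$ to subsets is monotone, so that the inclusion $\lambda(x^s,y^t)\circ\lambda(y^t,z^r)\subseteq I\circ I$ is legitimate. This follows from the definition $U\circ V=\bigcup\{e\circ f\mid e\in U, f\in V\}$. Lemma~\ref{lem:IcircI=I} can be used here because it holds in any frame, so no separate case analysis on whether the labels lie in $A_+$ or $B_+{\setminus}\{1_\B\}$ is needed. If one preferred to avoid citing the frame property, the fallback would be a direct case split on whether $\lambda_{\B_+}(s,t)$ and $\lambda_{\B_+}(t,r)$ equal $1_\B$, using the definition of $\circ$ on $I_{\A_+}\cup(I_{\B_+}{\setminus}\{1_\B\})$. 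This is routine, and I do not expect any real obstacle.
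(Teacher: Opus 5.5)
Your argument is correct, but it takes a different route from the paper.

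\textbf{The paper's route.} The paper proves transitivity by a four-way case split on whether $\lambda_{\A_+[\B_+]}(x^s,y^t)$ and $\lambda_{\A_+[\B_+]}(y^t,z^r)$ lie in $I_{\A_+}$ or in $I_{\B_+}{\setminus}\{1_\B\}$. In each case it works directly with the component data: consistency of $\mathcal{N}_{\A_+}$ and $\mathcal{N}_{\B_+}$, the inequality $\lambda_{\B_+}(s,r)\leqslant_\B \lambda_{\B_+}(s,t)\cdot_\B\lambda_{\B_+}(t,r)$, and $I\circ I=I$ in $\A_+$ and $\B_+$ separately.

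\textbf{Your route.} You observe that two earlier results already force $\lambda_{\A_+[\B_+]}(x^s,z^r)\in I$. These are condition (ii) for the interim map (Lemma~\ref{Lem:InterimLemmaOneAndTwo}) and $I\circ I=I$ in the wreath product frame itself (Lemma~\ref{lem:IcircI=I} via Theorem~\ref{thm:wreat_product_DqRA-frames}).

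\textbf{Comparison.}
\begin{enumerate}
\item Your version is shorter and uniform. It shows that, for any labelling satisfying conditions (i) and (ii) of Definition~\ref{def:consistent_network_for_frame} over a frame, the relation ``labels in $I$ both ways'' is automatically an equivalence.
\item The case analysis has not disappeared; it has been moved into results proved earlier. These are the seven cases of Lemma~\ref{Lem:InterimLemmaOneAndTwo} and the verification of frame axioms (i)--(ii) for $\A_+[\B_+]$.
\item The paper's argument needs only the component frames and networks. It therefore does not lean on Theorem~\ref{thm:wreat_product_DqRA-frames}, whose proof in the paper is only outlined.
\end{enumerate}

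\textbf{One point to state explicitly.} The finite DqRAs $\A$ and $\B$ are complete and perfect, so Theorem~\ref{thm:wreat_product_DqRA-frames} does apply in this setting. Alternatively, you can bypass the theorem altogether. The inclusion $I\circ I\subseteq I$ for $I=I_{\A_+}\cup(I_{\B_+}{\setminus}\{1_\B\})$ follows in a few lines from the definition of $\circ$, since any product involving some $k<_\B 1_\B$ stays strictly below $1_\B$.
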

\begin{proof}
It is easy to see that $\equiv$ is symmetric. That $\equiv$ is reflexive follows from the fact that $\lambda_{\A_+[\B_+]}$ satisfies item (i) from Definition~\ref{def:consistent_network_for_frame} as was shown in Lemma~\ref{Lem:InterimLemmaOneAndTwo}.  

To
see that $\equiv$ is transitive, let $x^s,y^t,z^r\in N_{\A_+[\B_+]}$ such that $x^s\equiv y^t$ and $y^t\equiv z^r$. Then 
$\lambda_{\A_+[\B_+]}(x^s,y^t)\in I$ and $\lambda_{\A_+[\B_+]}(y^t,x^s)\in I$, and $\lambda_{\A_+[\B_+]}(y^t,z^r)\in I$ and $\lambda_{\A_+[\B_+]}(z^r,y^t)\in I$.
We show that $\lambda_{\A_+[\B_+]}(x^s, z^r) \in I$. The proof that $\lambda_{\A_+[\B_+]}(z^r, x^s) \in I$ is similar. There are four cases to consider.

Case 1: $\lambda_{\A_+[\B_+]}(x^s,y^t)\in I_{\A_+}$ and $\lambda_{\A_+[\B_+]}(y^t,z^r)\in I_{\A_+}$. This implies $\lambda_{\B_+}(s,t)=1_\B$ and $\lambda_{\B_+}(t,r)=1_\B$, so that
$\lambda_{\A_+[\B_+]}(x^s,y^t)=\lambda_{\A_+}(x,y)$
and $\lambda_{\A_+[\B_+]}(y^t,z^r)=\lambda_{\A_+}(y,z)$.
If $\lambda_{\B_+}(s,r)=1_{\B}$,
then $\lambda_{\A_+[\B_+]}(x^s,z^r)=\lambda_{\A_+}(x,z)\in
\lambda_{\A_+}(x,y)\circ_{\A_+}\lambda_{\A_+}(y,z)$ since $\mathcal{N}_{\A_+}$ is consistent. Hence, by Lemma~\ref{lem:IcircI=I}, $\lambda_{\A_+[\B_+]}(x^s,z^r)=\lambda_{\A_+}(x,z)\in I_{\A_+}\circ_{\A_+} I_{\A_+}=I_{\A_+}\subseteq I$.
On the other hand, if $\lambda_{\B_+}(s,r)\neq 1_{\B}$, then $\lambda_{\A_+[\B_+]}(x^s,z^r)=\lambda_{\B_+}(s,r)\in \lambda_{\B_+}(s,t)\circ_{\B_+}\lambda_{\B_+}(t,r)$, which shows $\lambda_{\A_+[\B_+]}(x^s,z^r)=\lambda_{\B_+}(s,r) \in I_{\B_+}\circ_{\B_+} I_{\B_+}=I_{\B_+}$, where the final equality follows from 
Lemma~\ref{lem:IcircI=I}. It follows that $\lambda_{\A_+[\B_+]}(x^s,z^r) \in I_{\B_+}{\setminus}\{1_\B\}\subseteq I$.

Case 2: $\lambda_{\A_+[\B_+]}(x^s,y^t)\in I_{\A_+}$ and $\lambda_{\A_+[\B_+]}(y^t,z^r)\in I_{\B_+}{\setminus}\{1_\B\}$.
Then $\lambda_{\A_+[\B_+]}(x^s,y^t)=\lambda_{\A_+}(x,y)$
with $\lambda_{\B_+}(s,t)=1_\B$ and 
$\lambda_{\A_+[\B_+]}(y^t,z^r)=\lambda_{\B_+}(t,r) <_\B 1_\B$. Since $\mathcal{N}_{\B_+}$ is consistent, we get $\lambda_{\B_+}(s, r) \in \lambda_{\B_+}(s,t) \circ_{\B_+} \lambda_{\B_+}(t, r)$, and so $\lambda_{\B_+}(s, r) \leqslant_\B 1_\B \cdot_\B \lambda_{\B_+}(t, r) <_\B 1_\B$. It follows that $\lambda_{\A_+[\B_+]}(x^s, z^r) = \lambda_{\B_+}(s, r) \in I_{\B_+}{\setminus}\{1_\B\} \subseteq I$. 

Case 3: $\lambda_{\A_+[\B_+]}(x^s,y^t)\in I_{\B_+}{\setminus}\{1_\B\}$ and $\lambda_{\A_+[\B_+]}(y^t,z^r)\in I_{\A_+}$. This case is similar to the previous case.

Case 4: $\lambda_{\A_+[\B_+]}(x^s,y^t)\in I_{\B_+}{\setminus}\{1_\B\}$ and $\lambda_{\A_+[\B_+]}(y^t,z^r)\in I_{\B_+}{\setminus}\{1_\B\}$. In this case we obtain  
$\lambda_{\A_+[\B_+]}(x^s,y^t)=\lambda_{\B_+}(s,t) <_\B 1_\B$ and $\lambda_{\A_+[\B_+]}(y^t,z^r)=\lambda_{\B_+}(t,r) <_\B 1_\B$.  
Since $\mathcal{N}_{\B_+}$ is consistent, we have $\lambda_{\B_+}(s,r)\in \lambda_{\B_+}(s,t)\circ_{\B_+}\lambda_{\B_+}(t,r)$, 
and consequently, $\lambda_{\B_+}(s,r)\leqslant_\B \lambda_{\B_+}(s,t)\cdot_\B\lambda_{\B_+}(t,r)<_\B 1_{\B}$. 
This shows that $\lambda_{\A_+[\B_+]}(x^s,z^r)=\lambda_{\B_+}(s,r)\in I_{\B_+}{\setminus}\{1_\B\}\subseteq I$.  

Since $\lambda_{\A_+[\B_+]}(x^s, z^r) \in I$ and $\lambda_{\A_+[\B_+]}(z^r, x^s) \in I$, we can conclude that $x^s \equiv z^r$. 
\end{proof}

Set
$\displaystyle N=N_{\A_+[\B_+]}/{\equiv}$ and
define $\lambda:N\times N\to A_+[B_+]$ by setting, for all $[x^s],[y^t]\in N$, \begin{equation}\label{eq:lambda_final}
\lambda([x^s],[y^t])=\lambda_{\A_+[\B_+]}(x^s,y^t).
\end{equation}

\begin{lemma}\label{lem:lambdaWellDefined}
The map $\lambda$, defined above, is well-defined.  That is, for
all $[w^r],[x^s],[y^t],[z^u]\in N$, we have that
$\lambda([w^r],[y^t])=\lambda([x^s],[z^u])$ whenever $[w^r]=[x^s]$ and $[y^t] =[z^u]$.
\end{lemma}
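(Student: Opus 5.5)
It suffices to show that if $x^s \equiv w^r$ then $\lambda_{\A_+[\B_+]}(x^s, y^t) = \lambda_{\A_+[\B_+]}(w^r, y^t)$ for every $y^t$, together with the symmetric statement in the second argument. Applying the two statements in turn then gives $\lambda([w^r],[y^t])=\lambda([x^s],[y^t])=\lambda([x^s],[z^u])$. The main tool is the fact that $\lambda_{\A_+[\B_+]}$ satisfies (ii) of Definition~\ref{def:consistent_network_for_frame} (Lemma~\ref{Lem:InterimLemmaOneAndTwo}), combined with properties of $I$ in the frame $\A_+[\B_+]$. Write $\mu(p,q)$ for $\lambda_{\A_+[\B_+]}(p,q)$.

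Assume $x^s\equiv w^r$, so $\mu(w^r,x^s)\in I$ and $\mu(x^s,w^r)\in I$. By Lemma~\ref{Lem:InterimLemmaOneAndTwo},
\[
\mu(w^r,y^t)\in \mu(w^r,x^s)\circ\mu(x^s,y^t)\subseteq I\circ \mu(x^s,y^t).
\]
By (i) of Definition~\ref{def:frame}, applied in the frame $\A_+[\B_+]$ of Theorem~\ref{thm:wreat_product_DqRA-frames}, this gives $\mu(x^s,y^t)\preccurlyeq\mu(w^r,y^t)$. Symmetrically, $\mu(x^s,y^t)\in I\circ\mu(w^r,y^t)$ yields $\mu(w^r,y^t)\preccurlyeq \mu(x^s,y^t)$. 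Antisymmetry of $\preccurlyeq$ (Lemma~\ref{lem:wreat_product_poset}) then gives equality. The second argument is handled identically, using $\mu(x^s,y^t)\in\mu(x^s,z^u)\circ\mu(z^u,y^t)\subseteq \mu(x^s,z^u)\circ I$ and the other half of (i) of Definition~\ref{def:frame}.

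The main obstacle is that this argument leans on Theorem~\ref{thm:wreat_product_DqRA-frames}. That theorem requires $\A,\B$ to be complete perfect, which holds here since they are finite DqRAs. It also requires $1_\A\in A_+$, $1_\B\in B_+$, and $1_\B$ fixed by the three unary operations, and these are exactly the standing hypotheses. If one prefers to avoid invoking the full frame theorem, an alternative is a direct case split on whether $\lambda_{\B_+}(r,s)=1_\B$ and whether $\lambda_{\B_+}(s,t)=1_\B$. That route would use consistency of $\mathcal N_{\A_+}$ and $\mathcal N_{\B_+}$ (in particular (i) of Definition~\ref{def:frame} inside $\A_+$ and $\B_+$, and the fact that $k<_\B 1_\B$ forces $k\cdot_\B j\leqslant_\B j$). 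It is more tedious, and the frame-level argument above is the one I would write.
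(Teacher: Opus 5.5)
Your argument is correct, but it takes a different route from the paper.

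\textbf{The paper's route.} The paper splits into four cases according to whether $\lambda_{\A_+[\B_+]}(w^r,x^s)$ and $\lambda_{\A_+[\B_+]}(x^s,w^r)$ lie in $I_{\A_+}$ or in $I_{\B_+}{\setminus}\{1_\B\}$. In each case it uses condition (iii) of Definition~\ref{def:consistent_network_for_frame} for the component networks to force $r=s$, and additionally $w=x$ when both labels lie in $I_{\A_+}$. It then compares $\lambda_{\B_+}(r,t)$ with $\lambda_{\B_+}(s,t)$ directly. The possibility $\lambda_{\B_+}(r,t)=1_\B$ is ruled out via $\lambda_{\B_+}(s,t)\leqslant_\B\lambda_{\B_+}(s,r)\cdot_\B\lambda_{\B_+}(r,t)$.

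\textbf{Your route.} You show that well-definedness needs only three ingredients: condition (ii) for the interim labelling (Lemma~\ref{Lem:InterimLemmaOneAndTwo}), axiom (i) of Definition~\ref{def:frame}, and antisymmetry of $\preccurlyeq$. Sandwiching $\mu(w^r,y^t)\in I\circ\mu(x^s,y^t)$ against $\mu(x^s,y^t)\in I\circ\mu(w^r,y^t)$ then gives equality.

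\textbf{What each buys.} Your argument is shorter and more conceptual. It proves a general fact: for any frame and any labelling satisfying (ii), nodes that are $I$-related in both directions have identical rows and columns. So quotienting by $\equiv$ is always well defined, whatever the wreath-product structure.

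The price is that you import Theorem~\ref{thm:wreat_product_DqRA-frames} and Lemma~\ref{lem:wreat_product_poset}, and you must observe that finite DqRAs are complete and perfect, which you did. The paper's check of frame axiom (i) is itself a case analysis, so your case work is relocated rather than removed. That said, you only need axiom (i) and antisymmetry, and both are easy to verify directly for $\preccurlyeq$ and $\circ$.

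The paper's concrete argument avoids any dependence on the frame theorem. It also yields extra structural information: $\equiv$-equivalent nodes always lie in the same copy ${}_sN_{\A_+}$.
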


\begin{proof}
    It is convenient to prove this in two steps, namely, $[w^r]=[x^s]$ implies $\lambda([w^r],[y^t])=\lambda([x^s],[y^t])$ and $[y^t] =[z^u]$ implies $\lambda([w^r],[y^t])=\lambda([w^r],[z^u])$. We prove the first implication and leave the second implication for the reader. 
    Let $[w^r], [x^s], [y^t]\in N$ be such that $[w^r]=[x^s]$, i.e.,
    $w^r\equiv x^s$.  Then $\lambda_{\A_+[\B_+]}(w^r, x^s)\in I$ and $\lambda_{\A_+[\B_+]}(x^s, w^r)\in I$ by definition
    and we consider four cases.

    Case 1: $\lambda_{\A_+[\B_+]}(w^r, x^s)\in I_{\A_+}$ and $\lambda_{\A_+[\B_+]}(x^s, w^r)\in I_{\A_+}$. In this case, $\lambda_{\B_+}(r, s)=1_\B$ and $\lambda_{\B_+}(s, r)=1_\B$ so that  $\lambda_{\A_+[\B_+]}(w^r, x^s)=\lambda_{\A_+}(w, x)\in I_{\A_+}$ and
    $\lambda_{\A_+[\B_+]}(x^s, w^r)=\lambda_{\A_+}(x, w)\in I_{\A_+}$  Then $x=w$ by (iii) of Definition~\ref{def:consistent_network_for_frame} since $\mathcal{N}_{\A_+}$ is a consistent network.  As $1_\B\in I_{\B_+}$
    we conclude that $r=s$  since $\mathcal{N}_{\B_+}$ is also a consistent network.  Hence, $w^r=x^s$ and the claim follows.

    Case 2: $\lambda_{\A_+[\B_+]}(w^r, x^s)\in I_{\A_+}$ and $\lambda_{\A_+[\B_+]}(x^s, w^r)\in I_{\B_+}$. The first part implies $\lambda_{\B_+}(r, s)=1_\B$ so that
    $\lambda_{\A_+[\B_+]}(w^r, x^s)=\lambda_{\A_+}(w, x)\in I_{\A_+}$, while the second part gives $\lambda_{\A_+[\B_+]}(x^s, w^r)=\lambda_{\B_+}(s, r)<_\B 1_\B$.  Then $r=s$ by Definition~\ref{def:consistent_network_for_frame}(iii) since $\mathcal{N}_{\B_+}$ is a consistent network. 
    Hence, if $\lambda_{\B_+}(r, t)=1_\B$, then $\lambda_{\B_+}(s, t)=1_{\B}$. Since $\mathcal{N}_{\B_+}$ is
    consistent, it follows from Definition~\ref{def:consistent_network_for_frame}(ii) that
    $\lambda_{\B_+}(s, t)\in\lambda_{\B_+}(s, r)\circ_{\B_+}\lambda_{\B_+}(r, t)$.  But then  
    $1_\B=\lambda_{\B_+}(s, t)\leqslant_{\B}\lambda_{\B_+}(s, r)\cdot_{\B}\lambda_{\B_+}(r, t)=\lambda_{\B_+}(s,r)\cdot_\B 1_\B=\lambda_{\B_+}(s,r)<_\B 1_{\B}$, 
    a contradiction.
    It must therefore  be the case that $\lambda_{\B_+}(r, t)\neq 1_{\B}$.
    Thus, $\lambda_{\B_+}(s, t)\neq 1_\B$ since $s=r$ and 
    $\lambda([w^r],[y^t])=\lambda_{\A_+[\B_+]}(w^r,y^t)
    =\lambda_{\B_+}(r, t)=\lambda_{\B_+}(s, t)=
    \lambda_{\A_+[\B_+]}(x^s,y^t)=\lambda([x^s],[y^t])$.
    
    Case 3: $\lambda_{\A_+[\B_+]}(w^r, x^s)\in I_{\B_+}$ and $\lambda_{\A_+[\B_+]}(x^s, w^r)\in I_{\A_+}$. In this case, the proof of the claim is similar to that of the previous case. 

    Case 4: $\lambda_{\A_+[\B_+]}(w^r, x^s)\in I_{\B_+}$ and $\lambda_{\A_+[\B_+]}(x^s, w^r)\in I_{\B_+}$. Here we have  $\lambda_{\A_+[\B_+]}(w^r, x^s)=\lambda_{\B_+}(r, s)<_\B 1_\B$ and $\lambda_{\A_+[\B_+]}(x^s, w^r)=\lambda_{\B_+}(s, r)<_\B 1_\B$.  Since $\lambda_{\B_+}(r, s)$, $\lambda_{\B_+}(s, r)\in I_{\B_+}$ and
    $\mathcal{N}_{\B_+}$ is
    consistent, it follows from (iii) of Definition~\ref{def:consistent_network_for_frame} that $s=r$.
    As in Case 2, if $\lambda_{\B_+}(r,t)=1_\B$, then $\lambda_{\B_+}(s, t)=1_{\B}$ which leads to a contradiction.
    Hence, $\lambda_{\B_+}(r, t)\neq 1_{\B}$, which implies 
    $\lambda_{\B_+}(s, t)\neq 1_{\B}$, and the claim follows as in Case 2.
\end{proof}

Since $\alpha_{\A_+}$, $\beta_{\A_+}$, $\alpha_{\B_+}$ and $\beta_{\B_+}$ are total functions, we define $\alpha:N\to N$ and $\beta:N\to N$ by setting, for all $[x^s]\in N$,
\begin{equation}\label{eq:alpha_beta_final}
\alpha([x^s])=\left[\left(\alpha_{\A_+}(x)\right)^{\alpha_{\B_+}(s)}\right]\qquad\text{ and }\qquad
\beta([x^s])=\left[\left(\beta_{\A_+}(x)\right)^{\beta_{\B_+}(s)}\right].
\end{equation}

\begin{lemma}\label{lem:alpha_and_beta_well-defined}
The maps $\alpha$ and $\beta$, defined above, are well-defined. That is, for all $[x^s], [y^t] \in N$, we have $\alpha([x^s]) = \alpha([y^t])$ and $\beta([x^s]) = \beta([y^t])$ whenever $[x^s] = [y^t]$. 
\end{lemma}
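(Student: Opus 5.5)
The plan is to first give an explicit description of the equivalence relation $\equiv$, and then check that $\alpha$ and $\beta$ respect that description. The main tool for the second step is Lemma~\ref{lem:LambdaEqualToOne}, applied to the consistent network $\mathcal{N}_{\B_+}$.

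\textbf{Step 1 (characterising $\equiv$).} I claim that $x^s \equiv y^t$ holds iff either $x^s = y^t$, or $s = t$ and $\lambda_{\B_+}(s,s) \neq 1_\B$.

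For the forward direction, I reuse the case analysis in the proof of Lemma~\ref{lem:lambdaWellDefined}. If both labels $\lambda_{\A_+[\B_+]}(x^s,y^t)$ and $\lambda_{\A_+[\B_+]}(y^t,x^s)$ lie in $I_{\A_+}$, then $x^s = y^t$, by Case~1 there. In every other case, Definition~\ref{def:consistent_network_for_frame}(iii) applied to $\mathcal{N}_{\B_+}$ gives $s = t$.
\begin{itemize}
\item The mixed cases are then impossible. They would require $\lambda_{\B_+}(s,s) = 1_\B$ and $\lambda_{\B_+}(s,s) <_\B 1_\B$ simultaneously.
\item Hence the only remaining possibility is $s = t$ with $\lambda_{\B_+}(s,s) <_\B 1_\B$.
\end{itemize}
For the converse, suppose $s = t$ and $\lambda_{\B_+}(s,s) \neq 1_\B$. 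Then by definition $\lambda_{\A_+[\B_+]}(x^s,y^s) = \lambda_{\B_+}(s,s) = \lambda_{\A_+[\B_+]}(y^s,x^s)$. This label lies in $I_{\B_+}{\setminus}\{1_\B\} \subseteq I$ by Definition~\ref{def:consistent_network_for_frame}(i), so $x^s \equiv y^s$.

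\textbf{Step 2 ($\alpha$).} Assume $x^s \equiv y^t$. If $x^s = y^t$ there is nothing to prove. Otherwise $s = t$ and $\lambda_{\B_+}(s,s) \neq 1_\B$, and both images $\alpha_{\A_+}(x)^{\alpha_{\B_+}(s)}$ and $\alpha_{\A_+}(y)^{\alpha_{\B_+}(s)}$ lie in the same copy ${}_{\alpha_{\B_+}(s)}N_{\A_+}$. By Step~1 it suffices to show $\lambda_{\B_+}(\alpha_{\B_+}(s),\alpha_{\B_+}(s)) \neq 1_\B$. This follows from Lemma~\ref{lem:LambdaEqualToOne}(i) with $x = y = s$, which says $\lambda_{\B_+}(s,s) = 1_\B$ iff $\lambda_{\B_+}(\alpha_{\B_+}(s),\alpha_{\B_+}(s)) = 1_\B$. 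The lemma applies because $\alpha_{\B_+}$ is total.

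\textbf{Step 3 ($\beta$).} The argument is the same, now using Lemma~\ref{lem:LambdaEqualToOne}(ii) with $x = y = s$. It gives $\lambda_{\B_+}(s,s) = 1_\B$ iff $\lambda_{\B_+}(\beta_{\B_+}(s),\beta_{\B_+}(s)) = 1_\B$, and the lemma applies because $\alpha_{\B_+}$ and $\beta_{\B_+}$ are total.

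The only real work is Step~1, specifically ruling out the mixed cases so that $\equiv$-classes are singletons except for whole copies ${}_sN_{\A_+}$ with $\lambda_{\B_+}(s,s) <_\B 1_\B$. Once that is established, the remaining steps are immediate.
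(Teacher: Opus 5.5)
Your proof is correct, but it is organised differently from the paper's.

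\textbf{The paper's route.} The paper splits into four cases, according to whether $\lambda_{\A_+[\B_+]}(x^s,y^t)$ and $\lambda_{\A_+[\B_+]}(y^t,x^s)$ lie in $I_{\A_+}$ or in $I_{\B_+}{\setminus}\{1_\B\}$. In each case it pushes the labels through $\alpha_{\A_+}$ and $\alpha_{\B_+}$, using condition (iv) of Definition~\ref{def:consistent_network_for_frame} for both networks together with Lemma~\ref{lem:LambdaEqualToOne}(i). It writes out only the mixed case $\lambda_{\B_+}(s,t)=1_\B$, $\lambda_{\B_+}(t,s)<_\B 1_\B$, and leaves the remaining cases and $\beta$ to the reader.

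\textbf{Your route.} You first determine $\equiv$ completely. In doing so you observe that the two mixed cases cannot occur. Both $\lambda_{\B_+}(s,t)$ and $\lambda_{\B_+}(t,s)$ lie in $I_{\B_+}$, so condition (iii) for $\mathcal{N}_{\B_+}$ forces $s=t$. Then $\lambda_{\B_+}(s,s)$ would have to equal $1_\B$ and be strictly below it. So the case the paper treats in detail is in fact vacuous, although its argument is not wrong.

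With the characterisation in hand, well-definedness of both $\alpha$ and $\beta$ reduces to one fact: $\alpha_{\B_+}$ and $\beta_{\B_+}$ preserve and reflect the property $\lambda_{\B_+}(s,s)=1_\B$. That is exactly Lemma~\ref{lem:LambdaEqualToOne} with $x=y=s$. You never need conditions (iv) or (v) for $\mathcal{N}_{\A_+}$.

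\textbf{Comparison.} Your argument is shorter and handles $\alpha$ and $\beta$ uniformly. It also makes the structure of the quotient explicit: the $\equiv$-classes are singletons, except that a whole copy ${}_sN_{\A_+}$ collapses to one point when $\lambda_{\B_+}(s,s)<_\B 1_\B$. This explains the collapse seen in Example~\ref{ex:wreath_products_networks_ex2}, and it would streamline Lemma~\ref{lem:lambdaWellDefined} as well. The paper's case-by-case transport avoids proving an auxiliary characterisation, but it hides this structure.
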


\begin{proof}
We will prove the claim for $\alpha$. The proof that $\beta$ is well-defined follows similarly. Let $[x^s], [y^t] \in N$ such that $[x^s] = [y^t]$. Then we have $x^s \equiv y^t$, and so $\lambda_{\A_+[\B_+]}(x^s, y^t) \in I$ and $\lambda_{\A_+[B_+]}(y^t, x^s) \in I$. There are four cases, but we consider only one case and leave the other three cases for the reader to check.  

%
%
Assume
$\lambda_{\A_+[\B_+]}(x^s, y^t) \in I_{\A_+}$ and $\lambda_{\A_+[\B_+]}(y^t, x^s) \in I_{\B_+}{\setminus}\{1_\B\}$. Then $\lambda_{\B_+}(s, t) =1_\B$, $\lambda_{\A_+[\B_+]}(x^s, y^t) = \lambda_{\A_+}(x, y)$ and $\lambda_{\B_+}(t, s) <_\B 1_\B$. Since $\mathcal{N}_{\B_+}$ satisfies Definition~\ref{def:consistent_network_for_frame} (iv), the latter implies that $\lambda_{\B_+}(\alpha_{\B_+}(t), \alpha_{\B_+}(s)) \leqslant_\B 1_\B$. Hence, by Lemma~\ref{lem:LambdaEqualToOne} (i), $\lambda_{\B_+}(\alpha_{\B_+}(s), \alpha_{\B_+}(t)) = 1_\B$ and $\lambda_{\B_+}(\alpha_{\B_+}(t), \alpha_{\B_+}(s)) <_\B 1_\B$, which means $$\lambda_{\A_+[\B_+]}\left(\left(\alpha_{\A_+}(x)\right)^{\alpha_{\B_+}(s)}, \left(\alpha_{\A_+}(y)\right)^{\alpha_{\B_+}(t)}\right) = \lambda_{\A_+}(\alpha_{\A_+}(x), \alpha_{\A_+}(y))$$ and $$\lambda_{\A_+[\B_+]}\left(\left(\alpha_{\A_+}(y)\right)^{\alpha_{\B_+}(t)}, \left(\alpha_{\A_+}(x)\right)^{\alpha_{\B_+}(s)}\right) = \lambda_{\B_+}(\alpha_{\B_+}(t), \alpha_{\B_+}(s)) <_\B 1_\B.$$ Since $\mathcal{N}_{\A_+}$ satisfies Definition~\ref{def:consistent_network_for_frame} (iv) and $\lambda_{\A_+}(x, y) \in I_{\A_+}$, we have $\lambda_{\A_+}(\alpha_{\A_+}(x), \alpha_{\A_+}(y)) \in I_{\A_+}$. It follows that $$\lambda_{\A_+[\B_+]}\left(\left(\alpha_{\A_+}(x)\right)^{\alpha_{\B_+}(s)}, \left(\alpha_{\A_+}(y)\right)^{\alpha_{\B_+}(t)}\right) \in I_{\A_+}\subseteq I$$ and $$\lambda_{\A_+[\B_+]}\left(\left(\alpha_{\A_+}(y)\right)^{\alpha_{\B_+}(t)}, \left(\alpha_{\A_+}(x)\right)^{\alpha_{\B_+}(s)}\right) \in I_{\B_+}{\setminus}\{1_\B\} \subseteq I.$$ Hence, $\left(\alpha_{\A_+}(x)\right)^{\alpha_{\B_+}(s)} \equiv \left(\alpha_{\A_+}(y)\right)^{\alpha_{\B_+}(t)}$, and therefore $$\alpha([x^s]) = \left[\left(\alpha_{\A_+}(x)\right)^{\alpha_{\B_+}(s)}\right] = \left[\left(\alpha_{\A_+}(y)\right)^{\alpha_{\B_+}(t)}\right] = \alpha([y^t]),$$
as required. 
\end{proof}


\begin{lemma}\label{lem:WreathProductNetworkConsistent}
Let $N$, $\lambda$, $\alpha$, $\beta$ be defined as above. Then $\mathcal{N} = (N, \lambda, \alpha, \beta)$ is a consistent network for $\A_+[\B_+]$. 
\end{lemma}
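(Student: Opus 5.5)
We verify conditions (i)--(ix) of Definition~\ref{def:consistent_network_for_frame} one at a time. By Lemmas~\ref{lem:lambdaWellDefined} and~\ref{lem:alpha_and_beta_well-defined} we may work with representatives $x^s$, and $\lambda([x^s],[y^t])=\lambda_{\A_+[\B_+]}(x^s,y^t)$. Conditions (i) and (ii) then follow at once from Lemma~\ref{Lem:InterimLemmaOneAndTwo}. Condition (iii) holds because $\equiv$ was chosen for this purpose. If $\lambda([x^s],[y^t])\in I$ and $\lambda([y^t],[x^s])\in I$, then $x^s\equiv y^t$, so $[x^s]=[y^t]$.

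For (iv)--(ix) we split into two cases according to whether $\lambda_{\B_+}(s,t)=1_\B$. For (iv), we use Lemma~\ref{lem:LambdaEqualToOne}(i): $\lambda_{\B_+}(s,t)=1_\B$ iff $\lambda_{\B_+}(\alpha_{\B_+}(s),\alpha_{\B_+}(t))=1_\B$.
\begin{itemize}
\item[] In the first case, membership of $\lambda_{\A_+}(x,y)$ in $I_{\A_+}$ is transported by (iv) for $\mathcal N_{\A_+}$.
\item[] In the second case, membership of $\lambda_{\B_+}(s,t)$ in $I_{\B_+}{\setminus}\{1_\B\}$ is transported by (iv) for $\mathcal N_{\B_+}$, together with the fact that the label stays $\neq 1_\B$.
\end{itemize}
Condition (v) is handled the same way, using Lemma~\ref{lem:LambdaEqualToOne}(ii) to get $\lambda_{\B_+}(s,t)=1_\B$ iff $\lambda_{\B_+}(\beta_{\B_+}(t),\beta_{\B_+}(s))=1_\B$, and then applying (v) of each component network. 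Conditions (vi) and (vii) are componentwise. Indeed, $\beta(\beta([x^s]))=[\beta_{\A_+}^2(x)^{\beta_{\B_+}^2(s)}]=[x^s]$, and $\alpha\beta\alpha$ is computed coordinatewise as well, using (vi) and (vii) of $\mathcal N_{\A_+}$ and $\mathcal N_{\B_+}$.

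For (viii), we compute $\lambda([y^t],\alpha([x^s]))=\lambda_{\A_+[\B_+]}(y^t,\alpha_{\A_+}(x)^{\alpha_{\B_+}(s)})$. By Lemma~\ref{lem:LambdaEqualToOne}(i), $\lambda_{\B_+}(t,\alpha_{\B_+}(s))=1_\B$ iff $\lambda_{\B_+}(s,t)=1_\B$.
\begin{itemize}
\item[] If $\lambda_{\B_+}(s,t)=1_\B$, the value is $\lambda_{\A_+}(y,\alpha_{\A_+}(x))=\lambda_{\A_+}(x,y)^{\sim_{\A_+}}=\lambda([x^s],[y^t])^{\sim}$.
\item[] Otherwise, the value is $\lambda_{\B_+}(t,\alpha_{\B_+}(s))=\lambda_{\B_+}(s,t)^{\sim_{\B_+}}$. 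This equals the wreath-product $^\sim$ because $\lambda_{\B_+}(s,t)\neq 1_\B$.
\end{itemize}
Condition (ix) follows the same pattern. We use Lemma~\ref{lem:LambdaEqualToOne}(ii) to see that $\lambda_{\B_+}(\beta_{\B_+}(\alpha_{\B_+}(s)),\beta_{\B_+}(t))=1_\B$ iff $\lambda_{\B_+}(s,t)=1_\B$. Then (ix) for the relevant component network gives $\lambda(\beta(\alpha([x^s])),\beta([y^t]))=\lambda([x^s],[y^t])^{\neg}$. Here we need $\beta(\alpha([x^s]))=[\beta_{\A_+}(\alpha_{\A_+}(x))^{\beta_{\B_+}(\alpha_{\B_+}(s))}]$, which is immediate from the definitions in~\eqref{eq:alpha_beta_final}.

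The main obstacle is (iv) and (v) in the second case. We need to check that a label in $I_{\B_+}{\setminus}\{1_\B\}$ is sent to a label in $I_{\B_+}{\setminus}\{1_\B\}$, and that a label outside $I_{\B_+}$ is sent to a label outside $I_{\B_+}$. This is exactly where the hypothesis $1_\B^{\sim}=1_\B^{-}=1_\B^{\neg}=1_\B$ enters, through Lemma~\ref{lem:LambdaEqualToOne}. Once this bookkeeping is done, the rest of the verification reduces to the corresponding conditions for $\mathcal N_{\A_+}$ and $\mathcal N_{\B_+}$.
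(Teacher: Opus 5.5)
Your proposal is correct and follows essentially the same route as the paper. Conditions (i)--(iii) come from Lemma~\ref{Lem:InterimLemmaOneAndTwo} and the quotient by $\equiv$, and (vi)--(vii) are componentwise. For (iv), (v), (viii), (ix), you use a two-case split in which Lemma~\ref{lem:LambdaEqualToOne} ensures the relevant $\mathcal{N}_{\B_+}$-label equals $1_\B$ before and after applying $\alpha$ or $\beta$, so that the matching condition of $\mathcal{N}_{\A_+}$ or $\mathcal{N}_{\B_+}$ applies. The only cosmetic difference is in (iv): you split on whether $\lambda_{\B_+}(s,t)=1_\B$ and transport the biconditional at once, while the paper splits on whether the label lies in $I_{\A_+}$ or $I_{\B_+}{\setminus}\{1_\B\}$ and proves the two directions separately.
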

\begin{proof}	
    First, we show that $\mathcal{N}$ satisfies items (i) to (ix) of Definition~\ref{def:consistent_network_for_frame}.
    As the interim function $\lambda_{\A_+[\B_+]}$ satisfies (i) by Lemma~\ref{Lem:InterimLemmaOneAndTwo},
    it follows from the definition of $\lambda$
    that it satisfies (i). 
    Similarly, for (ii), $\lambda([x^s],[z^r])=
    \lambda_{\A_+[\B_+]}(x^s,z^r)\in 
    \lambda_{\A_+[\B_+]}(x^s,y^t)\circ \lambda_{\A_+[\B_+]}(y^t,z^r)=\lambda([x^s],[y^t]) \circ \lambda([y^t], [z^r])$ by
    Lemma~\ref{Lem:InterimLemmaOneAndTwo}.

    The use of the quotient structure is needed to prove item (iii): let $[x^s],[y^t]\in N$ and suppose $\lambda([x^s],[y^t])=\lambda_{\A_+[\B_+]}(x^s,y^t)\in I$ and
    $\lambda([y^t],[x^s])=\lambda_{\A_+[\B_+]}(y^t,x^s)\in I$.  Then, $x^s\equiv y^t$
    and $[x^s]=[y^t]$.
    
    For (iv), 
    first let $\lambda([x^s],[y^t])\in I$. 
    Then $\lambda_{\A_+[\B_+]}(x^s,y^t)\in I_{\A_+}$ or $\lambda_{\A_+[\B_+]}(x^s,y^t)\in I_{\B_+}\backslash\{1_\B\}$.  We consider the two cases. 

    Case 1: $\lambda([x^s],[y^t])=\lambda_{\A_+[\B_+]}(x^s,y^t)\in I_{\A_+}$. In this case, $\lambda_{\B_+}(s,t)=1_\B$ so that $\lambda([x^s],[y^t])=\lambda_{\A_+[\B_+]}(x^s,y^t)=\lambda_{\A_+}(x,y)$. Using Lemma~\ref{lem:LambdaEqualToOne}(i), we obtain $\lambda_{\B_+}(\alpha_{\B_+}(s),\alpha_{\B_+}(t))=1_\B$. Since $\mathcal{N}_{\A_+}$ is consistent, we also have $\lambda_{\A_+}(\alpha_{\A_+}(x),\alpha_{\A_+}(y))\in I_{\A_+}$. Hence,
     \begin{align*}
     \lambda(\alpha([x^s]),\alpha([y^t])) & = \lambda\left(\left[\left(\alpha_{\A_+}(x)\right)^{\alpha_{\B_+}(s)}\right], \left[\left(\alpha_{\A_+}(y)\right)^{\alpha_{\B_+}(t)}\right]\right) \\
       & = \lambda_{\A_+[\B_+]}\left(\left(\alpha_{\A_+}(x)\right)^{\alpha_{\B_+}(s)},\left(\alpha_{\A_+}(y)\right)^{\alpha_{\B_+}(t)}\right)\\
       & = \lambda_{\A_+}(\alpha_{\A_+}(x),\alpha_{\A_+}(y))\in I_{\A_+} \subseteq I.
      \end{align*}
     
    Case 2: $\lambda([x^s],[y^t])=\lambda_{\A_+[\B_+]}(x^s,y^t)\in I_{\B_+}{\setminus}\{1_\B\}$. In this case, $\lambda_{\B_+}(s, t) \neq 1_\B$, and so $\lambda_{\B_+}(\alpha_{\B_+}(s), \alpha_{\B_+}(t)) \neq 1_\B$ by Lemma~\ref{lem:LambdaEqualToOne}(i). Moreover, since $\mathcal{N}_{\B_+}$ is a consistent network, we have $\lambda_{\B_+}(\alpha_{\B_+}(s), \alpha_{\B_+}(t)) \in I_{\B_+}$. Thus, 
    \begin{align*}
       \lambda(\alpha([x^s]),\alpha([y^t])) &=
        \lambda_{\A_+[\B_+]}\left(\left(\alpha_{\A_+}(x)\right)^{\alpha_{\B_+}(s)},\left(\alpha_{\A_+}(y)\right)^{\alpha_{\B_+}(t)}\right) \\
        & =        \lambda_{\B_+}(\alpha_{\B_+}(s),\alpha_{\B_+}(t))\in I_{\B_+}{\setminus}\{1_\B\} \subseteq I.
    \end{align*}

    Now let $\lambda(\alpha([x^s]),\alpha([y^t])) \in I$. Then $\lambda_{\A_+[\B_+]}\left(\left(\alpha_{\A_+}(x)\right)^{\alpha_{\B_+}(s)},\left(\alpha_{\A_+}(y)\right)^{\alpha_{\B_+}(t)}\right) \in I_{\A_+}$ or $\lambda_{\A_+[\B_+]}\left(\left(\alpha_{\A_+}(x)\right)^{\alpha_{\B_+}(s)},\left(\alpha_{\A_+}(y)\right)^{\alpha_{\B_+}(t)}\right) \in I_{\B_+}{\setminus}\{1_\B\}$.

    Case 1: $\lambda_{\A_+[\B_+]}\left(\left(\alpha_{\A_+}(x)\right)^{\alpha_{\B_+}(s)},\left(\alpha_{\A_+}(y)\right)^{\alpha_{\B_+}(t)}\right) \in I_{\A_+}$. Here $\lambda_{\B_+}(\alpha_{\B_+}(s), \alpha_{\B_+}(t)) = 1_\B$ so that $\lambda_{\A_+[\B_+]}\left(\left(\alpha_{\A_+}(x)\right)^{\alpha_{\B_+}(s)},\left(\alpha_{\A_+}(y)\right)^{\alpha_{\B_+}(t)}\right) = \lambda_{\A_+}(\alpha_{\A_+}(x), \alpha_{\A_+}(y)) \in I_{\A_+}$. Since $\mathcal{N}_{\A_+}$ is consistent, we have $\lambda_{\A_+}(x, y) \in I_{\A_+}$. Moreover, by Lemma~\ref{lem:LambdaEqualToOne}(i), we have $\lambda_{\B_+}(s, t) = 1_\B$. Hence, $\lambda([x^s], [y^t]) = \lambda_{\A_+}(x, y) \in I_{\A_+} \subseteq I$. 

    Case 2: $\lambda_{\A_+[\B_+]}\left(\left(\alpha_{\A_+}(x)\right)^{\alpha_{\B_+}(s)},\left(\alpha_{\A_+}(y)\right)^{\alpha_{\B_+}(t)}\right) \in I_{\B_+}{\setminus}\{1_\B\}$. In this case, we have $\lambda_{\B_+}(\alpha_{\B_+}(s), \alpha_{\B_+}(t)) \neq 1_\B$ and  $\lambda_{\B_+}(\alpha_{\B_+}(s), \alpha_{\B_+}(t)) \in I_{\B_+}{\setminus}\{1_\B\}$. Since $\mathcal{N}_{\B_+}$ is consistent, it follows that $\lambda_{\B_+}(s, t) \in I_{\B_+}{\setminus}\{1_\B\}$. Moreover, by Lemma~\ref{lem:LambdaEqualToOne}(i), $\lambda_{\B_+}(s, t)\neq 1_\B$. Consequently, $\lambda([x^s], [y^t]) = \lambda_{\B_+}(s, t) \in I_{\B_+}{\setminus}\{1_\B\} \subseteq I$. 

    The proof of (v) is similar to that of (iv). 
    The proof uses Lemma~\ref{lem:LambdaEqualToOne}(ii) and the fact that $\mathcal{N}_{\A_+}$ and $\mathcal{N}_{\B_+}$ satisfy condition (v) of Definition~\ref{def:consistent_network_for_frame}.  The details are left to the reader.



    
    To prove (vi), let $[x^s]\in N$. Then
    $
    	\beta(\beta([x^s]))=\left[
    	\left(\beta_{\A_+}(\beta_{\A_+}(x))\right)^{\beta_{\B_+}(\beta_{\B_+}(s))}\right]=[x^s]
    $    
    since 
    $\mathcal{N}_{\A_+}$ and $\mathcal{N}_{\B_+}$
    are consistent which implies that 
    $\beta_{\A_+}(\beta_{\A_+}(x))=x$ 
    and $\beta_{\B_+}(\beta_{\B_+}(s))=s$.
    
    For (vii) let $[x^s]\in N$.  Then,
    \begin{align*}        \alpha(\beta(\alpha([x^s])))&=
    \left[\left(\alpha_{\A_+}(\beta_{\A_+}(\alpha_{\A_+}(x)))\right)^{\alpha_{\B_+}(\beta_{\B_+}(\alpha_{\B_+}(x)))}\right]
    =\left[\left(\beta_{\A_+}(x)\right)^{\beta_{\B_+}(s)}\right]
    =\beta([x^s])
    \end{align*}   	    
     since 
    $\mathcal{N}_{\A_+}$ and $\mathcal{N}_{\B_+}$
    are consistent 
    and so 
    $\alpha_{\A_+}(\beta_{\A_+}(\alpha_{\A_+}(x)))=\beta_{\A_+}(x)$ 
    and $\alpha_{\B_+}(\beta_{\B_+}(\alpha_{\B_+}(s)))=\beta_{\B_+}(s)$.

    To see that (viii) holds, let $[x^s],[y^t]\in N$.
    If $\lambda_{\B_+}(t,\alpha_{\B_+}(s))=1_\B$, then
    $$\lambda([y^t],\alpha([x^s]))= \lambda_{\A_+[\B_+]}\left(y^t,\left(\alpha_{\A_+}(x)\right)^{\alpha_{\B_+}(s)}\right)=
    \lambda_{\A_+}(y,\alpha_{\A_+}(x))=
    \lambda_{\A_+}(x,y)^{\sim_{\A_+}}$$ 
    since $\mathcal{N}_{\A_+}$ is consistent. Furthermore, from Lemma~\ref{lem:LambdaEqualToOne} (i),
    it follows that $\lambda_{\B_+}(s,t)=1_\B$, and
    hence $\lambda([x^s],[y^t])^\sim=
    \left(\lambda_{\A_+[\B_+]}(x^s,y^t)\right)^\sim=\lambda_{\A_+}(x,y)^{\sim_{\A_+}}$. Hence, $\lambda([y^t],\alpha([x^s])) = \lambda([x^s],[y^t])^\sim$. 
    On the other hand, if $\lambda_{\B_+}(t,\alpha_{\B_+}(s))\neq 1_\B$,
    then
    $$\lambda([y^t],\alpha([x^s])) = \lambda_{\A_+[\B_+]}\left(y^t,\left(\alpha_{\A_+}(x)\right)^{\alpha_{\B_+}(s)}\right)=
    \lambda_{\B_+}(t,\alpha_{\B_+}(s))=
    \lambda_{\B_+}(s,t)^{\sim_{\B_+}}$$
    since $\mathcal{N}_{\B_+}$ is consistent network. By Lemma~\ref{lem:LambdaEqualToOne} (i), 
    $\lambda_{\B_+}(s,t)\neq 1_{\B}$, and
    thus $\lambda([x^s],[y^t])^\sim=
    \left(\lambda_{\A_+[\B_+]}(x^s,y^t)\right)^\sim=\lambda_{\B_+}(s,t)^{\sim_{\B_+}}$. Hence, $\lambda([y^t],\alpha([x^s])) = \lambda([x^s],[y^t])^\sim$. 

    The proof of (ix) is similar to that of (viii), using item (ii) of Lemma~\ref{lem:LambdaEqualToOne}. 
    \end{proof}

To show that the network constructed above is saturated, we need to make an additional assumption about the dual frames of $\A$ and $\B$. We first introduce some terminology. 

We say that an element $j \in B_+{\setminus}\{1_\B\}$ is \emph{domain-deficient} if there exists some $k \in B_+$ such that $k <_\B 1_\B$ and $j = k \cdot_\B j$. It is said to be \emph{range-deficient} if there exists some $\ell \in B_+$ such that $\ell <_\B 1_\B$ and $j = j \cdot_\B \ell$. If $j \in B_+{\setminus}\{1_\B\}$ is both domain- and range-deficient, then we say that it is \emph{doubly deficient}. 

\begin{lemma}\label{lem:domain_range-deficiency}
Let $i \in A_+$ and $j \in \B_+{\setminus}\{1_\B\}$.
\begin{enumerate}[label=\textup{(\roman*)}]
\item If $j \in B_+$ is domain-deficient, then $i \cmax j = \varnothing$. 
\item If $j \in B_+$ is range-deficient, then $j \cmax i = \varnothing$. 
\end{enumerate}
\end{lemma}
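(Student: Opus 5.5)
The plan is to show directly that no $f$ can satisfy the maximality clause in the definition of $\cmax$. We do this by exhibiting a strictly larger pair $(k,j)$ with $i \preccurlyeq k$, $k\neq i$, and $k\circ j = i\circ j$. We argue (i) in detail; (ii) is the mirror image.

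For (i), fix $i\in A_+$ and a domain-deficient $j\in B_+{\setminus}\{1_\B\}$. Choose $k\in B_+$ with $k<_\B 1_\B$ and $j=k\cdot_\B j$. Since $k\neq 1_\B$, we have $k\in B_+{\setminus}\{1_\B\}\subseteq A_+[B_+]$. Because $\preccurlyeq_{\B_+}$ is $\geqslant_\B$, the inequality $k<_\B 1_\B$ gives $1_\B\prec_{\B_+} k$. The third clause of the definition of $\preccurlyeq$ on $A_+[B_+]$ then yields $i\preccurlyeq k$. Also $k\neq i$, since $i\in A_+$, $k\in B_+$ and $A\cap B=\varnothing$.

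Next we compare $i\circ j$ with $k\circ j$. In $\B_+$, $m\in g\circ_{\B_+} h$ iff $m\leqslant_\B g\cdot_\B h$. Since $k\cdot_\B j = j = 1_\B\cdot_\B j$, we get $k\circ_{\B_+} j = 1_\B\circ_{\B_+} j$. In particular, $1_\B\in k\circ_{\B_+} j$ iff $1_\B\in 1_\B\circ_{\B_+} j$.

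We now read off the relevant cases in the definition of $\circ$ on $A_+[B_+]$. The product $i\circ j$ (with $i\in A_+$, $j\in B_+{\setminus}\{1_\B\}$) is either $1_\B\circ_{\B_+} j$ or $((1_\B\circ_{\B_+}j){\setminus}\{1_\B\})\cup A_+$, according to whether $1_\B\notin 1_\B\circ_{\B_+} j$ or $1_\B\in 1_\B\circ_{\B_+} j$. The product $k\circ j$ (with both factors in $B_+{\setminus}\{1_\B\}$) is either $k\circ_{\B_+} j$ or $((k\circ_{\B_+}j){\setminus}\{1_\B\})\cup A_+$, according to whether $1_\B\notin k\circ_{\B_+} j$ or $1_\B\in k\circ_{\B_+} j$. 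By the previous paragraph the case conditions coincide and the underlying sets coincide, so $k\circ j = i\circ j$.

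Now suppose $f\in i\cmax j$. Then $f\in i\circ j = k\circ j$, with $i\preccurlyeq k$ and $j\preccurlyeq j$. Maximality forces $k=i$, which is a contradiction, so $i\cmax j=\varnothing$.

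For (ii), take $\ell<_\B 1_\B$ with $j=j\cdot_\B \ell$. As before, $i\preccurlyeq \ell$ and $\ell\neq i$. Since $j\cdot_\B\ell=j=j\cdot_\B 1_\B$, we get $j\circ_{\B_+}\ell = j\circ_{\B_+}1_\B$. The matching cases of $\circ$ (the cases $i\in B_+{\setminus}\{1_\B\}, j\in A_+$ versus both factors in $B_+{\setminus}\{1_\B\}$) then give $j\circ \ell = j\circ i$. Any $f\in j\cmax i$ would lie in $j\circ\ell$ with $i\preccurlyeq \ell\neq i$, contradicting maximality.

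The only step needing care is the case bookkeeping in the definition of $\circ$. One must check that the membership test for $1_\B$ selecting the branch is the same for both products. This follows immediately from the equality of the $\B_+$-products, so I expect no real obstacle.
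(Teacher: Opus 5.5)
Your proof is correct, but it is organised differently from the paper's. The paper supposes some element lies in $i\cmax j$ and splits on where that element lives.
\begin{itemize}
\item If the element is in $A_+$, the paper does not use domain-deficiency at all. It observes that $1_\B<_\B j$, hence $i\preccurlyeq j^{-_{\B_+}}$. It then uses Definition~\ref{def:frame}(v) to get $1_\B\in j^{-_{\B_+}}\circ_{\B_+} j$, so that $A_+\subseteq j^{-_{\B_+}}\circ j$.
\item Only when the element is in $B_+{\setminus}\{1_\B\}$ does the paper invoke the deficiency witness, via $m\leqslant_\B j\leqslant_\B \ell\cdot_\B j$ (with $m$ that element and $\ell$ the witness).
\end{itemize}
You instead use the single deficiency witness throughout. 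You prove the set equality $i\circ j=k\circ j$ directly from $k\cdot_\B j=1_\B\cdot_\B j$, checking that the branch conditions in the definition of $\circ$ select matching cases. This disposes of every candidate element at once, needs no frame axioms and no appeal to ${}^{-_{\B_+}}$, and makes (ii) a literal mirror image.

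What the paper's case split buys is a by-product. Its first case shows that $i\cmax j$ contains no element of $A_+$ whenever $1_\B<_\B j$, with no deficiency hypothesis. That observation is reused in Case~2 of the proof of Lemma~\ref{lem:WreathProductNetworkSaturated} (``as in Case 1 of the proof of Lemma~\ref{lem:domain_range-deficiency}''). There, under the alternative hypothesis that $1_\A$ is the top element, $j$ need not be domain-deficient, so your set equality would not be available.
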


\begin{proof}
We prove item (i). The proof of item (ii) is similar. Assume $j \in B_+$ is domain-deficient. Then there is some $\ell \in B_+$ such that $\ell <_\B 1_\B$ and $j = \ell \cdot_\B j$. Suppose, for the sake of a contradiction, there exists some $k \in A_+[B_+]$ such that $k \in i \cmax j$. Then $k \in i \circ j$, and for all $i', j' \in A_+[B_+]$, if $i \preccurlyeq i'$, $j \preccurlyeq j'$ and $k \in i'\circ j'$, then $i = i'$ and $j = j'$. We consider two cases:

Case 1: $k \in A_+$. Then $1_\B \in 1_\B \circ_{\B_+} j$, which means $1_\B\leqslant_\B 1_\B \cdot_B j = j$. Since $j \neq 1_\B$, we have $j \prec_{\B_+} 1_\B$. Hence, $1_\B = 1_{\B}^{-_{\B_+}} \prec_{\B_+} j^{-_{\B_+}}$, and so $i \preccurlyeq j^{-_{\B_+}}$. Now $j^{-_{\B_+}\sim_{\B_+}} = j \in j \circ_{\B_+} 1_\B$, and therefore $1^{-_{\B_+}}_\B \in j^{-_{\B_+}}\circ_{\B_+} j$ by Definition~\ref{def:frame}(v). Since $1_\B^{-_{\B_+}} = 1_\B$, we get  $1_\B \in j^{-_{\B_+}}\circ_{\B_+} j$. Thus, $j^{-_{\B_+}}\circ j = ((j^{-_{\B_+}}\circ_{\B_+} j){\setminus}\{1_\B\}) \cup A_+$, which means $k \in j^{-_{\B_+}}\circ j$. Hence, since $i \preccurlyeq j^{-_{\B_+}}$ and $j \preccurlyeq j$, it must be the case that $i = j^{-_{\B_+}}$, contradicting $A\cap B = \varnothing$. 

Case 2: $k \in B_+{\setminus}\{1_\B\}$. Then $k \in 1_\B \circ_{\B_+} j$, and so $k \leqslant_\B 1_\B \cdot_\B j = j$. Since $1_\B \prec_{\B_+} \ell$, we have $i \preccurlyeq \ell$. Furthermore, $j \leqslant_\B \ell \cdot_\B j$, so $k \leqslant \ell \cdot_\B j$. This shows that $k \in \ell \circ_{\B_+} j$, and moreover $k \in \ell \circ j$. Hence, since $i \preccurlyeq \ell$ and $j \preccurlyeq j$, it must be the case that $i = \ell$, which is a contradiction since $A\cap B = \varnothing$. 
\end{proof}

We are now ready to prove that the network we have constructed for the wreath product  $\A_+[\B_+]$ of $\A_+$ and $\B_+$ is saturated. 

\begin{lemma}\label{lem:WreathProductNetworkSaturated}
If $1_\A$ is the top element of $(A_+, \preccurlyeq_{\A_+})$ or every element of $B_+{\setminus}\{1_\B\}$ is doubly deficient, then $\mathcal{N}= (N, \lambda, \alpha, \beta)$ is saturated. 
\end{lemma}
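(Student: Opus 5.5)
The plan is to check the two clauses of saturation directly from the definitions of $N$, $\lambda$, $\alpha$ and $\beta$ in (\ref{eq:lambda_final}) and (\ref{eq:alpha_beta_final}), using saturation of $\mathcal{N}_{\A_+}$ and $\mathcal{N}_{\B_+}$ to produce witnesses. Totality of $\alpha,\beta$ is immediate because $\alpha_{\A_+},\beta_{\A_+},\alpha_{\B_+},\beta_{\B_+}$ are total; this uses Lemma~\ref{lem:alpha_and_beta_well-defined}. Surjectivity follows since, given $[y^t]$, we can choose $x,s$ with $\alpha_{\A_+}(x)=y$ and $\alpha_{\B_+}(s)=t$, and then $\alpha([x^s])=[y^t]$; $\beta$ is handled in the same way. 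The main work is the witness condition. Assume $\lambda([x^s],[y^t])=k\in f\cmax g$. I would split into cases according to whether each of $f,g,k$ lies in $A_+$ or in $B_+{\setminus}\{1_\B\}$, reading off $f\circ g$ from the seven-clause definition of $\circ$.

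Two observations will be used repeatedly.

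\begin{itemize}
\item[(a)] If $\lambda_{\B_+}(s,t)=1_\B$, then $\lambda_{\B_+}(s,s)=1_\B=\lambda_{\B_+}(t,t)$. Indeed, $1_\B\leqslant_\B \lambda_{\B_+}(s,s)\cdot_\B 1_\B$ by Definition~\ref{def:consistent_network_for_frame}(ii), while $\lambda_{\B_+}(s,s)\leqslant_\B 1_\B$. Hence copies of nodes inside ${}_sN_{\A_+}$ behave exactly like $\mathcal{N}_{\A_+}$.
\item[(b)] Maximality in $\cmax$ computed in $\A_+[\B_+]$ implies maximality in $\A_+$ or $\B_+$ when all the data live in one factor, because the relevant $f',g'$ are also elements of the wreath product. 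It can also rule cases out entirely.
\end{itemize}

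For $f,g\in A_+$: if $k\in A_+$, then $\lambda_{\B_+}(s,t)=1_\B$ and $k=\lambda_{\A_+}(x,y)\in f\cmax g$ in $\A_+$ by (b). Saturation of $\mathcal{N}_{\A_+}$ gives $z$ with $\lambda_{\A_+}(x,z)=f$ and $\lambda_{\A_+}(z,y)=g$. By (a), $[z^s]$ is the required witness. If instead $k\in I_{\B_+}{\setminus}\{1_\B\}$, then $f\preccurlyeq k$ and $k\in k\circ g=k\circ_{\B_+}1_\B$, because $k<_\B 1_\B$ forces $1_\B\notin k\circ_{\B_+}1_\B$. Maximality then gives $f=k$, contradicting $A\cap B=\varnothing$, so this subcase cannot occur. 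For $f,g\in B_+{\setminus}\{1_\B\}$ the situation is symmetric. There $\lambda_{\B_+}(s,t)$ lies in $f\cmax g$ computed in $\B_+$, or, when $k\in A_+$, $1_\B=\lambda_{\B_+}(s,t)$ does. Saturation of $\mathcal{N}_{\B_+}$ gives $r$ with $\lambda_{\B_+}(s,r)=f$ and $\lambda_{\B_+}(r,t)=g$, and any node $[z^r]$ is then a witness since the labels only depend on the $\B_+$-indices.

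The mixed cases, with $f\in A_+$ and $g\in B_+{\setminus}\{1_\B\}$ (and dually), are where the hypothesis of the lemma enters, and I expect them to be the main obstacle. If every element of $B_+{\setminus}\{1_\B\}$ is doubly deficient, Lemma~\ref{lem:domain_range-deficiency} gives $f\cmax g=\varnothing=g\cmax f$, so there is nothing to prove. If $1_\A$ is the top element of $(A_+,\preccurlyeq_{\A_+})$, then $I_{\A_+}=\{1_\A\}$. Moreover $f\circ g$ depends only on $1_\B\circ_{\B_+}g$, so $f\circ g=1_\A\circ g$, and maximality forces $f=1_\A$. I would then show that $\lambda_{\B_+}(s,t)$ (or $1_\B$ when $k\in A_+$) lies in $1_\B\cmax g$ in $\B_+$. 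Here the delicate point is translating maximality over the enlarged order, where $f'$ may lie in $B_+$ below $1_\B$. Saturation of $\mathcal{N}_{\B_+}$ then yields $r$ with $\lambda_{\B_+}(s,r)=1_\B$ and $\lambda_{\B_+}(r,t)=g$. The witness is $[x^r]$: its labels are $\lambda([x^s],[x^r])=\lambda_{\A_+}(x,x)\in I_{\A_+}=\{1_\A\}=\{f\}$ and $\lambda([x^r],[y^t])=g$. The case $g\in A_+$, $f\in B_+{\setminus}\{1_\B\}$ is dual, using range-deficiency and the witness $[y^r]$.
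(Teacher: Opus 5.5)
Your argument is correct. In the cases $f,g\in A_+$ and $f,g\in B_+{\setminus}\{1_\B\}$ it is the paper's argument: the paper uses the witness $[z^t]$ where you use $[z^s]$, and both work by your observation (a). You also verify totality and surjectivity of $\alpha,\beta$, which the paper's proof leaves implicit.

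The genuine difference is the mixed case $f\in A_+$, $g\in B_+{\setminus}\{1_\B\}$ with $1_\A$ the top of $A_+$. The paper never passes to $\B_+$ there.
\begin{itemize}
\item It first excludes $\lambda_{\B_+}(s,t)=1_\B$. Otherwise $1_\B\leqslant_\B g$, so $1_\B\prec_{\B_+}g^{-_{\B_+}}$ and $1_\B\in g^{-_{\B_+}}\circ_{\B_+}g$. This gives $f\preccurlyeq g^{-_{\B_+}}$ and $k\in g^{-_{\B_+}}\circ g$, and maximality forces $f=g^{-_{\B_+}}$, contradicting $A\cap B=\varnothing$.
\item Then $k=\lambda_{\B_+}(s,t)\in 1_\A\circ k$ with $f\preccurlyeq 1_\A$ and $g\preccurlyeq k$, so $f=1_\A$ and $g=k$. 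The only maximal decomposition is the trivial one, and $[x^s]$ itself is the witness once $\lambda([x^s],[x^s])=1_\A$ is checked.
\end{itemize}
This is more elementary, shows that the mixed case is degenerate, and never uses saturation of $\mathcal{N}_{\B_+}$. Your route also works: you transfer to $\lambda_{\B_+}(s,t)\in 1_\B\circ^{\max}_{\B_+}g$, then use saturation of $\mathcal{N}_{\B_+}$ and the witness $[x^r]$, which hands you $\lambda_{\B_+}(s,r)=1_\B$ for free. But it rests on the transfer you left as ``the delicate point''. Note also that your sub-case $k\in A_+$ is in fact vacuous, so the transferred statement holds there only vacuously.

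The transfer goes through if you replace any competitor equal to $1_\B$ by an element of $A_+$. Take $p,q\in B_+$ with $1_\B\preccurlyeq_{\B_+}p$, $g\preccurlyeq_{\B_+}q$ and $\lambda_{\B_+}(s,t)\in p\circ_{\B_+}q$.
\begin{itemize}
\item Set $\hat p=p$ if $p\neq 1_\B$, and $\hat p=1_\A$ otherwise.
\item Set $\hat q=q$ if $q\neq 1_\B$. If $q=1_\B$, set $\hat q=k$ when $k\in A_+$ and $\hat q=1_\A$ when $k\notin A_+$.
\end{itemize}
If $p\neq 1_\B$ then $1_\B\prec_{\B_+}p$, and if $q=1_\B$ then $g\prec_{\B_+}1_\B$. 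Hence $f\preccurlyeq\hat p$ and $g\preccurlyeq\hat q$ in $A_+[B_+]$. A clause-by-clause check of $\circ$ gives $k\in\hat p\circ\hat q$. Maximality together with $A\cap B=\varnothing$ then forces $p=1_\B$ and $q=g$.

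The same replacement is needed in your case $f,g\in B_+{\setminus}\{1_\B\}$. Observation (b) as stated is not right for the factor $\B_+$, because the competitors range over $B_+$, which contains $1_\B\notin A_+[B_+]$. The paper's Case 4 passes over the same point.
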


\begin{proof}
Let $\lambda([x^s], [y^t]) \in i \cmax j$. We have to find some $[z^r] \in N$ such that $\lambda([x^s], [z^r]) = i$ and $\lambda([z^r], [y^t]) = j$. It follows from the assumption that $\lambda([x^s], [y^t]) \in i \circ j$, and for all $i', j' \in A_+[B_+]$, if $i \preccurlyeq i'$, $j\preccurlyeq j'$ and $\lambda([x^s], [y^t]) \in i'\circ j'$, then $i = i'$  and $j=j'$. We consider the following cases:

Case 1: $i, j \in A_+$. In this case, $\lambda([x^s], [y^t]) \in (i\circ_{\A_+} j) \cup (I_{\B_+}{\setminus}\{1_\B\})$. Suppose  $\lambda([x^s], [y^t]) \in I_{\B_+}{\setminus}\{1_\B\}$. Then $\lambda([x^s], [y^t]) = \lambda_{\A_+[\B_+]}(x^s,y^t) = \lambda_{\B_+}(s, t) <_\B 1_\B$, and so $1_\B \prec_{\B_+} \lambda_{\B_+}(s, t)$.  This implies $i \preccurlyeq \lambda_{\B_+}(s, t)$. Since $\lambda_{\B_+}(s, t) \leqslant_\B \lambda_{\B_+}(s, t)\cdot_\B 1_\B$, we also have $\lambda([x^s], [y^t]) = \lambda_{\B_+}(s, t) \in \lambda_{\B_+}(s, t)\circ_{\B_+} 1_\B$. Hence, $\lambda([x^s], [y^t]) \in \lambda_{\B_+}(s, t) \circ j$, and so, since $j \preccurlyeq j$, we can conclude that $i =\lambda_{\B_+}(s, t)$, which is a contradiction since $A\cap B = \varnothing$. Consequently, it must be the case that $\lambda([x^s], [y^t]) \in i \circ_{\A_+} j$. Hence, $\lambda_{\B_+}(s, t) = 1_\B$ so that $\lambda([x^s], [y^t]) = \lambda_{\A_+[\B_+]}(x^s, y^t) = \lambda_{\A_+}(x, y)$. 
    
We claim that $\lambda_{\A_+}(x, y) \in i \circ^{\text{max}}_{\A_+} j$. First, note that $\lambda_{\A_+}(x, y) \in i \circ_{\A_+} j$. Now let $k, \ell \in A_+$ such that $i\preccurlyeq_{\A_+} k$, $j \preccurlyeq_{\A_+} \ell$ and $\lambda_{\A_+}(x, y) \in k \circ_{\A_+} \ell$. Then we have $i\preccurlyeq k$, $j \preccurlyeq \ell$ and $\lambda([x^s], [y^t]) \in k \circ \ell$, and therefore $i = k$ and $j = \ell$. This shows $\lambda_{\A_+}(x, y) \in i \circ^{\text{max}}_{\A_+} j$.

Since $\A_+$ is saturated, there is some $z \in N_{\A_+}$ such that $\lambda_{\A_+}(x, z) = i$ and $\lambda_{\A_+}(z, y) = j$. Recall that $\lambda_{\B_+}(s, t) = 1_\B$, so $\lambda([x^s], [z^t]) = \lambda_{\A_+}(x, z) = i$.  

We now show that $\lambda([z^t], [y^t]) = j$. Recall that $\lambda_{\B_+}(t, t) \leqslant_\B 1_\B$. Since $\lambda_{\B_+}(s, t) \in \lambda_{\B_+}(s, t)\circ_{\B_+} \lambda_{\B_+}(t, t)$, if $\lambda_{\B_+}(t, t) <_\B 1_\B$, we have $1_\B = \lambda_{\B_+}(s, t) \leqslant_\B \lambda_{\B_+}(s, t)\cdot_\B \lambda_{\B_+}(t, t) = 1_\B \cdot_\B \lambda_{\B_+}(t, t) =\lambda_{\B_+}(t, t) <_\B 1_\B$, a contradiction. Hence, it must be the case that $\lambda_{\B_+}(t, t) = 1_\B$. This shows that $\lambda([z^t], [y^t]) = \lambda_{\A_+[\B_+]}(z^t, y^t) = \lambda_{\A_+}(z, y) = j$. 

Case 2: $i \in A_+$ and $j \in B_+{\setminus}\{1_\B\}$. Assume first that $\lambda_{\B_+}(s, t) = 1_\B$. Then $\lambda([x^s], [y^t]) = \lambda_{\A_+}(x, y)$. Consequently, it must be the case that $1_\B \in 1_\B \circ_{\B_+} j$ so that $i \circ j = \left.((1_\B \circ_{\B_+} j){\setminus}\{1_\B\}) \cup A_+\right.$. Then $1_\B \in 1_\B \circ_{\B_+} j$, and so $1_\B <_\B j$, which means $j \prec_{\B_+} 1_\B$. Hence, as in Case 1 of the proof of Lemma~\ref{lem:domain_range-deficiency}, we can show that $i = j^{-_{\B_+}}$, which is a contradiction. We can thus conclude that $\lambda_{\B_+}(s, t) \neq 1_\B$. In that case $\lambda([x^s], [y^t]) = \lambda_{\B_+}(s, t) \in 1_\B \circ_{\B_+} j$, and thus $\lambda_{\B_+}(s, t) \leqslant_\B 1_\B \cdot_\B j = j$ or, equivalently, $j \preccurlyeq_{\B_+} \lambda_{\B_+}(s, t)$.

Since $i \cmax j\neq \varnothing$, by Lemma~\ref{lem:domain_range-deficiency}, $j$ is not domain-deficient, and therefore not doubly-deficient. This implies $1_\A$ must be the top element of $(A_+, \preccurlyeq_{\A_+})$, which means $i \preccurlyeq_{\A_+} 1_\A$. 

Now $\lambda_{\B_+}(s, t) \leqslant_\B 1_\B \cdot_\B \lambda_{\B_+}(s, t)$, so we have $\lambda_{\B_+}(s, t) \in 1_\B \circ_{\B_+}\lambda_{\B_+}(s, t)$, and this shows $\lambda([x^s], [y^t]) \in 1_\A \circ \lambda_{\B_+}(s, t)$. Hence, since $i \preccurlyeq 1_\A$ and $j \preccurlyeq \lambda_{\B_+}(s, t)$, it follows that $i = 1_\A$ and $j = \lambda_{\B_+}(s, t)$. The latter implies that $\lambda([x^s], [y^t]) = j$, so if we can show that $\lambda([x^s], [x^s]) = i$, we would be done. To prove this, we have to show that $\lambda_{\B_+}(s, s)=1_\B$ and $\lambda_{\A_+}(x, x) = i$. 

We know that $\lambda_{\B_+}(s, s) \leqslant_{\B} 1_\B$. Now suppose for the sake of a contradiction that $\lambda_{\B_+}(s, s) <_\B 1_\B$. Then $1_\B \prec_{\B_+} \lambda_{\B_+}(s, s)$, and so $i \preccurlyeq \lambda_{\B_+}(s, s)$. Moreover, $\lambda_{\B_+}(s, t) \in \lambda_{\B_+}(s, s) \circ_{\B_+} \lambda_{\B_+}(s, t) = \lambda_{\B_+}(s, s) \circ_{\B_+} j$, which means $\lambda([x^s], [y^t]) \in \lambda_{\B_+}(s, s) \circ j$. Hence, it must be the case that $i = \lambda_{\B_+}(s, s)$, a contradiction since $A \cap B = \varnothing$. 

We also know that $\lambda_{\A_+}(x, x) \leqslant_\A 1_\A$, and thus $1_\A \preccurlyeq_{\A_+} \lambda_{\A_+}(x, x)$. But $\lambda_{\A_+}(x, x) \preccurlyeq_{\A_+} 1_\A$, so $\lambda_{\A_+}(x, x) = 1_\A = i$, as required. 

Case 3: $i \in \B_+{\setminus}\{1_\B\}$ and $j \in A_+$. The proof of this case is similar to that of Case 2. Here we use the fact that $i \cmax j \neq \varnothing$ implies that $i$ is not range-deficient. 

Case 4: $i, j \in B_+{\setminus}\{1_\B\}$. Assume first that $\lambda_{\B_+}(s, t) = 1_\B$. Then $\lambda([x^s], [y^t]) = \lambda_{\A_+}(x, y)$. Hence, it must be the case that $1_\B \in i \circ_{\B_+} j$ so that $i \circ j = ((i \circ_{\B_+} j){\setminus}\{1_\B\}) \cup A_+$. Hence, $\lambda_{\B_+}(s, t) \in i \circ_{\B_+} j$. Now let $k, \ell \in B_+$ such that $i \preccurlyeq_{\B_+} k$, $j \preccurlyeq_{\B_+} \ell$ and $\lambda_{\B_+}(s, t) \in k \circ_{\B_+} \ell$. Thus, $1_\B \in k \circ_{\B_+} \ell$, and so $k \circ \ell = ((k \circ_{\B_+} \ell){\setminus}\{1_\B\}) \cup A_+$, which means $\lambda([x^s], [y^t]) = \lambda_{\A_+}(x, y) \in k \circ \ell$. Moreover, $i \preccurlyeq_{\B_+} k$ and  $j \preccurlyeq_{\B_+} \ell$ imply $i \preccurlyeq k$ and $j \preccurlyeq \ell$, so we can conclude that $i = k$ and $j = \ell$. This shows that $\lambda_{\B_+}(s, t) \in i \circ^{\text{max}}_{\B_+} j$. Using the fact that $\B_+$ is saturated, we can conclude that there exists some $r \in N_{\B_+}$ such that $\lambda_{\B_+}(s, r) = i$ and $\lambda_{\B_+}(r, t) = j$. Since $i\neq 1_\B$ and $j \neq 1_\B$, we have $\lambda([x^s], [x^r]) = \lambda_{\B_+}(s, r) = i$ and $\lambda([x^r], [y^t]) = \lambda_{\B_+}(r, t) = j$.   

Now assume $\lambda_{\B_+}(s, t) \neq 1_\B$. Then $\lambda([x^s], [y^t]) = \lambda_{\B_+}(s, t)$, which means $\lambda_{\B_+}(s, t) \in i \circ_{\B_+} j$. Now let $k, \ell \in B_+$ such that $i \preccurlyeq_{\B_+} k$, $j \preccurlyeq_{\B_+} \ell$ and $\lambda_{\B_+}(s, t) \in k \circ_{\B_+} \ell$. Hence, $i \preccurlyeq k$, $j \preccurlyeq \ell$ and $\lambda([x^s], [y^t]) \in k \circ \ell$, so we can conclude that $i = k$ and $j = \ell$. This shows that $\lambda_{\B_+}(s, t) \in i \circ^{\text{max}}_{\B_+} j$. The proof now proceeds as above. 
\end{proof}

\begin{proposition}\label{prop:existence_network_wreath_product}
Let $\A$ and $\B$ be finite DqRAs such that $A\cap B=\varnothing$, $1_\A\in A_+$, $1_\B\in B_+$ and $1_{\B}^{-_{\B_+}}=1_{\B}^{\sim_{\B_+}}= 1_\B^{\neg_{\B_+}} = 1_{\B}$. Suppose $\A_+$ and $\B_+$ have the network realisation property, and that either $1_\A$ is the top element of $(A_+, \preccurlyeq_{\A_+})$ or every element of $B_+{\setminus}\{1_\B\}$ is doubly deficient. Then $\A_+[\B_+]$ has the network realisation property.
\end{proposition}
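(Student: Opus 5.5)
Fix an arbitrary $f \in A_+[B_+]$. We have to produce a consistent saturated network for $\A_+[\B_+]$ in which $f$ labels some edge. The plan is to reuse the construction preceding Lemma~\ref{Lem:InterimLemmaOneAndTwo}, feeding it networks for $\A_+$ and $\B_+$ chosen according to where $f$ lives.

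\emph{Choosing the input networks.} Split into two cases.
\begin{itemize}
\item If $f \in A_+$, choose by the network realisation property of $\A_+$ a consistent saturated network $\mathcal{N}_{\A_+}$ with $\lambda_{\A_+}(u,v)=f$ for some nodes $u,v$. For $\mathcal{N}_{\B_+}$, take any network realising $1_\B$; it exists because $\B_+$ has the network realisation property.
\item If $f \in B_+{\setminus}\{1_\B\}$, choose a consistent saturated network $\mathcal{N}_{\B_+}$ with $\lambda_{\B_+}(s,t)=f$, and take $\mathcal{N}_{\A_+}$ to be any consistent saturated network for $\A_+$. Again one exists by hypothesis, and $N_{\A_+}\neq\varnothing$.
\end{itemize}

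\emph{Applying the construction.} Form $N=N_{\A_+[\B_+]}/{\equiv}$ with $\lambda$, $\alpha$, $\beta$ as in (\ref{eq:lambda_final}) and (\ref{eq:alpha_beta_final}). By Lemmas~\ref{lem:lambdaWellDefined} and~\ref{lem:alpha_and_beta_well-defined} these are well defined. By Lemma~\ref{lem:WreathProductNetworkConsistent} the resulting $\mathcal{N}$ is consistent. By Lemma~\ref{lem:WreathProductNetworkSaturated}, the remaining part of saturation (the $\cmax$ condition) holds under the hypothesis that $1_\A$ is top in $(A_+,\preccurlyeq_{\A_+})$ or every element of $B_+{\setminus}\{1_\B\}$ is doubly deficient.

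\emph{Surjectivity and totality of $\alpha$ and $\beta$.} Saturation also requires these, and I would check them directly. Both maps are total by definition. For surjectivity, take $[y^t]$. Since $\alpha_{\A_+}$ and $\alpha_{\B_+}$ are surjective, there are $x$ and $s$ with $\alpha_{\A_+}(x)=y$ and $\alpha_{\B_+}(s)=t$. Then $\alpha([x^s])=[y^t]$, and the same argument works for $\beta$.

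\emph{Realising $f$.} In the first case, the node $w\in N_{\B_+}$ realising $1_\B$ satisfies $\lambda_{\B_+}(w,w')=1_\B$ for some $w'$. Then $\lambda([u^{w}],[v^{w'}])=\lambda_{\A_+}(u,v)=f$. In fact it suffices that $\lambda_{\B_+}(w,w)=1_\B$ for some $w$. In the second case, pick any $x\in N_{\A_+}$. Since $\lambda_{\B_+}(s,t)=f\neq 1_\B$, we get $\lambda([x^s],[x^t])=f$.

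\emph{Expected obstacle.} The one delicate point is the first case: guaranteeing some pair of nodes of $\mathcal{N}_{\B_+}$ labelled $1_\B$. This is immediate from realising $1_\B\in B_+$. Alternatively, any node $w$ gives $\lambda_{\B_+}(w,w)\in I_{\B_+}$, but that need not equal $1_\B$, so invoking realisation of $1_\B$ is the clean route. Everything else is assembly of the preceding lemmas.
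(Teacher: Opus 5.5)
Your proposal is correct and follows the paper's proof essentially exactly. You split on whether $f\in A_+$ or $f\in B_+{\setminus}\{1_\B\}$, feed the appropriate realising networks (one realising $f$, together with one realising $1_\B$ or an arbitrary network for $\A_+$) into the quotient construction, and invoke Lemmas~\ref{lem:WreathProductNetworkConsistent} and~\ref{lem:WreathProductNetworkSaturated}. Your explicit check that $\alpha$ and $\beta$ are total and surjective fills in a point the paper leaves implicit, and using an arbitrary node $x\in N_{\A_+}$ in the second case (instead of a pair labelled $1_\A$) is a harmless simplification.
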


\begin{proof}
Assume $\A_+$ and $\B_+$ have the network realisation property. Let $k\in A_+[B_+]$. We consider two cases. 

Case 1: $k \in A_+$. Since $\A_+$ has the network realisation property, there exists a consistent saturated network $\mathcal{N}_{\A_+} = (N_{\A_+}, \lambda_{\A_+}, \alpha_{\A_+}, \beta_{\A_+})$ for $\A_+$ and $w,z \in N_{\A_+}$ such that $\lambda_{\A_+}(w,z) = k$. Similarly, since $\B_+$ has the network realisation property, there exists a consistent saturated network $\mathcal{N}_{\B_+} = (N_{\B_+}, \lambda_{\B_+}, \alpha_{\B_+}, \beta_{\B_+})$ for $\B_+$  and $u, v \in N_{\B_+}$ such that $\lambda_{\B_+}(u, v) = 1_\B$. 

For each $s\in N_{\B_+}$, let $_sN_{\A_+}$ be a copy of $N_{\A_+}$ and
let $x^s$ denote the copy of $x$ in $_sN_{\A_+}$.
Set $$N_{\A_+[\B_+]}=\displaystyle\bigcup_{s\in N_{\B_+} }{}_sN_{\A_+}.$$  Define  $\lambda_{\A_+[\B_+]}:N_{\A_+[\B_+]}\times N_{\A_+[\B_+]}\to A_+[B_+]$ by setting, for  all $x^s,y^t \in N_{\A_+[\B_+]}$,
\[
\lambda_{\A_+[\B_+]}\left(x^s,y^t\right)=
\begin{cases}
   \lambda_{\A_+}(x,y)&\text{ if }\lambda_{\B_+}(s,t)=1_{\B}\\
   \lambda_{\B_+}(s,t)&\text{ if }\lambda_{\B_+}(s,t)\neq 1_{\B}.
\end{cases}
\]

Next define a binary relation ${\equiv} \subseteq \left(\bigcup_{s\in N_{\B_+}} {}_sN_{\A_+}\right)\times \left(\bigcup_{s\in N_{\B_+}} {}_sN_{\A_+}\right)$ by  
\[x^s\equiv y^t\quad \textnormal{iff} \quad 
\lambda_{\A_+[\B_+]}(x^s,y^t)\in I\quad\text{and}\quad
\lambda_{\A_+[\B_+]}(y^t,x^s)\in I.\]
By Lemma~\ref{lem:EquivRel}, $\equiv$ is an equivalence relation on $N_{\A_+[\B_+]}$. 
Now set
$\displaystyle N=N_{\A_+[\B_+]}/{\equiv}$ and
define $\lambda:N\times N\to A_+[B_+]$ by setting, for all $[x^s],[y^t]\in N$, \[\lambda([x^s],[y^t])=\lambda_{\A_+[\B_+]}(x^s,y^t).\]
In Lemma~\ref{lem:lambdaWellDefined} we showed that $\lambda$ is well-defined. 
Finally, define $\alpha:N\to N$ and $\beta:N\to N$ by setting, for all $[x^s]\in N$,
\[\alpha([x^s])=\left[\left(\alpha_{\A_+}(x)\right)^{\alpha_{\B_+}(s)}\right]\qquad\text{ and }\qquad
\beta([x^s])=\left[\left(\beta_{\A_+}(x)\right)^{\beta_{\B_+}(s)}\right].\]
It follows from Lemma~\ref{lem:WreathProductNetworkConsistent} that $\mathcal{N} = (N, \lambda, \alpha, \beta)$ is a consistent network for $\A_+[\B_+]$, and by Lemma~\ref{lem:WreathProductNetworkSaturated}, $\mathcal{N}$ is saturated. Moreover, $\lambda_{\B_+}(u, v) = 1_\B$, so $\lambda([w^u], [z^v]) = \lambda_{\A_+[\B_+]}(w^u, z^v) = \lambda_{\A_+}(w, z) = k$.

Case 2: $k \in B_+{\setminus}\{1_\B\}$. The proof of this case is similar to that of Case 1. Here we use the fact that there exist consistent saturated networks
$\mathcal{N}_{\A_+}=(N_{\A_+},\lambda_{\A_+},\alpha_{\A_+},\beta_{\A_+})$ and $\mathcal{N}_{\B_+}=(N_{\B_+},\lambda_{\B_+},\alpha_{\B_+},\beta_{\B_+})$ for $\A_+$ and $\B_+$, respectively, together with $w,z\in N_{\A_+}$ and $u,v\in N_{\B_+}$ such that
$\lambda_{\A_+}(w,z)=1_{\A}$ and $\lambda_{\B_+}(u,v)=k$.
\end{proof}

Using the construction outlined above, we can construct consistent saturated networks for the wreath products in Examples~\ref{ex:wreath_product_example_1} and~\ref{ex:wreath_product_example_2}.

\begin{example}\label{ex:wreath_products_networks_ex1}
Consider the consistent saturated network $\mathcal{N}_{\A_+}$ in Example~\ref{ex:network_for_(S3)_+} for the dual frame $\A_+$ of $\A$ in Figure~\ref{fig:small_examples_DqRAs}, and  the network $\mathcal{N}_{\B_+}$ in Example~\ref{ex:network_for_diamond_1=0^2} for the dual frame $\B_+$ of $\B$.
Then $N_{\A_+[\B_+]} = \{x^u, y^u, x^v, y^v\}$ and the map $\lambda_{\A_+[\B_+]}: N_{A_+[B_+]} \times N_{A_+[B_+]} \to A_+[B_+]$ is given by $\lambda_{\A_+[\B_+]}(x^u, x^u) = \lambda_{\A_+[\B_+]}(y^u, y^u) = \lambda_{\A_+[\B_+]}(x^v, x^v) = \lambda_{\A_+[\B_+]}(y^v, y^v) = 1_\A$, $\lambda_{\A_+[\B_+]}(x^u, y^u) =  \lambda_{\A_+[\B_+]}(y^u, x^u) = \lambda_{\A_+[\B_+]}(x^v, y^v) = \lambda_{\A_+[\B_+]}(y^v, x^v) =  \top_\A$ and $\lambda_{\A_+[\B_+]}(x^u, x^v) = \lambda_{\A_+[\B_+]}(x^u, y^v) = \lambda_{\A_+[\B_+]}(y^u, y^v) = \lambda_{\A_+[\B_+]}(y^u, x^v) = \lambda_{\A_+[\B_+]}(y^v, y^u) = \lambda_{\A_+[\B_+]}(y^v, x^u) = \lambda_{\A_+[\B_+]}(x^v, x^u) = \lambda_{\A_+[\B_+]}(x^v, y^u) = 0_\B$. 
 In this case $\equiv$ is the identity relation on $N_{\A_+[\B_+]}$, so $N= N_{\A_+[\B_+]}/{\equiv} =\{[x^u], [y^u], [x^v], [y^v]\}$. Now define $\lambda: N \times N \to A_+[B_+]$ as in (\ref{eq:lambda_final}). From  (\ref{eq:alpha_beta_final}), we see that $\alpha: N \to N$ is given by 
 $\alpha([x^u]) = [y^u]$, $\alpha([y^u]) = [x^u]$, $\alpha([x^v]) = [y^v]$ and $\alpha([y^v]) = [x^v]$, while $\beta: N \to N$ is given by $\beta = \textnormal{id}_{N}$. 
By Lemma~\ref{lem:WreathProductNetworkConsistent}, $(N, \lambda, \alpha, \beta)$ is a consistent network for $\A_+[\B_+]$, and since $1_\A$ is the top element of $(A_+, \preccurlyeq_{\A_+})$, it follows from Lemma~\ref{lem:WreathProductNetworkSaturated} that it is saturated. 
Moreover, $\lambda$ is surjective, so the wreath product $\A_+[\B_+]$ has the network realisation property. This network is depicted in Figure~\ref{fig:network_for_wreath_product_ex1}.
\begin{figure}[ht]
\centering
\begin{tikzpicture}[scale=1.5,pics/sample/.style={code={\draw[#1] (0,0) --(0.6,0) ;}},
Dotted/.style={
dash pattern=on 0.1\pgflinewidth off #1\pgflinewidth,line cap=round,
shorten >=#1\pgflinewidth/2,shorten <=#1\pgflinewidth/2},
Dotted/.default=3]
\footnotesize
\node[draw,circle,inner sep=2pt] (1) at (-2.5,0.7) {};
\node[draw,circle,inner sep=2pt] (2) at (-0.5,0.7) {};
\node[draw,circle,inner sep=2pt] (3) at (1.5,0.7) {};
\node[draw,circle,inner sep=2pt] (4) at (3.5,0.7) {};
\path (1) edge [bend left=18, dotted, thick, postaction={decorate},
decoration={
markings,
mark=at position 0.55 with {\arrow{Stealth[scale=0.7]}}
}] node {} (2);
\path (3) edge [bend left=18, dotted, thick, postaction={decorate},
decoration={
markings,
mark=at position 0.55 with {\arrow{Stealth[scale=0.7]}}
}] node {} (4);
\path (2) edge [bend left=18, dotted, thick, postaction={decorate},
decoration={
markings,
mark=at position 0.55 with {\arrow{Stealth[scale=0.7]}}
}] node {} (1);
\path (4) edge [bend left=18, dotted, thick, postaction={decorate},
decoration={
markings,
mark=at position 0.55 with {\arrow{Stealth[scale=0.7]}}
}] node {} (3);
\path (1) edge [bend left=22, dashed, thick, postaction={decorate},
decoration={
markings,
mark=at position 0.55 with {\arrow{Stealth[scale=0.7]}}
}] node {} (3);
\path (3) edge [bend left=22, dashed, thick, postaction={decorate},
decoration={
markings,
mark=at position 0.55 with {\arrow{Stealth[scale=0.7]}}
}] node {} (1);
\path (1) edge [bend left=24, dashed, thick, postaction={decorate},
decoration={
markings,
mark=at position 0.55 with {\arrow{Stealth[scale=0.7]}}
}] node {} (4);
\path (4) edge [bend left=24, dashed, thick, postaction={decorate},
decoration={
markings,
mark=at position 0.55 with {\arrow{Stealth[scale=0.7]}}
}] node {} (1);
\path (2) edge [bend left=22, dashed, thick, postaction={decorate},
decoration={
markings,
mark=at position 0.55 with {\arrow{Stealth[scale=0.7]}}
}] node {} (3);
\path (3) edge [bend left=22, dashed, thick, postaction={decorate},
decoration={
markings,
mark=at position 0.55 with {\arrow{Stealth[scale=0.7]}}
}] node {} (2);
\path (2) edge [bend left=22, dashed, thick, postaction={decorate},
decoration={
markings,
mark=at position 0.55 with {\arrow{Stealth[scale=0.7]}}
}] node {} (4);
\path (4) edge [bend left=22, dashed, thick, postaction={decorate},
decoration={
markings,
mark=at position 0.55 with {\arrow{Stealth[scale=0.7]}}
}] node {} (2);
\draw [thick] (1) edge[loop above,looseness=25,postaction={decorate},
decoration={
markings,
mark=at position 0.75 with {\arrow{Stealth[scale=0.7]}}
}]node{} (1); 
\draw [thick] (2) edge[loop above,looseness=25,postaction={decorate},
decoration={
markings,
mark=at position 0.75 with {\arrow{Stealth[scale=0.7]}}
}]node{} (2); 
\draw [thick] (3) edge[loop above,looseness=25,postaction={decorate},
decoration={
markings,
mark=at position 0.75 with {\arrow{Stealth[scale=0.7]}}
}]node{} (3); 
\draw [thick] (4) edge[loop above,looseness=25,postaction={decorate},
decoration={
markings,
mark=at position 0.75 with {\arrow{Stealth[scale=0.7]}}
}]node{} (4); 
\node[label,anchor=north,xshift=-1pt] at (1) {$[x^u]$};
\node[label,anchor=north,xshift=-1pt] at (2) {$[y^u]$};
\node[label,anchor=north,xshift=-1pt] at (3) {$[x^v]$};
\node[label,anchor=north,xshift=-1pt] at (4) {$[y^v]$};
\path (4.9,1.1) 
node[matrix,anchor=north east,draw,nodes={anchor=center},inner sep=2pt, thick]  {
\pic{sample=solid}; & \node{$1_\A$}; \\
\pic{sample=dashed}; & \node{$0_\B$}; \\
\pic{sample=dotted}; & \node{$\top_\A$}; \\
};
\normalsize
\end{tikzpicture}
\caption{A consistent saturated network for the wreath product $\A_+[\B_+]$ in Figure~\ref{fig:wreath_product_example_1}.}\label{fig:network_for_wreath_product_ex1}
\end{figure}
\end{example}

\begin{example}\label{ex:wreath_products_networks_ex2}
Consider the consistent saturated network $\mathcal{N}_{\C_+}$ in Example~\ref{ex:network_for_diamond_with_a} for the dual frame $\C_+$ of $\C$ in Figure~\ref{fig:small_examples_DqRAs}, and  the network $\mathcal{N}_{\D_+}$ in Example~\ref{ex:network_for_(S4)_+} for the dual frame $\D_+$ of $\D$.
Then $N_{\C_+[\D_+]} = \{x^s, y^s, x^t, y^t, x^u, y^u, x^v, y^v, x^w, y^w\}$ and $\lambda_{\C_+[\D_+]}: N_{C_+[D_+]} \times N_{C_+[D_+]} \to C_+[D_+]$ is given by the table below.

\begin{center}
$\begin{array}{|c|cccccccccc|}\hline
\lambda_{\C_+[\D_+]} & x^s &y^s& x^t & y^t & x^u & y^u & x^v & y^v & x^w & y^w \\\hline
x^s & 0_\D & 0_\D & \top_\D & \top_\D & 0_\D & 0_\D & 0_\D & 0_\D & 0_\D & 0_\D	\\
y^s & 0_\D & 0_\D & \top_\D & \top_\D & 0_\D & 0_\D & 0_\D & 0_\D & 0_\D & 0_\D \\
 x^t & \top_\D & \top_\D & 0_\D & 0_\D & 0_\D & 0_\D & 0_\D & 0_\D & 0_\D & 0_\D \\
 y^t & \top_\D & \top_\D & 0_\D & 0_\D & 0_\D & 0_\D & 0_\D & 0_\D & 0_\D & 0_\D\\
 x^u & \top_\D & \top_\D & \top_\D & \top_\D & 1_\C & a & 0_\D & 0_\D & 0_\D & 0_\D \\
 y^u & \top_\D & \top_\D & \top_\D & \top_\D & a & 1_\C & 0_\D & 0_\D & 0_\D & 0_\D \\
 x^v & \top_\D & \top_\D & \top_\D & \top_\D & \top_\D & \top_\D & 0_\D & 0_\D & \top_\D & \top_\D  \\
 y^v & \top_\D & \top_\D & \top_\D & \top_\D & \top_\D & \top_\D & 0_\D & 0_\D & \top_\D & \top_\D \\
 x^w & \top_\D & \top_\D & \top_\D & \top_\D & \top_\D & \top_\D & \top_\D & \top_\D & 0_\D & 0_\D \\
  y^w & \top_\D & \top_\D & \top_\D & \top_\D & \top_\D & \top_\D & \top_\D & \top_\D & 0_\D & 0_\D \\
 \hline
\end{array}$
\end{center}

\vskip 0.1cm 

\noindent
Here ${\equiv} = \textnormal{id}_{N_{\C_+[\D_+]}} \cup \{(x^s, y^s), (y^s, x^s), (x^t, y^t), (y^t, x^t), (x^v, y^v), (y^v, x^v), (x^w, y^w), (y^w, x^w)\}$, so $N = N_{\C_+[\D_+]}/{\equiv} = \{[x^s], [x^t], [x^u], [y^u], [x^v], [x^w] \}$. Now define $\lambda: N \times  N\to  C_+[D_+]$ as in~(\ref{eq:lambda_final}). From  (\ref{eq:alpha_beta_final}), we see that $\alpha: N \to N$ is given by 
 $\alpha([x^s]) = [x^t]$, $\alpha([x^t]) = [x^s]$, $\alpha([x^u]) = [y^u]$, $\alpha([y^u]) = [x^u]$, $\alpha([x^v]) = [x^w]$ and $\alpha([x^w]) = [x^v]$, while $\beta: N \to N$ is given by $\beta([x^s]) = [x^v]$, $\beta([x^v]) = [x^s]$, $\beta([x^t]) = [x^w]$, $\beta([x^w]) = [x^t]$, $\beta([x^u]) = [x^u]$ and $\beta([y^u]) = [y^u]$. 
By Lemma~\ref{lem:WreathProductNetworkConsistent}, $(N, \lambda, \alpha, \beta)$ is a consistent network for $\C_+[\D_+]$, and since every element of $D_+{\setminus}\{1_\D\}$ is doubly deficient, it follows from Lemma~\ref{lem:WreathProductNetworkSaturated} that it is saturated. 
As in the previous example, $\lambda$ is surjective, so the wreath product $\C_+[\D_+]$ has the network realisation property. This network is shown in Figure~\ref{fig:network_for_wreath_product_ex2}.

\begin{figure}[ht]
\centering
\begin{tikzpicture}[scale=1.5,pics/sample/.style={code={\draw[#1] (0,0) --(0.6,0) ;}},
Dotted/.style={
dash pattern=on 0.1\pgflinewidth off #1\pgflinewidth,line cap=round,
shorten >=#1\pgflinewidth/2,shorten <=#1\pgflinewidth/2},
Dotted/.default=3]
\footnotesize
\node[draw,circle,inner sep=2pt] (1) at (-1.25,2) {};
\node[draw,circle,inner sep=2pt] (2) at (1.25,2) {};
\node[draw,circle,inner sep=2pt] (3) at (-1.25,0) {};
\node[draw,circle,inner sep=2pt] (4) at (1.25,0) {};
\node[draw,circle,inner sep=2pt] (5) at (-1.25,-2) {};
\node[draw,circle,inner sep=2pt] (6) at (1.25,-2) {};
\node[label,anchor=east,xshift=0pt,yshift=0pt] at (1) {$[x^v]=[y^v]$};
\node[label,anchor=west,yshift=0pt, xshift=0pt] at (2) {$[x^w]=[y^w]$};
\node[label,anchor=east,xshift=2pt,yshift=0pt] at (3) {$[x^u]$};
\node[label,anchor=west,xshift=-3pt,yshift=0pt] at (4) {$[y^u]$};
\node[label,anchor=east,xshift=0pt] at (5) {$[x^s]=[y^s]$};
\node[label,anchor=west,xshift=-2pt] at (6) {$[x^t]=[y^t]$};
\draw [thick] (1) edge[loop above,looseness=45,postaction={decorate},
decoration={
markings,
mark=at position 0.75 with {\arrow{Stealth[scale=0.7]}}
}]node{} (1); 
\draw [thick] (2) edge[loop above,looseness=45, postaction={decorate},
decoration={
markings,
mark=at position 0.75 with {\arrow{Stealth[scale=0.7]}}
}]node{} (2); 
\draw [dashed,thick] (3) edge[loop above,looseness=45, postaction={decorate},
decoration={
markings,
mark=at position 0.75 with {\arrow{Stealth[scale=0.7]}}
}]node{} (3); 
\draw [dashed,thick] (4) edge[loop above,looseness=45, postaction={decorate},
decoration={
markings,
mark=at position 0.75 with {\arrow{Stealth[scale=0.7]}}
}]node{} (4);
\draw [thick] (5) edge[loop below,looseness=45, postaction={decorate},
decoration={
markings,
mark=at position 0.75 with {\arrow{Stealth[scale=0.7]}}
}]node{} (5);
\draw [thick] (6) edge[loop below,looseness=45, postaction={decorate},
decoration={
markings,
mark=at position 0.75 with {\arrow{Stealth[scale=0.7]}}
}]node{} (6);
\path (1) edge [bend left=20, dotted, thick, postaction={decorate},
decoration={
markings,
mark=at position 0.6 with {\arrow{Stealth[scale=0.7]}}
}] node {} (3);
\path (1) edge [bend left=21, dotted, thick, postaction={decorate},
decoration={
markings,
mark=at position 0.52 with {\arrow{Stealth[scale=0.7]}}
}] node {} (5);
\path (1) edge [bend left=20, dotted, thick, postaction={decorate},
decoration={
markings,
mark=at position 0.6 with {\arrow{Stealth[scale=0.7]}}
}] node {} (4);
\path (1) edge [bend left=18, dotted, thick, postaction={decorate},
decoration={
markings,
mark=at position 0.55 with {\arrow{Stealth[scale=0.7]}}
}] node {} (6);
\path (1) edge [bend left=20, dotted, thick, postaction={decorate},
decoration={
markings,
mark=at position 0.55 with {\arrow{Stealth[scale=0.7]}}
}] node {} (2);
\path (2) edge [bend left=20, dotted, thick, postaction={decorate},
decoration={
markings,
mark=at position 0.5 with {\arrow{Stealth[scale=0.7]}}
}] node {} (3);
\path (2) edge [bend left=20, dotted, thick, postaction={decorate},
decoration={
markings,
mark=at position 0.55 with {\arrow{Stealth[scale=0.7]}}
}] node {} (4);
\path (2) edge [bend left=18, dotted, thick, postaction={decorate},
decoration={
markings,
mark=at position 0.6 with {\arrow{Stealth[scale=0.7]}}
}] node {} (5);
\path (2) edge [bend left=21, dotted, thick, postaction={decorate},
decoration={
markings,
mark=at position 0.58 with {\arrow{Stealth[scale=0.7]}}
}] node {} (6);
\path (2) edge [bend left=20, dotted, thick, postaction={decorate},
decoration={
markings,
mark=at position 0.55 with {\arrow{Stealth[scale=0.7]}}
}] node {} (1);
\path (3) edge [bend left=20, thick, postaction={decorate},
decoration={
markings,
mark=at position 0.55 with {\arrow{Stealth[scale=0.7]}}
}] node {} (1);
\path (3) edge [bend left=20, thick, postaction={decorate},
decoration={
markings,
mark=at position 0.55 with {\arrow{Stealth[scale=0.7]}}
}] node {} (2);
\path (3) edge [bend left=20, dashdotdotted, thick, postaction={decorate},
decoration={
markings,
mark=at position 0.55 with {\arrow{Stealth[scale=0.7]}}
}] node {} (4);
\path (3) edge [bend left=20, dotted, thick, postaction={decorate},
decoration={
markings,
mark=at position 0.43 with {\arrow{Stealth[scale=0.7]}}
}] node {} (5);
\path (3) edge [bend left=20, dotted, thick, postaction={decorate},
decoration={
markings,
mark=at position 0.55 with {\arrow{Stealth[scale=0.7]}}
}] node {} (6);
\path (4) edge [bend left=20, thick, postaction={decorate},
decoration={
markings,
mark=at position 0.55 with {\arrow{Stealth[scale=0.7]}}
}] node {} (1);
\path (4) edge [bend left=20, thick, postaction={decorate},
decoration={
markings,
mark=at position 0.5 with {\arrow{Stealth[scale=0.7]}}
}] node {} (2);
\path (4) edge [bend left=20, dashdotdotted, thick, postaction={decorate},
decoration={
markings,
mark=at position 0.55 with {\arrow{Stealth[scale=0.7]}}
}] node {} (3);
\path (4) edge [bend left=20, dotted, thick, postaction={decorate},
decoration={
markings,
mark=at position 0.55 with {\arrow{Stealth[scale=0.7]}}
}] node {} (5);
\path (4) edge [bend left=20, dotted, thick, postaction={decorate},
decoration={
markings,
mark=at position 0.55 with {\arrow{Stealth[scale=0.7]}}
}] node {} (6);
\path (5) edge [bend left=23, thick, postaction={decorate},
decoration={
markings,
mark=at position 0.59 with {\arrow{Stealth[scale=0.7]}}
}] node {} (1);
\path (5) edge [bend left=18, thick, postaction={decorate},
decoration={
markings,
mark=at position 0.49 with {\arrow{Stealth[scale=0.7]}}
}] node {} (2);
\path (5) edge [bend left=23, thick, postaction={decorate},
decoration={
markings,
mark=at position 0.525 with {\arrow{Stealth[scale=0.7]}}
}] node {} (3);
\path (5) edge [bend left=23, thick, postaction={decorate},
decoration={
markings,
mark=at position 0.5 with {\arrow{Stealth[scale=0.7]}}
}] node {} (4);
\path (5) edge [bend left=20, dotted, thick, postaction={decorate},
decoration={
markings,
mark=at position 0.55 with {\arrow{Stealth[scale=0.7]}}
}] node {} (6);
\path (6) edge [bend left=18, thick, postaction={decorate},
decoration={
markings,
mark=at position 0.54 with {\arrow{Stealth[scale=0.7]}}
}] node {} (1);
\path (6) edge [bend left=20, thick, postaction={decorate},
decoration={
markings,
mark=at position 0.52 with {\arrow{Stealth[scale=0.7]}}
}] node {} (2);
\path (6) edge [bend left=20, thick, postaction={decorate},
decoration={
markings,
mark=at position 0.5 with {\arrow{Stealth[scale=0.7]}}
}] node {} (3);
\path (6) edge [bend left=20, thick, postaction={decorate},
decoration={
markings,
mark=at position 0.6 with {\arrow{Stealth[scale=0.7]}}
}] node {} (4);
\path (6) edge [bend left=20, dotted, thick, postaction={decorate},
decoration={
markings,
mark=at position 0.55 with {\arrow{Stealth[scale=0.7]}}
}] node {} (5);
\path (3,0.5) 
node[matrix,anchor=north east,draw,nodes={anchor=center},inner sep=2pt, thick]  {
\pic{sample=solid}; & \node{$0_\D$}; \\
\pic{sample=dashed}; & \node{$1_\C$}; \\
\pic{sample=dotted}; & \node{$\top_\D$}; \\
\pic{sample=dashdotdotted}; & \node{$a$}; \\
};
\normalsize
\end{tikzpicture}
\caption{A consistent saturated network for the wreath product $\C_+[\D_+]$ in Figure~\ref{fig:wreath_product_example_2}.}\label{fig:network_for_wreath_product_ex2}
\end{figure}
\end{example}


We now state and prove the main result of this paper.  

\begin{theorem}\label{thm:ComplexOfWreathProductRepresentable}
Let $\A$ and $\B$ be finite normally representable DqRAs such that $A\cap B=\varnothing$, $1_\A\in A_+$, $1_\B\in B_+$ and $1_{\B}^{-_{\B_+}}=1_{\B}^{\sim_{\B_+}}=1_\B^{\neg_{\B_+}} = 1_{\B}$. Suppose that either $1_\A$ is the top element of $(A_+, \preccurlyeq_{\A_+})$ or every element of $B_+{\setminus}\{1_\B\}$ is doubly deficient. Then
the complex algebra $\left(\A_+[\B_+]\right)^+$ of the wreath product of the frames $\A_+$ and $\B_+$ is representable. 
\end{theorem}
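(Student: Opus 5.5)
The proof will simply chain together the three main results of the previous sections. Since $\A$ and $\B$ are finite normally representable DqRAs, Proposition~\ref{prop:SatConNetworkExists} shows that their dual frames $\A_+$ and $\B_+$ both have the network realisation property.

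The hypotheses of the theorem match those of Proposition~\ref{prop:existence_network_wreath_product}. These are the disjointness of $A$ and $B$, $1_\A\in A_+$, $1_\B\in B_+$, the fixed-point condition $1_\B^{-_{\B_+}}=1_\B^{\sim_{\B_+}}=1_\B^{\neg_{\B_+}}=1_\B$, and the disjunction that either $1_\A$ is the top element of $(A_+,\preccurlyeq_{\A_+})$ or every element of $B_+{\setminus}\{1_\B\}$ is doubly deficient. Applying that proposition, I conclude that $\A_+[\B_+]$ has the network realisation property.

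Next I check that $\A_+[\B_+]$ is a finite frame. It is a frame by Theorem~\ref{thm:wreat_product_DqRA-frames}, which applies because finite DqRAs are complete and perfect. It is finite because $A_+[B_+]=A_+\cup(B_+{\setminus}\{1_\B\})$ is contained in $A\cup B$, which is finite. Proposition~\ref{prop:consistent_sat_network_imply_representable} then yields that the complex algebra $(\A_+[\B_+])^+$ is representable.

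I expect no real obstacle, since all the substantive work was done in Lemmas~\ref{lem:WreathProductNetworkConsistent} and~\ref{lem:WreathProductNetworkSaturated}. The one point to double-check is that Proposition~\ref{prop:SatConNetworkExists} genuinely requires normal representability, and this is why that hypothesis appears in the theorem.
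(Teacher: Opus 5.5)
Your proposal is correct and follows the paper's proof: Proposition~\ref{prop:SatConNetworkExists} gives the network realisation property for $\A_+$ and $\B_+$, Proposition~\ref{prop:existence_network_wreath_product} transfers it to $\A_+[\B_+]$, and Proposition~\ref{prop:consistent_sat_network_imply_representable} then gives representability of $(\A_+[\B_+])^+$. You also check explicitly that $\A_+[\B_+]$ is a finite frame, which the last proposition needs; the paper leaves this step implicit.
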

\begin{proof}
Let $\A$ and $\B$ be finite \emph{normally} representable DqRAs such that $A\cap B=\varnothing$, $1_\A\in A_+$, $1_\B\in B_+$ and $1_{\B}^{-_{\B_+}}=1_{\B}^{\sim_{\B_+}}=1^{\neg_{\B_+}}_\B = 1_{\B}$. 
Then, by Proposition~\ref{prop:SatConNetworkExists}, $\A_+$ and $\B_+$ have the network realisation property. Since $1_\A$ is the top element of $(A_+, \preccurlyeq_{\A_+})$ or every element of $B_+{\setminus}\{1_\B\}$ is doubly deficient, it follows from Proposition~\ref{prop:existence_network_wreath_product} that the wreath product $\A_+[\B_+]$ of $\A_+$ and $\B_+$ also has the network realisation property. Consequently, by Proposition~\ref{prop:consistent_sat_network_imply_representable}, $(\A_+[\B_+])^+$ is representable. 
\end{proof}

Finally, we use the above results and examples to show that the DqRAs with identifiers $D^6_{4, 6}$ and $D^6_{3,3}$ in Figure~\ref{fig:small_examples_DqRAs} are (finitely) representable.  

\begin{example}\label{ex:rep_complex_algebra_A+[B+]}
Since the wreath product $\A_+[\B_+]$ of the dual frames of $\A$ and $\B$ in Figure~\ref{fig:small_examples_DqRAs} is a frame (see Example~\ref{ex:wreath_product_example_1}), its complex algebra $$(\A_+[\B_+])^+ = \langle\{\varnothing, \{0_\B\}, \{1_\A\}, \{1_\A, 0_\B\}, \{\top_\A, 1_\A\}, \{\top_\A, 1_\A, 0_\B\}\}, \cap, \cup, \circ, {\sim}, -, \neg, \{1_\A\}\rangle$$ is a DqRA by Proposition~\ref{prop:complex_algebra_frame}. The complex algebra is depicted in Figure~\ref{fig:complex_algebra_and rep_example_1}, and is isomorphic to the DqRA with identifier $D^6_{4,6}$ in  Figure~\ref{fig:small_examples_DqRAs}. Since $\A$ and $\B$ are normally representable (see Examples~\ref{ex:AandCRepresentations} and~\ref{ex:BRepresentation}) and $1_\A$ is the top element of $(A_+, \preccurlyeq_{\A_+})$, the complex algebra of $\A_+[\B_+]$ (and hence $D^6_{4, 6}$) is representable by Theorem~\ref{thm:ComplexOfWreathProductRepresentable}.
Specifically, it is representable over the $4$-element anti-chain $(N, \textnormal{id}_N)$ with $N = \{[x^u], [y^u], [x^v], [y^v]\}$, $E_N = N^2$, $\alpha = \{([x^u], [y^u]), ([y^u], [x^u]), ([x^v], [y^v]), ([y^v], [x^v])\}$ and $\beta = \textnormal{id}_N$. The algebra $(\A_+[\B_+])^+$ is isomorphic to the subalgebra
$$\langle \{\varnothing, \textnormal{id}_{N},N^2{\setminus}(\textnormal{id}_{N} \cup R), N^2{\setminus}R, \textnormal{id}_N\cup R, N^2\} ,\cap, \cup, \mathbin{;},  {\sim},{-}, \neg, \textnormal{id}_N\rangle,$$ where $R = \{([x^u], [y^u]), ([y^u], [x^u]), ([x^v], [y^v]), ([y^v], [x^v])\}$, 
of $\mathbf{D}(N^2, \preccurlyeq_{N^2})$.
See Figure~\ref{fig:complex_algebra_and rep_example_1} for a depiction of this  representation of $(\A_+[\B_+])^+$.
\end{example}

\begin{example}\label{ex:rep_complex_algebra_C+[D+]}  
Recall from Example~\ref{ex:wreath_product_example_2} that the wreath product $\C_+[\D_+]$ of the dual frames of $\C$ and $\D$ in Figure~\ref{fig:small_examples_DqRAs} is a frame, so its complex algebra $(\C_+[\D_+])^+$
is a DqRA by Proposition~\ref{prop:complex_algebra_frame}. This algebra is depicted in Figure~\ref{fig:complex_algebra_and rep_example_2}, and is isomorphic to the DqRA with identifier $D^{6}_{3,3}$ in Figure~\ref{fig:small_examples_DqRAs}. Since $\C$ and $\D$ are both normally representable (see Examples~\ref{ex:AandCRepresentations} and~\ref{ex:DRepresentation}) and every element of $D_+{\setminus}\{1_\D\}$ is doubly deficient, the complex algebra of $\C_+[\D_+]$ (and hence $D^6_{3,3}$) is representable by Theorem~\ref{thm:ComplexOfWreathProductRepresentable}. Specifically, it is representable over the poset $(N, \leqslant_N)$ with $N = \{[x^s], [x^t], [x^u], [y^u], [x^v], [x^w] \}$, ${\leqslant_N} = \textnormal{id}_N\cup (\{[x^s],[x^t]\} \times \{[x^u],[x^v], [y^u],[x^w]\}) \cup (\{[x^u], [y^u]\} \times \{[x^v], [x^w]\})$, $E_N = N^2$, $\alpha= \{([x^s], [x^t]), ([x^t], [x^s]), ([x^u], [y^u]), ([y^u],[x^u]), ([x^v],[x^w]), ([x^w], [x^v])\}$ and $$\beta = \{([x^s], [x^v]), ([x^v],[x^s]), ([x^t], [x^w]), ([x^w], [x^t]), ([x^u], [x^u]), ([y^u], [y^u])\}.$$ The algebra $(\C_+[\D_+])^+$ is isomorphic to the subalgebra $$\langle \{\varnothing, {\leqslant}_{N}, {\leqslant}_{N}{\setminus}S, ({\leqslant}_{N}{\setminus}S)\cup T, {\leqslant_N}\cup T, N^2\} ,\cap, \cup, \mathbin{;}, {\sim},{-}, \neg, {\leqslant_N}\rangle,$$ where $S = \{([x^u], [x^u]), ([y^u], [y^u])\}$ and $T = \{([x^u], [y^u]), ([y^u], [x^u])\}$, 
of $\mathbf{D}(N^2, \preccurlyeq_{N^2})$. The representation is also depicted in Figure~\ref{fig:complex_algebra_and rep_example_2}.
\end{example}

\begin{figure}
\begin{tikzpicture}[scale=0.7]
\footnotesize
\begin{scope}
\node[draw,circle,inner sep=1.75pt,fill] (bot) at (-1,0) {};
\node[draw,circle,inner sep=1.75pt] (a) at (-2,1) {};
\node[draw,circle,inner sep=1.75pt,fill] (1) at (0,1) {};
\node[draw,circle,inner sep=1.75pt] (0) at (-1,2) {};
\node[draw,circle,inner sep=1.75pt,fill] (b) at (1,2) {};
\node[draw,circle,inner sep=1.75pt,fill] (top) at (0,3) {};
\path [-] (bot) edge node {} (a);
\path [-] (bot) edge node {} (1);
\path [-] (a) edge node {} (0);
\path [-] (1) edge node {} (0);
\path [-] (1) edge node {} (b);
\path [-] (0) edge node {} (top);
\path [-] (b) edge node {} (top);   
\node[label,anchor=south,xshift=1pt,yshift=-1pt] at (top) {$\{\top_\A, 1_\A, 0_\B\} = \{0_\B, 1_\A\}\circ \{0_\B\}$};
\node[label,anchor=east,xshift=1pt] at (0) {$\{1_\A, 0_\B\}$};
\node[label,anchor=west,xshift=-1pt] at (b) {$\{\top_\A, 1_\A\} = \{0_\B\}^2$};
\node[label,anchor=east,xshift=1pt] at (a) {$\{0_\B\}= \{\top_\A, 1_\A\}\circ \{0_\B\}$};
\node[label,anchor=west,xshift=1pt] at (1) {$\{1_\A\}$};
\node[label,anchor=east,xshift=1pt] at (bot) {$\varnothing$};
\node[yshift=5pt,xshift=-20pt] at (0,-1) {$(\A_+[\B_+])^+\cong D^6_{4,6}$};
\end{scope} 

\begin{scope}[xshift=10cm]
\node[draw,circle,inner sep=1.75pt,fill] (bot) at (-1,0) {};
\node[draw,circle,inner sep=1.75pt] (a) at (-2,1) {};
\node[draw,circle,inner sep=1.75pt,fill] (1) at (0,1) {};
\node[draw,circle,inner sep=1.75pt] (0) at (-1,2) {};
\node[draw,circle,inner sep=1.75pt,fill] (b) at (1,2) {};
\node[draw,circle,inner sep=1.75pt,fill] (top) at (0,3) {};
\path [-] (bot) edge node {} (a);
\path [-] (bot) edge node {} (1);
\path [-] (a) edge node {} (0);
\path [-] (1) edge node {} (0);
\path [-] (1) edge node {} (b);
\path [-] (0) edge node {} (top);
\path [-] (b) edge node {} (top);   
\node[label,anchor=south,xshift=1pt,yshift=-1pt] at (top) {$N^2$};
\node[label,anchor=east,xshift=1pt] at (0) {$N^2{\setminus} R$};
\node[label,anchor=west,xshift=-1pt] at (b) {$\textnormal{id}_N\cup R$};
\node[label,anchor=east,xshift=1pt] at (a) {$N^2{\setminus}(\textnormal{id}_N\cup R)$};
\node[label,anchor=west,xshift=1pt] at (1) {$\textnormal{id}_N$};
\node[label,anchor=east,xshift=1pt] at (bot) {$\varnothing$};
\node[xshift=-20pt,yshift=5pt] at (0,-1) {$(\A_+[\B_+])^+\hookrightarrow \mathbf{D}(N^2, \preccurlyeq_{N_2})$};
\end{scope} 
\normalsize
\end{tikzpicture}
\caption{The complex algebra of the wreath product $\A_+[\B_+]$ in Figure~\ref{fig:wreath_product_example_1} and its representation. }\label{fig:complex_algebra_and rep_example_1}
\end{figure}

\begin{figure}
\begin{tikzpicture}[scale=0.7]
\footnotesize
\begin{scope}
\node[draw,circle,inner sep=2pt,fill] (a1) at (1,0) {};
\node[draw,circle,inner sep=2pt,fill] (a2) at (1,4) {};
\node[draw,circle,inner sep=2pt,fill] (b0) at (1,1) {};
\node[draw,circle,inner sep=2pt,fill] (b1) at (1,3) {};
\node[draw,circle,inner sep=2pt,fill] (b2) at (0,2) {};
\node[draw,circle,inner sep=2pt] (b3) at (2,2) {};
\draw[order] (b0)--(b2);
\draw[order] (b0)--(b3);
\draw[order] (b2)--(b1);
\draw[order] (b3)--(b1);
\draw[order] (a1)--(b0);
\draw[order] (b1)--(a2);
\node[label,anchor=west,xshift=-1pt,yshift=2pt] at (b1) {$\{0_\D, 1_\C, a\}= \{0_\D, 1_\C, a\}\circ \{0_\D, a\}$};
\node[label,anchor=east,yshift=-1pt] at (b0) {$\{0_\D\}=\{0_\D, a\}\circ\{0_\D\}$};
\node[label,anchor=east,xshift=1pt] at (b2) {$\{0_\D, 1_\C\}=\{0_\D, a\}^2$};
\node[label,anchor=west,xshift=-1pt] at (b3) {$\{0_\D, a\}$};
\node[label,anchor=west,xshift=-1pt] at (a1) {$\varnothing$};
\node[label,anchor=south,xshift=-1pt] at (a2) {$\{0_\D, 1_\C, a, \top_\D\}=\{0_\D, 1_\C, a\}\circ \{0_\D\}$};
\node[label,anchor=north,yshift=-5pt] at (a1) {$(\C_+[\D_+])^+ \cong D^6_{3,3}$};
\end{scope}
\begin{scope}[xshift=10.5cm]
\node[draw,circle,inner sep=2pt,fill] (a1) at (1,0) {};
\node[draw,circle,inner sep=2pt,fill] (a2) at (1,4) {};
\node[draw,circle,inner sep=2pt,fill] (b0) at (1,1) {};
\node[draw,circle,inner sep=2pt,fill] (b1) at (1,3) {};
\node[draw,circle,inner sep=2pt,fill] (b2) at (0,2) {};
\node[draw,circle,inner sep=2pt] (b3) at (2,2) {};
\draw[order] (b0)--(b2);
\draw[order] (b0)--(b3);
\draw[order] (b2)--(b1);
\draw[order] (b3)--(b1);
\draw[order] (a1)--(b0);
\draw[order] (b1)--(a2);
\node[label,anchor=west,xshift=-1pt,yshift=2pt] at (b1) {${\leqslant_N} \cup T$};
\node[label,anchor=east,yshift=-1pt] at (b0) {${\leqslant_N}{\setminus} S$};
\node[label,anchor=east,xshift=1pt] at (b2) {$\leqslant_N$};
\node[label,anchor=west,xshift=-1pt] at (b3) {$({\leqslant_N}{\setminus}S) \cup T$};
\node[label,anchor=west,xshift=-1pt] at (a1) {$\varnothing$};
\node[label,anchor=south,xshift=-1pt] at (a2) {$N^2$};
\node[label,anchor=north,yshift=-5pt] at (a1) {$(\C_+[\D_+])^+ \hookrightarrow \mathbf D(N^2, \preccurlyeq_{N^2})$};
\end{scope}
\normalsize
\end{tikzpicture}
\caption{The complex algebra of the wreath product $\C_+[\D_+]$ in Figure~\ref{fig:wreath_product_example_2} and its representation. }\label{fig:complex_algebra_and rep_example_2}
\end{figure}


    



    

{\bf Acknowledgements:} The authors are grateful to Peter Jipsen for drawing our attention to Comer's construction, to Andrew Craig for his contribution to the framing of the research question and initial collaboration on this project, and to Ja\v{s} \v{S}emrl for his input in the definition of the network of a DqRA frame.
  

\end{document}